\numberwithin{equation}{section}
\renewcommand{\xhdr}[1]{\vspace{1.5mm} \noindent{\bf #1}}
\newenvironment{proofof}[1][]
       {\begin{proof}[Proof of #1]}
       {\end{proof}}
\newcommand{\loS}{S_{\mathtt{L}}} 
\newcommand{\hiS}{S_{\mathtt{H}}} 
\newcommand{\hiN}{N_{\mathtt{H}}} 
\newcommand{\term}[1]{\ensuremath{\mathtt{#1}}\xspace}
\newcommand{\support}{\term{support}}
\newcommand{\ips}{\term{IPS}}
\newcommand{\var}{\term{Var}}
\newcommand{\mech}{\term{mech}} 
\newcommand{\adv}{\term{adv}}   
\newcommand{\mechname}{\term{TwoStageMech}} 
\newcommand{\val}{u}
\newcommand{\Val}{U}
\newcommand{\ValOut}{U_{\term{out}}}
\newcommand{\bestVal}[1][\prior]{\Val^*_{#1}}
\newcommand{\history}{\term{hist}}
\newcommand{\err}[1][]{\term{ERR}^{#1}}
\newcommand{\benchmark}{\term{bench}}
\newcommand{\bench}{\benchmark}
\newcommand{\worstbench}[1][\prior]{\bench(#1,\types)}
\newcommand{\greedy}[2]{\ensuremath{(#1,#2)\text{-\term{greedy}}}} 
\newcommand{\type}{\theta}
\newcommand{\types}{\Theta}
\newcommand{\subtype}{\type_{\term{pri}}}   
\newcommand{\subtypes}{\types_{\term{pri}}} 
\newcommand{\pubtype}{\type_{\term{pub}}}   
\newcommand{\pubtypes}{\types_{\term{pub}}} 
\newcommand{\reptype}{\type_{\term{rep}}}  
\newcommand{\obstype}{\type_{\term{obs}}}  
\newcommand{\policy}{\sigma}
\newcommand{\hatpolicy}{\widehat{\policy}}
\newcommand{\benchpolicy}{\policy^{\term{opt}}}
\newcommand{\hatbenchpolicy}{\widehat{\policy}^{\term{opt}}}
\newcommand{\signal}{\term{sig}}
\newcommand{\outside}{\mA_{\term{out}}}
\newcommand{\outcome}{\omega}
\newcommand{\outcomes}{\Omega}
\newcommand{\mystate}{\psi}
\newcommand{\mystates}{\Psi}
\newcommand{\truestate}{\mystate^*}
\newcommand{\approxstate}{\overline{\mystate}}
\newcommand{\estimatestate}{\mystate_\term{MLE}}
\newcommand{\prior}{\mP}
\newcommand{\jointP}{\mQ} 
\newcommand{\initProb}{\delta_{\prior}}
\newcommand{\sampleD}{p}  
\newcommand{\bestA}[1][\mystate]{%
\ifthenelse{\isempty{#1}}{a^*}{a^*(#1)}%
}
\newcommand{\agentE}{\mathbb{E}_{\term{Bayes}}} 
\newcommand{\probE}{\mathbb{P}_{\term{Bayes}}} 
\newcommand{\truE}[1]{\mE^{\term{tru}}_{#1}} 
\newcommand{\advF}[1][]{f_{\term{adv}}^{#1}}            
\newcommand{\estF}[1][]{\widehat{f}_{\term{mech}}^{#1}} 
\newcommand{\advFreq}[1][]{\advF[#1]} 
\newcommand{\estFreq}[1][]{\estF[#1]} 
\newcommand{\typeDist}{D_{\term{type}}}
\newcommand{\typeFreq}[1][]{F_{\term{type}}^{#1}}         
\newcommand{\typeFreqEst}[1][]{\widetilde{F}_{\term{type}}^{#1}} 
\newcommand{\gap}{\term{gap}}
\newcommand{\proberr}{p_{\term{err}}}
\newcommand{\llr}{\term{LLR}}
\newcommand{\pmin}{p_{\term{min}}}
\newcommand{\lr}{\mathcal{R}}
\newcommand{\worstbenchhat}[1][\prior]{\widehat{\bench}(#1,\types)}
\newcommand{\KL}{{\rm KL}}
\newcommand{\norm}[2]{\lVert\,#1\,\rVert_{#2}} 
\renewcommand{\ll}{\ell} 
\newcommand{\supp}{{\rm supp}}
\newcommand{\given}{|}
\newcommand{\prob}[2][]{\text{\bf Pr}\ifthenelse{\not\equal{}{#1}}{_{#1}}{}\!\left[{\def\givenn{\middle|}#2}\right]}
\newcommand{\expect}[2][]{\text{\bf E}\ifthenelse{\not\equal{}{#1}}{_{#1}}{}\!\left[{\def\givenn{\middle|}#2}\right]}
\newcommand{\tparen}{\big}
\newcommand{\tprob}[2][]{\text{\bf Pr}\ifthenelse{\not\equal{}{#1}}{_{#1}}{}\tparen[{\def\given{\tparen|}#2}\tparen]}
\newcommand{\texpect}[2][]{\text{\bf E}\ifthenelse{\not\equal{}{#1}}{_{#1}}{}\tparen[{\def\given{\tparen|}#2}\tparen]}
\newcommand{\sprob}[2][]{\text{\bf Pr}\ifthenelse{\not\equal{}{#1}}{_{#1}}{}[#2]}
\newcommand{\sexpect}[2][]{\text{\bf E}\ifthenelse{\not\equal{}{#1}}{_{#1}}{}[#2]}
\newcommand{\abs}[1]{\left|#1\right|}
\crefname{myAppendix}{Appendix}{Appendices}
\Crefname{myAppendix}{Appendix}{Appendices}
\crefname{claim}{Claim}{Claims}
\Crefname{claim}{Claim}{Claims}
\let\oldparagraph\paragraph
\renewcommand{\paragraph}[1]{\oldparagraph{#1.}}
\begin{document}
\title{Exploration and Incentivized Participation in Randomized Trials}


\author{Yingkai Li\thanks{
National University of Singapore, Singapore.
Email: \texttt{yk.li@nus.edu.sg}. Research carried out while this author was an intern in Microsoft Research NYC and a graduate student at Northwestern University.}
\and
Aleksandrs Slivkins\thanks{Microsoft Research, New York, NY. Email: \texttt{slivkins@microsoft.com }}}

\date{First version: February 2022\\This version: January 2026%
\footnote{\emph{Version history.} The revisions mainly concerned  presentation, discussions, and related work. Previous versions (before Jan'25) focused on clinical RCTs as an application domain, and were titled ``Exploration and Incentivizing Participation in Clinical Trials" and ``Incentivizing Participation in Clinical Trials" (pre-2024).
The Mar'24 revision allowed an arbitrary outside option (as opposed to \refeq{eq:model-outside}).}}


\maketitle

\begin{abstract}
Participation incentives is a well-known issue inhibiting randomized controlled trials (RCTs) in medicine, as well as a potential cause of user dissatisfaction for RCTs in online platforms. We frame this issue as a non-standard exploration-exploitation tradeoff: an RCT would like to explore as uniformly as possible, whereas each ``agent" (a patient or a user) prefers ``exploitation'', \ie treatments that seem best. We incentivize participation by leveraging information asymmetry between the trial and the agents. We measure statistical performance via worst-case estimation error under adversarially generated outcomes, a standard objective for RCTs. We obtain a near-optimal solution in terms of this objective: an incentive-compatible mechanism with a particular guarantee, and a nearly matching impossibility result for any incentive-compatible mechanism. We consider three model variants: homogeneous agents (of the same ``type" comprising beliefs and preferences), heterogeneous agents, and an extension that leverages estimated type frequencies to mitigate the influence of rare-but-difficult agent types.



\xhdr{Keywords:}
exploration-exploitation tradeoff, incentive-compatibility, economic design, randomized controlled trials, multi-armed bandits
\end{abstract}

\addtocontents{toc}{\protect\setcounter{tocdepth}{0}}

\section{Introduction}
\label{sec:intro}
Randomized Controlled Trials (RCTs) are gold-standard ways to evaluate the safety and efficacy of medical treatments.
A paradigmatic design considers $n=2$ alternatives, a.k.a. \emph{arms}.%
\footnote{While one arm is typically a ``control group'' (such as an old treatment or a placebo), this distinction is not essential throughout this paper.}
Each patient~$t$ is assigned an arm $a_t$ independently and uniformly at random. An outcome $\outcome_t$ is observed and evaluated according to some predefined metric $f(\cdot)$. The goal is to use the observations to estimate the counterfactual averages
    $\tfrac{1}{T}\,\sum_{t\in [T]} f(\outcome_{a,t})$, $a\in[n]$,
where $T$ is the number of patients. Independent per-patient randomization over arms is essential to remove selection bias and ensure statistical validity. Uniform randomization is usually preferred, as it minimizes the estimation error. In particular, one can obtain strong provable guarantees even if each arm's outcomes are generated by an adversary, rather than sampled from some fixed (arm-dependent) distribution.%
\footnote{In the same sense as ``oblivious adversary" in adversarial bandits \citep{bandits-exp3}.}

A well-known issue that inhibits RCTs is the difficulty of recruiting patients \citep[\eg see surveys][]{ross1999barriers,mills2006barriers,rodriguez2021barriers}. This issue may be especially damaging for large-scale RCTs for widespread medical conditions with relatively inexpensive treatments. Then finding suitable patients and providing them with appropriate treatments would be fairly realistic, but incentivizing patients to participate in sufficient numbers may be challenging.

Similar issues arise in user-facing online platforms,
where experimentation is pervasive and RCTs, a.k.a.~A/B testing, are a standard approach.%
\footnote{For background on A/B testing, see \eg \citet{KohaviLSH09,KohaviAB-2015}.} While platform users are typically not informed of the particular experiments they are assigned to, the loss in performance (whether actual or perceived) may lead to user dissatisfaction. Having robust participation incentives would therefore be reassuring to the users, and may allow for more ambitious and risky experiments.

The standard approach of subsidizing the agents may be too expensive at the scale required. Indeed, clinical trials may require large subsidies when health outcomes are at stake, and online platforms may need a very large number of participants. The subsidies may also lead to selection bias towards poorer or more adventurous participants.

\xhdr{Our scope.}
We initiate the study of incentivized participation in RCTs via information asymmetry rather than subsidies. Our objective is to incentivize participation while optimizing statistical performance. We frame it as a non-standard variant of \emph{exploration-exploitation tradeoff}:
the RCT always prefers to \emph{explore} as uniformly as possible, whereas each patient/user prefers to \emph{exploit} (choose an arm that seems best), and might not participate if the RCT is inconsistent with this preference.%
\footnote{The standard tradeoff has an algorithm balance exploration vs exploitation to converge on the best arm.}
We henceforth refer to the potential RCT participants as \emph{agents}.

We leverage \emph{information asymmetry} between the RCT and the agents. We use preliminary data from the ongoing trial to skew the random choice of arms in favor of a better treatment (\ie towards exploitation), while retaining a sufficient amount of near-uniform exploration. This causes the agents to prefer participation in the RCT to the alternatives available externally.

We make two standard assumptions from economic theory: the RCT has the power to (i) prevent undesirable information disclosure to the agents, and (ii) commit to following a particular policy. These assumptions are realistic in clinical RCTs due to regulation and standard practice. Indeed, the history of an ongoing RCT and the assigned treatment should not be revealed to the patients (or their doctors) to the extent possible, to avoid biasing them in favor of  better-performing treatments \citep[pp. 26-30]{PCORI-adaptive-2012}.%
\footnote{In particular, different treatments are typically presented in the same way, \eg as externally-similar pills.}
 Moreover, the design of an RCT must be fixed in advance, spelling out all applicable decision points, and revealed to the patients \citep{CITI-GCP-2012}.
For online platforms, the data from an ongoing RCT is typically not revealed to the users, and the commitment power is a standard assumption in the literature (see Related Work).


While RCTs with $n=2$ arms are the paradigmatic case, our results seamlessly extend to $n>2$. Note that clinical RCTs with $\geq 3$ arms are studied in biostatistics, \citep[\eg][]{Hellmich-multiarm-2001,Freidlin-multiarm-2008}, and are important in practice,   \citep[\eg][]{Lancet-multiarm-2014,Redig-BasketTrials-2015}.

\xhdr{Adversarial outcomes and time-invariant beliefs.}
We aim for statistical guarantees that are as strong as those achievable by RCTs with per-agent randomization. Therefore, we allow the outcomes to be generated by an adversary (which motivates the need for per-agent randomization). In particular, our statistical guarantees are model-free once the data are collected.

For participation incentives, we aim for guarantees that are standard in the relevant literature on economic design. While we allow for adversarial outcomes,
we posit that the agents believe the outcome-generating process is much more well-behaved. Specifically, the agents believe that one's outcome distribution is determined by the assigned arm (and one's public type, if applicable), independent of the round. We call such beliefs \emph{time-invariant}. We posit that the beliefs are shared by all agents and are known to the mechanism, as an appropriately defined Bayesian prior. Time-invariant beliefs are standard in modeling sequential decision-makers (\eg in incentivized exploration, see Related Work).

This dichotomy between adversarial outcomes and simpler, time-invariant beliefs of the agents is essential to our model. While the time-invariant beliefs need not be correct, they are reasonable as a model of reality and quite realistic as beliefs that patients or platform users might have, particularly so for clinical RCTs where they represent pre-existing medical knowledge. Beliefs whose support does not include the actual problem instance (a.k.a. misspecified beliefs) are not uncommon in economic theory, see Related Work.

\xhdr{Our model.}
We model ``incentivized participation" as a mechanism design problem under a statistical objective and incentive constraints. An RCT (\emph{mechanism}) sequentially interacts with self-interested agents like a multi-armed bandit algorithm: in each round, an agent arrives, the mechanism chooses an arm for this agent, and observes an outcome. The mechanism's statistical objective is standard for RCTs: counterfactually estimate each arm's performance on the whole population, even if the outcomes are generated by an adversary.%
\footnote{In contrast, the standard objective for multi-armed bandits is to optimize cumulative payoffs.}
While there are several reasonable ways to formalize this objective, we choose one that is most convenient for our purposes: the worst-case mean-squared estimation error, denoted $\err$ (and formally defined in \Cref{sec:model-stats}).
If not for the incentive constraints for participation, the mechanism would simply choose arms independently and uniformly at random for each agent.

The role of $f$, the RCT's metric, is as follows. Typically, $f(\outcome)$ corresponds to the ``societal value" of the outcome $\outcome$. The RCT may also be interested in the empirical frequency of $\omega$, which is captured by setting $f = \ind{\outcome}$, the indicator function that chooses this outcome. Our approach does not hinge upon a particular choice of $f$. In fact, our definition of $\err$ handles all metrics $f$ at once, and therefore applies to estimating societal value as well as outcome frequencies. In contrast, agents have subjective utilities over the outcomes that they wish to optimize, possibly different from the ``societal value" or any other metric used by the RCT.

We address participation incentives by imposing Bayesian Individual Rationality (BIR) constraints, a standard notion in economic theory.
Specifically, the mechanism commits to some randomized policy that assigns an arm to each agent, depending on the history of interactions with the previous agents. The agent does not observe the history or the chosen arm (as discussed above). The BIR constraint states that each agent must prefer participation to her ``outside option,'' and it (only) needs to hold with respect to agents' time-invariant beliefs. In particular, it suffices to provide participation incentives for the hypothetical world in which the agents' beliefs are correct. We note that a stronger version of participation incentives --- robust against an adversarially chosen outcome sequence --- is impossible to achieve (cf. \Cref{fn:beliefs}).


We consider three model variants. To isolate the core issues of incentivized participation, we first consider the fundamental case of \emph{homogeneous agents} with the same beliefs and preferences.
Then, the general case of \emph{heterogeneous  agents} endows each agent with a \emph{type}. Then observable attributes such as medical history or user profile comprise one's \emph{public type} and determine one's beliefs over outcomes.
Moreover, one has unobservable preferences over the outcomes (\emph{private type}) which need to be elicited in an incentive-compatible way.%
\footnote{In practice, preference elicitation would happen during registration/opt-in process. For online platforms, private preferences and their elicitation make sense only if the RCT has explicit opt-in. The latter is uncommon, but may be needed if the RCT involves high-stakes interactions (such as buying a house) or is potentially controversial.}
The third model variant provides the mechanism with estimated type frequencies over the agent population.
We address situations in which some types are inherently (much) more ``difficult" to handle than others. When such ``difficult" types are rare, we leverage the type frequencies to mitigate their harmful influence.

\xhdr{Our results.}
We focus on mechanisms with simple two-stage designs: a relatively short \emph{warm-up stage} of predetermined duration, followed by the \emph{main stage} that needs to be counterfactually estimated. The data from the warm-up stage (\emph{warm-up data}) creates information asymmetry in the main stage  which creates participation incentives, whereas the main stage determines the counterfactual estimates.
\footnote{The main stage may, in principle, also contribute to participation incentives. We found this capability unnecessary for our optimal mechanisms, but our lower bounds allow it.}
 RCTs with a warm-up stage
are very common in practice: \eg phase-I clinical trials to calibrate the dosage and ``pilot experiments" in online platforms.

Our guarantees focus on the main stage, given enough warm-up data.
A relatively small amount of warm-up data, determined by the prior and the utility structure, suffices to bootstrap the main stage for an arbitrarily large number of agents.
The warm-up data can be collected exogenously \eg using paid volunteers, or endogenously by a mechanism in our model, \eg by leveraging
prior work on incentivized exploration (see ``Related work'' and \cref{lm:ec15}).

We obtain a near-optimal solution for each model variant described above, from homogeneous agents to heterogeneous agents to estimated type frequencies. First, we put forward an idealized benchmark which focuses on a single round of our problem. Then, we design a mechanism whose estimation error $\err$
(in our actual, multi-round model)
is upper-bounded in terms of this single-round benchmark. Finally, we prove that no mechanism achieves better estimation error, up to constant factors.%
\footnote{The upper and lower bounds show that $\err$ is on the order of $B/T$, where $B$ is the benchmark. This is a different notion of optimality compared to bandits (where one usually optimizes regret, \ie the $o(T)$ additive term).}
 While the benchmark, the mechanism, and the lower bound are different for each model variant, many ideas carry over from one model variant to the next.

The single-round benchmark operates a hypothetical world in which the true ``state of the world'' is drawn according to the agents' beliefs. The benchmark is a hypothetical policy
that observes the true state (and the agent's type, if applicable) and chooses a distribution over actions so as to optimize a certain objective: essentially, an upper bound on $\err$. Thus, the benchmark encapsulates the optimal dependence on the prior in this hypothetical scenario.

Our mechanisms have a simple structure. First, their type-dependent ``sampling distributions" (from which the arms are sampled) do not adapt to the data collected in the main stage and do not change throughout the main stage. Second, counterfactual estimates are formed using the standard machinery of Inverse Propensity Scores (IPS). In contrast, our negative results apply to a wide class of mechanisms, allowing arbitrary dependence on the past data (in most rounds), arbitrary changes throughout the main stage, and arbitrary counterfactual estimates. Thus, our ``simple" mechanisms are proved near-optimal among arbitrary ``complex" mechanisms.


\xhdr{Significance.}
Our contribution spans the whole ``journey": from the new model to the benchmark to the mechanisms to the matching upper/lower bounds, and from homogeneous agents to heterogeneous agents to estimated type frequencies. In particular, the inherent complexity of each problem variant is expressed in terms of the respective single-round benchmark, a much simpler, ``static" object amenable to further analysis; modeling/definitions are absolutely crucial for that. The substance of our analysis is connecting the dynamic process and statistical estimators in the mechanism with a static information-design problem in the benchmark. We emphasize homogeneous agents to introduce the core model and ideas in a more accessible way. The general case requires a heavier setup, non-trivial mechanisms (within the simple structure outlined above), and a much more technical analysis.

The non-adaptivity of our mechanisms is desirable in applications. Indeed, clinical RCTs in practice typically do not adapt to data, except possibly at a few pre-determined ``review points",%
\footnote{\label{fn:stages}This is because treating the patients and/or observing their outcomes may take a long time, and the doctors may find it difficult to coordinate the treatments across the patients. }
and neither do most RCTs in online platforms due to engineering considerations. In fact, many/most patients (resp., platform users) may need to be treated in parallel, and this is feasible for the main stage in our mechanisms. On the other hand, data-adaptive data collection allowed by our lower bounds is not out of scope, either, both for online platforms \citep[\eg][]{MWT-WhitePaper-2016,DS-arxiv} and for clinical RCTs (see Related Work). We conjecture it may help in addressing some of the open questions outlined in Conclusions.

We focus on the asymptotically-optimal analysis of the main stage, illuminating the inherent power and limitations of information-driven participation incentives. Our guarantees for the warm-up stage make a point (that a constant-sized warm-up suffices for any $T$) and emphasize interpretability, but we suspect they can be improved.

\xhdr{Additional discussion on clinical RCTs.}
While our results are asymptotic, and hence most meaningful for RCTs with a large number of patients, participation incentives would make such trials more common and realistic. For concreteness, we provide
numerical examples (\cref{apx:examples}).

Theoretical participation incentives come at a price: the warm-up samples are discarded for counterfactual estimates, and the main-stage sampling distributions are skewed. This ``price of incentives" should be included in the overall  ``price of experimentation" incurred to obtain reliable medical knowledge. Whether it is acceptable in practice depends on how many patients these incentives actually help to recruit.


We focus on patients' reluctance to participate that stems from their preferences among the available treatments (as opposed to other factors, \eg logistical difficulties). Further, this paper is only concerned with patients' initial decision to participate in an RCT. Their incentives to \emph{remain} in the trial, \eg if the treatment appears ineffective, are outside of our scope.%
\footnote{This is indeed an issue in practice, as per \citep[Section 2.1.7.2]{FDA-guidance-2001}. However, it is unclear if any approach based on information asymmetry could possibly help here: once the patient learns about the actual outcome, the “principal” does not have much/any informational advantage left.}

\xhdr{Map of the paper.}
We first present and analyze our model for the case of homogeneous agents (\cref{sec:model,sec:single}).
The main case of heterogeneous agents is defined and treated in \cref{app:worst}. The extension with estimated type frequencies is in \Cref{app:multi}. The technical machinery in \cref{sec:single,app:worst,app:multi} is introduced gradually and heavily reused from one section to the next.

\section{Related work}
\label{sec:related}

\noindent \textbf{Incentivized exploration (IE)} is a closely related model concerning experimentation on online platforms. Here the ``platform" is a recommendation system which
strives to incentivize users to \emph{explore} when they would prefer to \emph{exploit}. The platform issues recommendations (which the users can override), and makes them incentive-compatible by leveraging information asymmetry and the power to commit to a particular policy. Introduced in \citep{Kremer-JPE14,Che-13}, IE is well-studied, see \cite[Chapter 11]{slivkins-MABbook} for a survey. Most technically related are \citet{ICexploration-ec15,Jieming-multitypes18}.

The main difference concerns performance objectives. IE strives to maximize the total reward under a time-invariant outcome distribution, or just explore each arm once. Neither serves our statistical objective, which requires independent near-uniform randomization in each round: reward-maximization pulls a mechanism in a very different direction, and merely exploring each arm $N$ times does not ensure statistical validity under adversarially chosen outcomes.
The economic constraints are different, too, because agents in our model can only choose whether to participate.%
\footnote{This distinction is consequential only for $n>2$ arms. Then the economic constraint in IE is stronger than ours.}
%
Hence, our mechanisms, analyses and technical guarantees are very different from those in IE. However, our mechanisms reuse IE for the ``warm-up stage'' (see \cref{lm:ec15}).

A variant of IE creates incentives via payments
\citep[\eg][]{Frazier-ec14,Kempe-wine15}. However, this work also assumes time-invariant reward distributions and optimizes total reward. Moreover, it lets the agents observe full history, which is not the case for RCTs.

\xhdr{Economic design.}
\citet{narita2021incorporating} incentivizes participation in clinical RCTs when the patients are certain about the arms' payoffs and have a fixed outside option. Accordingly, information asymmetry (and learning over time to create it) plays no role in participation incentives. Certain beliefs are a degenerate case from our perspective (which allows and leverages distributional beliefs).

Using information asymmetry and commitment power to incentivize agents to take a particular action is a common goal in \emph{information design} \citep{BergemannMorris-survey19,Kamenica-survey19}. However, this literature tends to have a very different technical flavor.
The key distinction is that the designer's utility is directly derived from each agent's actions. In contrast, our statistical objective depends on the sequence of sampling distributions and does not add up across agents.

Misspecified beliefs (\ie beliefs whose support does not include the actual problem instance) are studied in a considerable recent literature in economic theory \citep[starting from][]{berk1966limiting,esponda2016berk}. However, this literature tends to focus on convergence and equilibria
\citep[e.g.,][]{fudenberg2021limit,esponda2021equilibrium,esponda2021asymptotic},
whereas our paper takes the mechanism design perspective and is concerned with statistical estimation.

\citet{chassang2012selective} apply economic design to RCTs for a different purpose, focusing on agents' ability to strategically choose their effort level. They use principal-agent framework to disentangle the effects of treatment and effort without compromising external validity. Within this framework, incentives are created via monetary contracts.



\xhdr{Clinical RCTs.}
We only point out the key connections to the (huge) medical literature on RCTs, as a more detailed review is beyond our scope.

Difficulties in patient recruitment are widely acknowledged as a significant barrier
\citep[\eg][]{adams1997recruiting,grill2010addressing}. The reasons are well-studied, \eg see surveys \citep{ross1999barriers,mills2006barriers,rodriguez2021barriers}. Patients' preferences among treatments are listed among the major factors, along with other factors such as insufficient trust, logistical difficulties, aversion to randomization, and loss of control. Prior work suggested increasing the probability of the ``new" treatment as a possible remedy
\citep{vozdolska2009net,jenkins2000reasons,pocock1979allocation,karlawish2008redesigning},
albeit without a formal model of participation incentives and without theoretical guarantees. These papers focus on the scenario when the new treatment appears ex-ante preferable compared to non-participation, and do not revise the treatment probabilities as new data arrive. Other suggested remedies focusing on eligibility criteria, patient outreach, trial logistics, etc., are not as relevant to this paper.

Data-adaptive designs for RCTs are not uncommon and have a long history in biostatistics, \eg see \citet{Chow-adaptive-2008} for a research survey, and \citet{PCORI-adaptive-2012,pallmann2018adaptive} for practitioners' guidance. While prior work used data-adaptivity to improve efficiency (the tradeoff between the trial's size and statistical properties), we use it to create participation incentives. Accordingly, the novelty in our RCT design is in connecting it to the analysis of incentives (and the corresponding lower bounds on the statistical objective), not in the RCT design per se. That said, our RCT for the homogeneous-agent case is based on a standard ``epsilon-greedy" approach from multi-armed bandits, whereas our design for heterogeneous agents appears new.

Our mechanisms are superficially similar to \emph{patient preference trials} (PPTs) \citep{Torgerson1998-PPT}, in that both allow patient preferences to impact treatment allocation. The key difference is that PPTs allow the patients to directly choose the treatment (potentially at the expense of internal validity), whereas in our setting a patient’s report only affects the sampling distribution (hence internal validity is not compromised). Another distinction is that we elicit patient preferences over the treatment outcomes, not over the treatments themselves.

\xhdr{Statistics and machine learning.}
Our mechanisms use a standard counterfactual estimator called \emph{inverse propensity score} (IPS). While more advanced estimators are known \citep[\eg][]{DR-StatScience14,Adith-nips15,AlekhMiro-icml17}, IPS is optimal in some particular worst-case sense (see \cref{sec:prelims} and \cref{app:stats-LB}), which suffices for our purposes.

While our mechanism follows the protocol of multi-armed bandits \citep{slivkins-MABbook,LS19bandit-book},
it strives to choose arms uniformly at random, whereas a bandit algorithm strives to converge on the best arm.%
\footnote{Under the standard objective of cumulative reward/loss optimization, as expressed by regret.}
More related is ``pure exploration"
\citep[starting from ][]{Tsitsiklis-bandits-04,EvenDar-icml06},
where a bandit algorithm predicts the best arm under a stationary outcome distribution, after exploring ``for free" (\eg uniformly) for $T$ rounds. In contrast, our model allows adversarially chosen outcomes and requires estimates for all arms and all outcomes. 
Note that bandit algorithms, whether regret-minimizing or ``pure exploration",
are generally not geared to incentivize participation.
%
Interestingly, both ``stationary" and adversarially chosen outcome distributions -- like, resp., agent beliefs and statistical objective in our model -- are well-studied in bandits, as, resp., ``stochastic" and ``adversarial" bandits.




\section{Our model: incentivized RCT (homogeneous agents)}
\label{sec:model}
%

We focus on homogeneous agents, \ie agents with the same ``type". We consider a randomized controlled trial (RCT) with some patients/users, respectively  interpreted as a \emph{mechanism} and \emph{agents}. The mechanism interacts with the agents as a multi-armed bandit algorithm, but optimizes a very different statistical objective, and operates under economic constraints to incentivize participation. Accordingly, we split the model description into three subsections:
the \emph{bandit model} for the interaction protocol
the \emph{statistical model} for its performance objective,  
and the \emph{economic model} for the agents' incentives. 


\subsection{Bandit model: the interaction protocol}
\label{sec:model-bandits}

We have $T$ agents and $n$ treatments, also called \emph{arms}. The set of agents is denoted by $[T] := \cbr{1 \LDOTS T}$,
and the set of arms is denoted by $[n]$.
The paradigmatic case is $n=2$ arms, but our results meaningfully extend to an arbitrary $n$. Choosing a given arm for a given agent yields an observable, objective \emph{outcome} $\outcome\in\outcomes$.
\footnote{\Eg whether and how fast the patient was cured, and with which side effects (if any). Or, how a platform user has interacted with the platform, in terms of clicks, views, sales, etc.}
 The set $\outcomes$ of possible outcomes is finite and known.

The mechanism interacts with agents as follows. In each round $t\in[T]$,
\begin{OneLiners}
\item[1.] a new agent arrives

\item[2.] The mechanism chooses the \emph{sampling distribution} over arms, denoted by $\sampleD_t$,  samples an arm $a_t\in[n]$ independently at random from this distribution, and assigns this arm to the agent.

\item[3.] The outcome $\outcome_t\in\outcomes$ is realized and observed by the mechanism and the agent.
\end{OneLiners}

Each outcome $\outcome$ is assigned a \emph{score} $f(\outcome)\in [0,1]$, which may \eg represent this outcome's value. The \emph{scoring function}
    $f:\outcomes\to[0,1]$
is exogenously fixed and known (its role is explained in \Cref{rem:f-meaning}).
Note that the mechanism follows a standard protocol of \emph{multi-armed bandits}: in each round it chooses an arm $a_t$ and observes a numerical score $f(\outcome_t)$ for playing this arm.

The \emph{outcome table}
    $\rbr{\outcome_{a,t}\in\outcomes:\,a\in[n],\; t\in[T]}$
is fixed in advance (where round $t$ indexes rows), so that
    $\outcome_t = \outcome_{a_t,\;t}\in\outcomes$ for each round $t$.
We posit that this entire table is drawn from some distribution, which we henceforth call an \emph{adversary}.%
\footnote{This corresponds to ``oblivious adversary" in multi-armed bandits, as if there is an adversary who chooses the outcome table upfront. An analogous game-theoretic terminology is a non-strategic player called ``Nature".}
An important special case is \emph{stochastic adversary}, whereby each row $t$ of the table
is drawn independently from the same fixed distribution.


\subsection{Statistical model: the performance objective}
\label{sec:model-stats}


The mechanism should estimate the average score of each arm \textbf{without any assumptions} on the adversary, \ie on how the outcomes are generated. Not relying on modeling assumptions is a standard desideratum in randomized controlled trials.

More precisely, the mechanism designates the first $T_0$ rounds as a \emph{warm-up stage}, a small-scale pilot experiment to create incentives, for some $T_0\leq T/2$ chosen in advance. We are interested in the average score over the remaining rounds, which constitute the \emph{main stage}.

Fix an adversary \adv. The average score of a given arm $a$ over the main stage is denoted by
\begin{align}\label{eq:adv-averages-new}
\textstyle
\advF(a)
    := \frac{1}{|S|}\,\sum_{t\in S} f(\outcome_{a,t}),
    \quad \text{where } S=\cbr{T_0+1 \LDOTS T}.
\end{align}
After round $T$, the mechanism outputs an estimate $\estF(a) \in [0,1]$ for $\advF(a)$, for each arm $a$.

\begin{remark}\label[remark]{rem:f-meaning}
We focus on the following two scenarios. First,
$\advF(a)$ is the frequency of a given outcome $\outcome^*\in\outcomes$ in the main stage; then the scoring function is
    $f(\outcome) = \ind{\outcome = \outcome^*}$.
Second, if $f(\outcome)$ is the outcome's value (for the agent and/or the society),
then $\advF(a)$ is the average value of arm~$a$. 
In particular, the \emph{average treatment effect} can be expressed as
$\advF(\text{treatment})-\advF(\text{control})$. 
We keep $f$ abstract throughout to handle both scenarios in a uniform way.
\end{remark}

The estimation error is defined as the largest mean squared error (MSE). Formally,
for mechanism $\mech$ and adversary $\adv$ we define
\begin{align}\label{eq:err-defn}
\err\rbr{\mech\mid\adv}
    = \max_{f:\outcomes\to[0,1],\; a\in[n]}
        \E\sbr{(\,\estF(a) - \advF(a)\,)^2},
\end{align}
where the $\max$ is over all scoring functions $f$ and all arms $a$, and  the expectation is over the randomness in the mechanism and the adversary.

The full specification of our mechanism is that it chooses $T_0$, then follows the interaction protocol from Section~\ref{sec:model-bandits}, and finally computes the estimates $\estF(a)$, $a\in[n]$. Thus, the performance objective is to minimize \eqref{eq:err-defn} uniformly over all adversaries $\adv$. Note that \eqref{eq:err-defn} essentially requires the mechanism to handle all $f$ at once.

\begin{remark}
Without economic constraints, a trivial solution is to take $T_0=0$ and randomize uniformly over arms, independently in each round. Indeed, this is a solution commonly used in practice. A standard IPS estimator (see \Cref{sec:prelims}) achieves $\err(\mech\mid\adv)\leq n/T$.
\end{remark}


\subsection{Economic model: agents, beliefs and incentives}
\label{sec:model-econ}

The agents' information structure is as follows (as motivated in the Introduction).
\textbf{(i)}
Each agent $t$ does not observe anything about the previous rounds.
Consequently, agent $t$ does not observe the chosen distribution $\sampleD_t$ if it depends on the history.
\textbf{(ii)}
Each agent $t$ does not observe the assigned arm $a_t$.
\textbf{(iii)}
Each agent knows the specification of the mechanism before (s)he arrives.

The mechanism must incentivize participation, \ie satisfy \emph{Individual Rationality (IR)}. We require Bayesian IR in expectation over the agents' beliefs over the adversaries.%
\footnote{\label{fn:beliefs}We invoke agents' beliefs because satisfying the IR property for every adversary (a.k.a. \emph{ex-post IR}) is clearly impossible, except for a trivial mechanism that always chooses the agent's outside option.}
Crucially, we posit that \textbf{agents' beliefs are over stochastic adversaries}, \ie the agents believe that a given treatment yields the same outcome distribution in all agents.
\footnote{\label{fn:misspecified}
We called such beliefs \emph{time-invariant} in the Introduction. These beliefs are \emph{misspecified}, meaning that their support might not include the actual adversary (which need not be stochastic). See Related Work for background.}
All agents share the same belief, a common simplification in economic theory.  Such beliefs represent the current medical knowledge relevant to the clinical trial, or the learnings from user interactions collected by the online platform (which the platform may choose to share with its users).


Let us specify the agents' beliefs formally. The outcome of each arm $a$ is an independent draw from some distribution $\truestate(a)$ over outcomes each time this arm is chosen. For better notation, we identify \emph{states} as mappings
    $\mystate:[n]\to \Delta_\Omega$
from arms to distributions over outcomes, and $\truestate$ as the true state.%
\footnote{We use a standard notation: $\Delta_\Omega$ is the set of all distributions over $\Omega$.}
Then, the agents believe that $\truestate$ is drawn from some Bayesian prior: distribution $\prior$ over states. We posit that $\prior$ is known to the mechanism, and assume that it has finite support.

An agent's utility for a given outcome $\outcome$ is determined by the outcome, denote it
    $\val(\outcome)\in [0,1]$.
Agent's expected utility for arm $a$ and state $\mystate$ is given by
    $ \Val_{\mystate}(a) := \E_{\outcome\sim\mystate(a)}\sbr{\val(\outcome)}.$
The \emph{utility structure} $u:\outcomes\to[0,1]$ is known to the mechanism and all agents.

We ensure that each agent weakly prefers participation to the ``outside option".
One way to formalize the outside option is that an agent could refuse to participate in the mechanism and instead choose an arm (\eg a drug) that appears best based on the initial information. This achieves  Bayesian-expected utility
\begin{align}\label{eq:model-outside}
    \ValOut := \max_{a\in [n]}\; \E_{\mystate\sim\prior}\;\sbr{\Val_{\mystate}(a)}.
\end{align}
More generally, we accommodates an arbitrary, exogenously given $\ValOut\in [0,1]$. Such $\ValOut$ could represent the Bayesian-expected utility of using a competing online platform.

Finally, we can state the IR property that the mechanism is required to satisfy.
The mechanism is \emph{Bayesian Individually Rational} (\emph{BIR}) if
    $\E\sbr{\Val_{\truestate}(a_t)} \geq \ValOut$.
for each round $t$. The mechanism is BIR on a given set $S$ of rounds if this holds for each round $t\in S$.

\begin{remark}\label[remark]{rem:model-outside}
An arbitrary $\ValOut\in [0,1]$ allows us to model several effects relevant to clinical RCTs.
First, some medical treatments might not be available outside of a clinical trial. Then we could take the $\max$ in \eqref{eq:model-outside} only over the treatments that are.
Conversely, a standard treatment might not be available \emph{inside} the trial, \eg in a new treatment vs. placebo trials. Then the $\max$ in \eqref{eq:model-outside} should also include the standard treatment.
Finally, patients might assign positive utility to participation itself, \eg because they value contributing to science or they expect to receive higher-quality and/or less expensive medical care. This can be modeled (albeit in a rather idealized way) by decreasing $\ValOut$ in \eqref{eq:model-outside} by some known amount.


\end{remark}

\subsection{Preliminaries}
\label{sec:prelims}


The tuple
    $\history_t = \rbr{\sampleD_s,\, a_s,\,\outcome_s:\; \text{rounds } s<t}$
denotes the history collected by the mechanism before a given round $t$. Let $\probE[\cdot]$ denote the probability according to agents' Bayesian beliefs.

\xhdr{Per-round recommendation policy.}
A single round $t$ of our mechanism can be interpreted as a stand-alone ``recommendation policy" which inputs the $\history_t$, and outputs a recommended arm. Such policies are key objects in our analysis. We define them formally below.

We focus on a hypothetical world that operates according to agents' beliefs, whereby we have stochastic adversaries and the true state $\truestate$ is drawn from the prior $\prior$. The history $\history_t$ can be interpreted as a \emph{signal} that is jointly distributed with state $\truestate$. Indeed, the prior and the mechanism in the previous rounds determine a joint distribution $\jointP^{(t)}$ for the pair $(\truestate,\history_t)$.

We will need a more general definition in which we replace the history with an abstract \emph{signal} jointly distributed with $\truestate$. (The joint signal-state distribution is determined by the prior and the signal-generating process; we leave the latter outside of our definition.)

\begin{definition}\label[definition]{def:policy}
A \emph{signal} $\signal$ is a random variable which, under agents' beliefs, is jointly distributed with state $\truestate$:
    $(\truestate,\signal)\sim\jointP$,
where the joint prior $\jointP$ is determined by $\prior$. A recommendation policy $\policy$ with signal $\signal$ is a distribution over arms, $\policy(\signal)$, parameterized by the signal.
\end{definition}

As a shorthand, the Bayesian-expected utility of policy $\policy$ is
\begin{align}\label{def:policy-shorthand}
\Val_{\prior}(\policy)
    := \E_{(\mystate,\signal)\sim\jointP}\;\;
    \E_{a\sim\policy(\signal)}\;
    \Val_{\mystate}(a).
\end{align}

A policy is called BIR if any agent weakly prefers following the policy compared to the outside option:
$\Val_{\prior}(\policy) \geq \ValOut$.
The policy is called \emph{strictly BIR} if $\Val_{\prior}(\policy) > \ValOut$.

The mechanism can be represented as a collection of recommendation policies $\rbr{\policy_t:\,t\in[T]}$, where policy $\policy_t$ inputs $\history_t$ as a signal and outputs the sampling distribution as
$\sampleD_t = \policy_t\rbr{\history_t}$
for each round $t$.
The mechanism is BIR in a given round $t$ if and only if policy $\policy_t$ is BIR.


\xhdr{Inverse-Propensity Scoring (IPS)}
is a standard estimator we use in all positive results.
For each arm $a$, letting $S = \cbr{T_0+1 \LDOTS T}$ be the set of rounds in the main stage, the estimator is
\begin{align}\label{eq:IPS-defn}
\ips(a)
    := \textstyle  \frac{1}{|S|}\,\sum_{t\in S}
        \ind{a_t = a} \cdot f(\outcome_t)\,/\,\sampleD_t(a).
\end{align}

Consider a mechanism \mech which chooses a fixed sequence of sampling distributions
    $\rbr{\sampleD_t: t\in S}$.
If \mech uses the IPS estimator, $\estF = \ips$, then
\begin{align}\label{eq:IPS-UB}
\err\rbr{\mech\mid\adv}
    \leq \textstyle
        |S|^{-2}\;\max_{a\in[n]} \;\sum_{t\in S} 1/\sampleD_t(a)
    \qquad\text{for any adversary \adv}.
\end{align}
This guarantee is ``folklore",
see \cref{app:IPS-UB} for a proof. This is the statistical tool we use to analyze our mechanisms. For the lower bounds, we use a somewhat non-standard tool which applies to arbitrary counterfactual estimators (\refeq{eq:IPS-LB-body} and \cref{app:stats-LB}).
\section{Homogeneous agents}
\label{sec:single}
This section treats the fundamental case of homogeneous agents. We obtain an optimal solution, with nearly matching upper/lower bounds, and introduce key ideas for the general case.


\subsection{Benchmark}
We express our results in terms of a benchmark which focuses on a single round and encapsulates the optimal dependence on the prior. To this end, we consider recommendation policies that input the true state ($\signal=\truestate$); we call them \emph{state-aware}. We optimize among all BIR, state-aware policies $\policy$ so as to
minimize the error upper bound \eqref{eq:IPS-UB}. More precisely, we'd like to minimize
    $\max_a 1/\policy_a(\truestate)$
in the worst case over adversaries \adv, \ie in the worst case over all possible realizations of $\truestate$.

Formally, the \emph{state-aware benchmark} is defined as follows:
\begin{align}\label{eq:bench-single-generic}
\bench(\prior)
     = \inf_{\text{BIR, state-aware policies $\policy$}} \quad
        \max_{\text{states }\mystate\in\support(\prior),\;\text{arms }a\in[n]} \quad
        \frac{1}{\policy_a(\mystate)}.
\end{align}

An alternative characterization of this benchmark is based on \emph{epsilon-greedy}, a well-known bandit algorithm captured (in our terms) by the next definition.

\begin{definition}\label[definition]{def:homo-greedy}
Fixing $\eps\geq 0$ and state $\mystate$, \greedy{\eps}{\mystate} is a recommendation policy that \emph{explores} with probability $\eps\geq 0$, choosing an arm independently and uniformly at random, and \emph{exploits} with the remaining probability,  choosing the best arm $\bestA$ for a given state $\mystate$.
\end{definition}

\noindent We use this policy with $\mystate=\truestate$ and the largest $\eps$ such that this policy is BIR.

\begin{lemma}\label[lemma]{lm:greedy-single}
The supremum in \refeq{eq:bench-single-generic} is attained by \greedy{\eps}{\truestate}, for some $\eps\geq 0$. Thus:
\begin{align}\label{eq:lm:bench-single}
\bench(\prior)
     = \inf_{\eps\geq 0:\;\greedy{\eps}{\truestate} \text{ is BIR}} \,n/\eps.
\end{align}
\end{lemma}

\noindent Note that the right-hand side of \eqref{eq:lm:bench-single} also depends on the prior $\prior$ through the definition of BIR.

\begin{remark}
Both formulations, \eqref{eq:bench-single-generic} and \eqref{eq:lm:bench-single}, are used in the analysis: the former for the lower bound and the latter for the upper bound. $\greedy{\eps}{\mystate}$ policy is also used in our mechanism. While epsilon-greedy is suboptimal as a regret-minimizing bandit algorithm, it suffices to achieve optimal performance for our purposes.
\end{remark}

\begin{proofof}[\cref{lm:greedy-single}]
The ``$\leq$" direction in \eqref{eq:lm:bench-single} holds because whenever policy
    $\greedy{\eps}{\truestate}$
is BIR, we have $\policy_a(\psi) \geq \eps/n$ for all arms $a$ and all states $\mystate$.
Focus on the ``$\geq$" direction from here on.
Fix $\delta > 0$. Let $\policy$ be a BIR, state-aware policy which gets within $\delta$ of the supremum in \eqref{eq:bench-single-generic}, so that
    $\policy_a(\mystate)\geq \frac{1}{\bench(\prior)+\delta}$
for all arms $a$ and all states $\mystate$.
Let $\policy'$ be the $(\eps,\truestate)$-greedy policy with
$\eps = \frac{n}{\bench(\prior)+\delta}$. Its expected utility is
\begin{align}
\Val_{\prior}(\policy')
&= \E_{\mystate\sim\prior}
    \sbr{(1-\eps +\eps/n)\cdot \Val_{\mystate}(\bestA)
    + \frac{\eps}{n}\cdot {\sum_{a\neq \bestA}} \Val_{\mystate}(a)} \nonumber \\
&\geq \E_{\mystate\sim\prior}
\sbr{\policy_{\bestA}(\mystate)\cdot \Val_{\mystate}(\bestA)
+ \policy_a(\mystate)\cdot {\sum_{a\neq \bestA}} \Val_{\mystate}(a)}
  \label{eq:lm:pf:bench-single}\\
&= \Val_{\prior}(\policy) \geq \ValOut. \nonumber
\end{align}
The inequality in \eqref{eq:lm:pf:bench-single} holds since $\policy_a(\mystate)\geq \eps/n$, so for each state the right-hand side shifts probabilities towards arms with weakly smaller expected utility. It follows that policy $\policy'$ is BIR. So, this policy gets within $\delta$ from the supremum in \eqref{eq:lm:bench-single}. This holds for any $\delta>0$.
\end{proofof}

The benchmark is finite as long as the best arm strictly improves over the outside option.%
\footnote{Failing that, we have a degenerate case when the agents believe that the trial is completely useless.}
To express this point, let $\bestA \in \argmax_{a\in [n]} \Val_{\mystate}(a)$ be \emph{the best arm}: a utility-maximizing action for a given state $\mystate$, ties broken arbitrarily. Its Bayesian-expected reward is
    $\bestVal := \E_{\mystate\sim\prior}\sbr{\Val_{\mystate}\rbr{\bestA}}$.

\begin{claim}\label[claim]{cl:single-finite}
If
   $\bestVal>\ValOut$
then $\bench(\prior)<\infty$.
\end{claim}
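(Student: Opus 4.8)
The plan is to exhibit a single explicit BIR state-aware policy all of whose sampling probabilities are bounded away from zero; since the $\max$ in \eqref{eq:bench-single-generic} ranges over the finite set $\support(\prior)\times[n]$, this alone furnishes a finite upper bound on $\bench(\prior)$. The candidate I would use is an $\epsilon$-greedy mixture: for a parameter $\epsilon\in(0,1]$ to be chosen, let $\policy^\epsilon$ be the state-aware policy that, on true state $\mystate$, plays the best arm $\bestA[\mystate]$ with probability $1-\epsilon$ and a uniformly random arm with probability $\epsilon$, so that $\policy^\epsilon_a(\mystate)\ge \epsilon/n$ for every $\mystate\in\support(\prior)$ and every $a\in[n]$.

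The one computation to carry out is to pick $\epsilon$ making $\policy^\epsilon$ BIR. Since for state-aware policies the signal is $\signal=\truestate$, the shorthand \eqref{def:policy-shorthand} gives
\[
\Val_{\prior}(\policy^\epsilon)
= (1-\epsilon)\,\E_{\mystate\sim\prior}\sbr{\Val_{\mystate}\rbr{\bestA[\mystate]}}
 + \tfrac{\epsilon}{n}\sum_{a\in[n]}\E_{\mystate\sim\prior}\sbr{\Val_{\mystate}(a)}
\;\ge\; (1-\epsilon)\,\bestVal,
\]
where the inequality uses only $\Val_{\mystate}(a)\ge 0$ (utilities lie in $[0,1]$). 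The hypothesis $\bestVal>\ValOut$ together with $\ValOut\ge 0$ forces $\bestVal>0$, so I can set $\epsilon := 1-\ValOut/\bestVal\in(0,1]$; then $\Val_{\prior}(\policy^\epsilon)\ge(1-\epsilon)\bestVal=\ValOut$, i.e.\ $\policy^\epsilon$ is BIR by \eqref{def:policy-BIR}. Feeding this policy into \eqref{eq:bench-single-generic} yields
\[
\bench(\prior)\;\le\; \max_{\mystate\in\support(\prior),\,a\in[n]}\frac{1}{\policy^\epsilon_a(\mystate)}
\;\le\; \frac{n}{\epsilon}
\;=\; \frac{n\,\bestVal}{\bestVal-\ValOut}\;<\;\infty.
\]

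I do not expect any genuine obstacle here — the argument is a one-line mixture construction of the kind that will reappear (in a more elaborate, type-dependent form) in \cref{sec:worst}. The only points needing a moment's care are that $\bestVal>0$ (so that $\epsilon$ is well-defined and strictly positive) and that the maximization defining $\bench(\prior)$ is over a finite set (so that strict positivity of each $\policy^\epsilon_a(\mystate)$ does give a finite maximum); both hold under the standing assumptions that $\prior$ has finite support and $\val$ takes values in $[0,1]$.
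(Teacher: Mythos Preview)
Your proof is correct and follows essentially the same approach as the paper: both exhibit a state-aware $\epsilon$-greedy policy $\greedy{\eps}{\truestate}$ that is BIR for some $\eps>0$, which immediately bounds the benchmark by $n/\eps$. The paper routes this through \cref{lm:greedy-single} and leaves the existence of a suitable $\eps$ implicit (by continuity of $\Val_\prior(\greedy{\eps}{\truestate})$ in $\eps$), whereas you give a self-contained argument with an explicit choice $\eps=1-\ValOut/\bestVal$ and the resulting quantitative bound $\bench(\prior)\le n\,\bestVal/(\bestVal-\ValOut)$.
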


\begin{proof}
If $\E_{\mystate\sim\prior}\sbr{\Val_{\mystate}\rbr{\bestA}}>\ValOut$,
there exists $\epsilon>0$ such that policy $\greedy{\eps}{\truestate}$
is BIR.
\cref{lm:greedy-single} then implies that the benchmark is finite.
\end{proof}


\subsection{Our mechanism: Main stage}

We focus on the main stage, coming back to the warm-up stage in \Cref{sec:warm-up}. We use the same recommendation policy in all rounds of the main stage.
Namely, we use policy \greedy{\eps}{\mystate},
for some fixed $\eps,\mystate$. Since the true state $\truestate$ is not known, we instead use state $\mystate = \approxstate$ which summarizes the data from the warm-up stage.
Specifically, let $\approxstate(a)$ be the empirical distribution over outcomes observed in the rounds of the warm-up stage when a given arm $a$ is chosen. Note that the average state $\approxstate(a)$ is not necessarily in $\support(\prior)$, but the \greedy{\eps}{\approxstate} policy is well-defined.

To recap, our mechanism uses \greedy{\eps}{\approxstate} recommendation policy in all rounds, for some $\eps>0$. (Unlike the standard epsilon-greedy algorithm, the estimated best arm is fixed throughout, by the choice of $\approxstate$, rather than adjusted as more data comes in.) Our guarantee is conditional on collecting enough samples in the warm-up stage.

\begin{theorem}\label{thm:mech-single}
Assume
   $\bestVal>\ValOut$.
Suppose mechanism \mech uses
    \greedy{\eps}{\approxstate}
policy in each round of the main stage, where
    $\eps = \tfrac{n}{2}/\bench(\prior)$,
and estimator $\estFreq = \ips$ as per \refeq{eq:IPS-defn}. Then:
\begin{flalign*}
\;\;\;\text{(a)} \qquad & \err\rbr{\mech\mid\adv} \leq \frac{2\cdot \bench(\prior)}{T-T_0}
\qquad\text{for any adversary \adv}. &&
\end{flalign*}
\begin{itemize}
\item[(b)] Suppose  each arm
appears in at least $N_{\prior}$ rounds of the warm-up stage, with probability $1$ over the agents' beliefs. Here $N_{\prior}<\infty$ is a parameter determined by the prior $\prior$, as per \refeq{eq:homo-N}. Then the mechanism is BIR over the main stage.

\end{itemize}
\end{theorem}

The requisite number of warm-up samples, $N_{\prior}$, is expressed in terms of the \emph{prior gap},
\begin{align}\label{eq:prior-gap}
\gap(\prior) := \textstyle
\E_{\mystate\sim\prior}
\sbr{\Val_{\mystate}(\bestA) - \tfrac{1}{n}\cdot \sum_{a\in[n]}\Val_{\mystate}(a)},
\end{align}
the Bayesian-expected difference in utility between the best arm and the uniformly-average arm. Note that $\gap(\prior)>0$ provided that
$\bestVal>\ValOut$. Then
\begin{align}\label{eq:homo-N}
N_{\prior} = 32 \alpha^{-2} \log(8n/\alpha),
\quad \text{where }
    \alpha = n\cdot\gap(\prior)/\bench(\prior)>0.
\end{align}

We illustrate $\bench(\prior)$ and $N_{\prior}$ with a simple numerical example in \cref{apx:examples}.




\begin{proofof}[\cref{thm:mech-single}]
Let's apply concentration to the approximate state $\approxstate$. By Azuma-Hoeffding inequality, for any state $\mystate$ and any $\delta> 0$ we have
\begin{align*}
\probE\sbr{|\,\Val_{\approxstate}(a) - \Val_{\mystate}(a)\,|\geq \delta \mid \truestate=\mystate}
\leq \proberr := 2n\cdot\exp\rbr{-2\delta^2 N_{\prior}}
\quad\forall a\in[n].
\end{align*}
So, one can compare the best arm $\bestA$ for the true state $\truestate=\mystate$ and the best arm $\bestA[\approxstate]$ for $\approxstate$:
\begin{align}\label{eq:approx opt a}
\probE\sbr{ \Val_{\mystate}(\,\bestA[\approxstate]\,)
    \geq \Val_{\mystate}(\,\bestA\,) - 2\delta \mid \truestate=\mystate} \geq 1-\proberr.
\end{align}

Next, we show that \mech is BIR in the main stage. Henceforth, let
    $\delta = \tfrac{n}{8}\cdot\gap(\prior)/\bench(\prior)$.
The agent's expected utility in a single round of the main stage is
\begin{align*}
\Val_{\prior}\rbr{\greedy{\eps}{\approxstate}}
&=\E_{\mystate\sim\prior}
    \sbr{(1-\eps)\cdot \Val_{\mystate}(a_{\approxstate})
    + \frac{\eps}{n}\cdot {\sum_{a\in[n]}} \Val_{\mystate}(a)}\\
&\geq  \E_{\mystate\sim\prior}
    \sbr{(1-\eps)\cdot \Val_{\mystate}(\bestA)
    + \frac{\eps}{n}\cdot {\sum_{a\in[n]}} \Val_{\mystate}(a)}
    - \proberr - 2\delta\\
&=  \E_{\mystate\sim\prior}
    \sbr{(1-2\eps)\cdot \Val_{\mystate}(\bestA)
    + \frac{2\eps}{n}\cdot {\sum_{a\in[n]}} \Val_{\mystate}(a)}
    + \eps\cdot\gap(\prior) - \proberr - 2\delta\\
&\geq \Val_{\prior}(\policy^*),
\end{align*}
where we denote policy \greedy{2\eps}{\truestate} with $\policy^*$.
The first inequality holds by \eqref{eq:approx opt a} (and the fact that utilities are at most $1$).
The last inequality holds because
    $\eps\cdot\gap(\prior) - \proberr - 2\delta\geq 0$
by our choice of $\eps,\delta$.

To complete the BIR proof, since $2\eps = n/\bench(\prior)$, policy $\sigma^*$ is precisely a $(n/\bench(\prior), \truestate)$-greedy policy. According to Lemma~\ref{lm:greedy-single}, this policy is BIR and $\Val_{\prior}(\policy^*)\geq \ValOut$.

Finally, recall that in each round of the main stage, each arm is sampled with probability at least $\eps/n = 1/(2\cdot \bench(\prior))$. Thus, \cref{thm:mech-single}(a) follows from \refeq{eq:IPS-UB}.
\end{proofof}

\subsection{Our mechanism: Warm-up stage}
\label{sec:warm-up}

We leverage prior work on incentivized exploration to guarantee warm-up samples. Restated in our notation, the relevant guarantee is as follows:

\begin{lemma}[\citealp{ICexploration-ec15}]\label[lemma]{lm:ec15}
There exists a number $M_{\prior}\leq \infty$ which depends only on the prior $\prior$ (but not on the time horizon $T$) such that:
\begin{OneLiners}
\item[(a)] if $M_{\prior}<\infty$ then there exists a BIR mechanism which samples all arms in $M_{\prior}$ rounds.
\item[(b)] if $M_{\prior}=\infty$ and $n=2$ arms then no strictly BIR mechanism can sample both arms.
\end{OneLiners}
\end{lemma}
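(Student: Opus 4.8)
The plan is to derive both parts from the analysis of incentivized exploration in \citet{ICexploration-ec15} (which builds on \citet{Kremer-JPE14} for the two-armed case). The reconciliation point is that our incentive requirement is \emph{weaker} than theirs: an agent here observes neither the history nor the arm assigned to it, so the only constraint is that the prior-expected utility of the assigned arm is at least the outside option $\ValOut$ of \eqref{eq:model-outside}, whereas in incentivized exploration the agent observes the recommended arm and that arm must be posterior-optimal. Accordingly, I will take $M_{\prior}$ to be the (prior-dependent, horizon-independent) round-complexity parameter of their mechanism, with $M_{\prior} = \infty$ exactly when some arm is not ``explorable'' in their sense. Parts (a) and (b) then transfer, respectively, their positive result and their impossibility result across this reduction.

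For part (a), assume $M_{\prior} < \infty$, i.e.\ every arm is explorable, and run their mechanism with ``recommend arm $a$ to agent $t$'' reinterpreted as ``assign $a_t = a$'' --- legitimate since our agents can neither refuse nor observe the assignment --- so that by their guarantee all arms are sampled within $M_{\prior}$ rounds. It remains to verify BIR \eqref{eq:def:BIR-homo}. Fix a round $t$ and a prior-optimal arm, i.e.\ a fixed arm $a^{\circ}$ with $\E_{\mystate\sim\prior}\sbr{\Val_{\mystate}(a^{\circ})} = \ValOut$. Since their mechanism is Bayesian incentive-compatible, conditionally on assigning any arm $a$ the agent's posterior expected utility for $a$ is at least that for $a^{\circ}$; as $a^{\circ}$ does not depend on the assignment, averaging over the assigned arm gives $\E\sbr{\Val_{\truestate}(a_t)} \geq \E\sbr{\Val_{\truestate}(a^{\circ})} = \ValOut$, which is \eqref{eq:def:BIR-homo}. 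For heterogeneous agents the same reduction, applied type by type, also delivers the BIC property of \cref{def:BIC}; I would spell this out at the end of \cref{sec:hetero-details}, as promised there.

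For part (b), take $n = 2$, assume $M_{\prior} = \infty$, and w.l.o.g.\ let arm $1$ be the unique prior-optimal arm and arm $2$ the arm that is not explorable. The crux is an equivalence special to two arms: at any round $t$, a mechanism that assigns arm $2$ with positive probability can be BIR only if there is an outcome history of arm $1$ of positive prior probability under which arm $2$'s posterior expected utility is at least arm $1$'s. Indeed, before arm $2$ is ever assigned the history $\history_t$ consists solely of outcomes of arm $1$ (the recorded sampling distributions and any internal randomization are, conditionally on those outcomes, independent of $\truestate$, hence do not move the agents' posterior), and since arm $1$ is prior-optimal one can write $\E\sbr{\Val_{\truestate}(a_t)} = \ValOut + \E\sbr{\,q_t\cdot\bigl(\E\sbr{\Val_{\truestate}(2)\mid\history_t} - \E\sbr{\Val_{\truestate}(1)\mid\history_t}\bigr)\,}$ with $q_t := \prob{a_t = 2 \mid \history_t} \geq 0$; strict BIR then forces a positive-probability value of $\history_t$ with $\E\sbr{\Val_{\truestate}(2)\mid\history_t} \geq \E\sbr{\Val_{\truestate}(1)\mid\history_t}$, whereas $M_{\prior} = \infty$ asserts precisely that no finite outcome history of arm $1$ is ever favorable in this sense --- a contradiction. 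A strong induction on $t$ (round $1$ is forced onto arm $1$ by BIR; the identity above rules out arm $2$ at round $t$ given that arm $2$ was never assigned before round $t$) then shows that no strictly BIR mechanism ever assigns arm $2$. This is exactly the impossibility direction of \citet{ICexploration-ec15} for two arms, which I would invoke rather than reprove in full.

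The main obstacle is the bookkeeping in part (b): one must argue carefully that, until arm $2$ is first assigned, the agents' posterior over $\truestate$ is a function of the finitely many observed outcomes of arm $1$ alone --- so that the feasibility of ever exploring arm $2$ is governed by the set of finite arm-$1$ outcome histories --- and that a single favorable history of positive probability already contradicts $M_{\prior} = \infty$. Part (a), by contrast, is essentially a translation once the reduction is set up; its only subtlety is that it invokes incentive-compatibility solely against the deviation to the \emph{fixed} prior-optimal arm, which is exactly the outside option in our model, so the stronger incentive notion of incentivized exploration (genuinely stronger for $n > 2$) is not needed.
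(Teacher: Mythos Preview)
The paper does not prove this lemma at all: it is stated as a restatement, in the present notation, of a result from \citet{ICexploration-ec15}, and is used as a black box to furnish the warm-up stage (see \cref{thm:mech-single-warmup} and the surrounding discussion). There is therefore no ``paper's own proof'' to compare against; your proposal is an attempt to supply the reduction that the paper leaves implicit.

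Your part~(a) is the right translation and is correct: the BIC guarantee of incentivized exploration (each recommended arm is posterior-optimal conditional on the recommendation) is stronger than BIR here, and your averaging argument against the fixed prior-optimal arm $a^\circ$ is exactly how one passes from the former to the latter.

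Your part~(b) has the right skeleton --- induct on the first round arm~$2$ is assigned, use that the history up to then consists only of arm-$1$ outcomes, and invoke the explorability characterization from the cited paper --- but the bookkeeping around ``strictly BIR'' is off. At round~$1$ the mechanism has no information, so $\E[\Val_{\truestate}(a_1)] \leq \ValOut$ with equality only when $a_1$ is a prior-optimal arm deterministically; thus no mechanism is \emph{strictly} BIR at round~$1$, and your base case ``round~$1$ is forced onto arm~$1$ by BIR'' yields only weak BIR there. In fact your own identity $\E[\Val_{\truestate}(a_t)] = \ValOut + \E[q_t\cdot(\E[\Val_{\truestate}(2)\mid\history_t]-\E[\Val_{\truestate}(1)\mid\history_t])]$ already shows that \emph{weak} BIR forces $q_t=0$ whenever the difference is strictly negative for every arm-$1$ history; the ``strictly'' in the lemma is presumably there to handle the boundary case where some history makes the two arms exactly tied. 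You should make explicit which inequality (strict or weak) defines explorability in \citet{ICexploration-ec15} and match it to the version of BIR you invoke, rather than mixing strict BIR in the hypothesis with weak BIR in the induction.
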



\noindent While \Cref{lm:ec15}(a) is trivial as stated above, \citet{ICexploration-ec15} provide an explicit mechanism for part (a) and an explicit expression for $M_{\prior}$. The latter is somewhat tedious, and not essential for this paper.%
\footnote{However, let us present it for $n=2$ arms, for the sake of completeness. Consider $\Delta_{\mystate} := \Val_{\mystate}(1)-\Val_{\mystate}(2)$, the ``gap" between the arms' expected utilities. W.l.o.g., assume that
    $\E_{\mystate}\sbr{\Delta_{\mystate}}\geq 0$,
\ie arm $1$ is (weakly) preferred initially. Let
    $X_{(k)} := \E_{\mystate}\sbr{-\Delta_{\mystate} \mid S_{(k)}}$,
where $S_{(k)}$ is a tuple of $k$ independent reward samples from arm $1$. Then, suppose there exists some finite $k = k_{\prior}$ such that
    $\Pr\sbr{ X_{(k)}>0 }>0$
(if not, set $M_{\prior} = \infty$).
Let $Y = X_{(k_{\prior})}$. Then
\begin{align*}
M_{\prior} = (n+1) \max\rbr{k_{\prior},\; \frac{\E_{\mystate}\sbr{\Delta_{\mystate}}}
    {\E\sbr{Y\mid Y>0}\; \Pr[Y>0] }}.
\end{align*}
}
Thus, a sufficient warm-up stage for \cref{thm:mech-single} lasts
    $T_0 = M_{\prior}\cdot N_{\prior}$
rounds, and consists of $N_{\prior}$ consecutive runs of the mechanism from \Cref{lm:ec15}(a).

\begin{remark}
While \citet{ICexploration-ec15} assumes stochastic adversaries, this suffices because the BIR property is required only for  time-invariant beliefs. So, the mechanism from \Cref{lm:ec15}(a) is BIR regardless of whether the actual adversary is stochastic. The mechanism samples all arms in $M_{\prior}$ rounds deterministically, regardless of the observed outcomes.
\end{remark}

The warm-up stage can be replaced with a non-BIR mechanism that incentivizes agent participation by other means, \eg by paying the volunteers, as commonly done in medical trials. Crucially, $n\cdot N_{\prior}$ volunteers suffice to bootstrap our mechanism to run for an arbitrarily large $T$.

\subsection{Lower bound}

We prove that the guarantee in \cref{thm:mech-single}(a) is essentially optimal in the worst case.

Our result applies to all mechanisms that do not adapt to new observations over some constant (but possibly small) fraction of rounds, and are unrestricted otherwise. Formally, given a contiguous set $S$ of rounds, we say that a mechanism is \emph{non-data-adaptive} over $S$ if for each round $t\in S$ the sampling distribution $\sampleD_t$ does not depend on the data collected during $S$.


\begin{theorem}\label{thm:LB-single}
Fix prior $\prior$ and time horizon
    $T\geq \Omega\rbr{\bench(\prior)^{-2}}$
and $T_0\leq T/2$. Let $S$ be a contiguous subset of rounds in the main stage, with $|S|\geq c\cdot T$ for some absolute constant $c>0$. Fix mechanism \mech that is non-data-adaptive over $S$. Then some adversary $\adv$ has
\begin{align}\label{eq:thm:LB-single}
 \err\rbr{\mech\mid\adv}
    = \Omega\rbr{\frac{\bench(\prior)}{T}}.
\end{align}
\end{theorem}

Thus, \Cref{thm:LB-single} allows per-round policies that change over time and can themselves be jointly chosen at random. In rounds $t\not\in S$, the policies can depend on all past observations. Importantly, this includes ``multi-stage" mechanisms that partition the time into a constant number of contiguous ``stages" and are non-data-adaptive within each (possibly with a data-adaptive warm-up stage of duration, say, $<T/2$). Most medical trials in practice fit this model, see \cref{fn:stages}.

The proof plan is as follows. First, we derive a statistical lower bound for an arbitrary estimator (\refeq{eq:IPS-LB-body}). Second, we focus on a particular ``bad state" and ``bad arm" for the benchmark. Third, we construct a ``bad" adversary, using this bad state/arm pair. The final computation demonstrates that this adversary gives a large enough error. We flesh out this plan below. 

\xhdr{Proof of \cref{thm:LB-single}.}
Fixing the history of the first $t_0 = \min(S)-1$ rounds, the mechanism draws the tuple of sampling distributions
    $(\sampleD_t: t\in S)$,
perhaps jointly at random. We prove that
\begin{align}\label{eq:IPS-LB-body}
\err\rbr{\mech\mid\adv^\dag}
    \geq
    \Omega\rbr{\frac{1}{|S|^2}\;\max_{a\in[n]}\;\sum_{t\in S} \min\cbr{1/\E\sbr{\sampleD_t(a)}, \sqrt{|S|}}},
\end{align}
for some adversary $\adv^\dag$ determined by the expected sampling probabilities $\rbr{\E[\sampleD_t]:\,t\in S}$. This is a general statistical tool, see \cref{app:stats-LB} for a standalone formulation, discussion, and proof.

Given state $\mystate$, let $\adv_{\mystate}$ be the adversary restricted to the first $t_0$ rounds that samples outcome $\outcome_{a,t}$ from distribution $\mystate(a)$, independently for each arm $a$ and each round $t\leq t_0$. Recall that $\adv_{\mystate}$ and \mech jointly determine the sampling distributions $\sampleD_t$ for all rounds $t>t_0$. Let $N_\mystate(a)$ be the number of times arm $a$ is selected during the rounds in $S$ under adversary $\adv_{\mystate}$.

We claim there exists a state $\mystate_0\in\supp(\prior)$ and arm $a_0$ such that
\begin{align}\label{eq:thm:LB-single:pf-state-body}
    \E\sbr{N_{\mystate_0}(a_0)} \leq |S|/\bench(\prior).
\end{align}
Suppose not.
Then we can construct a state-aware policy $\policy$ which  contradicts \eqref{eq:bench-single-generic}. Specifically, policy $\policy$ simulates a run of \mech under adversary $\adv_{\truestate}$, chooses a round uniformly at random from the main stage, and recommends the same arm as \mech in this round. Such policy is BIR, and therefore would make the right-hand side of \eqref{eq:bench-single-generic} strictly larger than $\bench(\prior)$, contradiction. Claim proved. From here on, we fix state $\mystate_0$ and arm $a_0$ which satisfy \eqref{eq:thm:LB-single:pf-state-body}.


The adversary $\adv$ is constructed as follows. We use the adversary $\adv_{\mystate_0}$ for the first $t_0$ rounds, and adversary $\adv^\dag$ afterwards. It immediately follows from \eqref{eq:IPS-LB-body} that
\begin{align}
\err\rbr{\mech\mid\adv}
&\geq
    \Omega\rbr{\frac{1}{|S|^2}
    \sum_{t\in S}\;\max\cbr{\tfrac{1}{q_t},\; \sqrt{|S|}}},
    \qquad \text{where } q_t := \E\sbr{\sampleD_t(a_0)}
    \nonumber \\
&\geq
    \Omega\rbr{\frac{1}{|S|^2}\;
    \rbr{\sqrt{|S|}\cdot |\loS| + \sum_{t\in \hiS}\;\tfrac{1}{q_t}}},
    \label{eq:thm:LB-single:pf-immediate-body}
\end{align}
where
    $\loS = \cbr{t\in S:\; q_t\leq 1/\sqrt{|S|}}$
and
    $\hiS = \cbr{t\in S:\; q_t> 1/\sqrt{|S|}}$.
In words, $\loS$ (resp., $\hiS$) comprises the rounds in the main stage with low (resp., high) expected sampling probability $q_t$.

In the remainder of the proof, we analyze the right-hand side of \eqref{eq:thm:LB-single:pf-immediate-body}. This is an expression in terms of deterministic scalars $q_t\in [0,1]$,  $t\in S$. We leverage the fact that \eqref{eq:thm:LB-single:pf-state-body} holds, so that
\[  \hiN := \sum_{t\in \hiS} q_t
        \leq \sum_{t\in S} q_t
        = \E\sbr{N_{\mystate_0}(a_0)} \leq |S|/\bench(\prior).\]

By harmonic-arithmetic mean inequality,
\[
\sum_{t\in\hiS} \frac{1}{q_t}
    \geq |\hiS|^2/\hiN
    \geq |\hiS|^2\cdot\bench(\prior)/|S|.
\]
Plugging this back into \eqref{eq:thm:LB-single:pf-immediate-body} and noting that  $|\loS|+|\hiS| = |S|\geq T/2$,
we have
\begin{align*}
\err\rbr{\mech\mid\adv}
&\geq
    \Omega\rbr{\frac{1}{|S|^2}\;
        \rbr{\sqrt{|S|}\cdot |\loS| + |\hiS|^2\cdot\bench(\prior)/|S|}}
        \geq \Omega\rbr{\bench(\prior)/T}.
\end{align*}


\section{Heterogeneous agents}
\label{app:worst}
We now consider the general case of heterogeneous agents.  Agent heterogeneity is two-fold: different agent types may have different outcome distributions for the same arm (according to the agents' beliefs), and different subjective utilities for the same outcome (which are not directly observable). We model this as follows: \textbf{public types determine outcome distributions} and the \textbf{private types determine subjective utilities}.
The intuition is that beliefs about objective outcomes are driven by the patient's objective medical history, which is typically well-documented and available to the clinical trial, whereas patients' (dis)utilities over medical outcomes can be highly subjective. For online platforms, we posit that users' beliefs about their own objective outcomes are driven by their attributes that are visible to the platform (via user profiles, cookies, etc.), whereas the respective utilities (such as value-per-click) are subjective and unobservable.
Each agent reports its private type upon arrival, and the mechanism must incentivize truthful reporting.

\subsection{Overview}

We obtain matching upper and lower bounds like in \pref{sec:single}, but with a new benchmark, mechanism, and analysis. This novelty is necessitated by the Bayesian Incentive Compatibility (BIC) constraint which incentivizes truthful reporting of private types and binds the types together.%
\footnote{Indeed, without private types one can use mechanism from \cref{sec:single} tailored to each particular type.}
We define a new benchmark, denoted $\worstbench$, which generalizes \eqref{eq:bench-single-generic} to a given set $\types$ of agent types. Compared to the expression in \eqref{eq:bench-single-generic}, this benchmark requires all {BIC and BIR} per-round policies under the $\inf$,
and takes the $\max$ over $\types$ (as well as over all states). Our mechanism attains the same error bound as in \pref{thm:mech-single} relative to $\worstbench$, with a similar but much more technical sufficient condition for BIR. The lower bound carries over from \cref{thm:LB-single}, replacing $\bench(\prior)$ with $\worstbench$.

Our mechanism is more complex compared to \cref{sec:single}. Exploration probabilities now depend on the arm and the agent type, and are coordinated across types due to the BIC constraint. We specify them indirectly, via a state-aware policy that optimizes the benchmark. The true state $\truestate$ is estimated via the maximum likelihood estimator (MLE) given the data from the warm-up stage.

In terms of proof techniques, our mechanism requires more subtlety compared to \cref{sec:single} so as to analyze the MLE estimator. The lower bound is proved using the same techniques; the novelty here lies in formulating the definitions so that the theorem and techniques indeed carry over.

\subsection{Model: heterogeneous agents}
\label{sec:model-hetero}

We denote agent's type as $\type = (\pubtype,\subtype)$, where $\pubtype\in\pubtypes$ is the public type, and $\subtype\in\subtypes$ is the private type. The set of all possible types,
    $\types = \pubtypes\times\subtypes$,
is known to the mechanism. We assume that it is finite. The case of homogeneous agents is when
    $|\types|=1$.

Below we spell out the necessary changes to the model, compared to \cref{sec:model}.

\xhdr{Bandit model.} In the interaction protocol in Section~\ref{sec:model-bandits}, step 1 is modified as follows:
\begin{OneLiners}
\item[1a.]
A new agent arrives, with type
    $\type_t = (\pubtype^{(t)},\subtype^{(t)})$
and outcomes
    $\rbr{\outcome_{a,t}\in\outcomes:\,a\in[n]}$.

\item[1b.] The mechanism observes the public type $\pubtype^{(t)}$, but not the private type $\subtype^{(t)}$ or the outcomes.

\item[1c.]
The agent reports its private type, not necessarily truthfully, as \emph{reported type} $\reptype^{(t)}\in\subtypes$.
\end{OneLiners}
The pair
    $\obstype^{(t)} = \rbr{\pubtype^{(t)},\, \reptype^{(t)}}$
observed by the mechanism will be called the \emph{observed type}.

The agents' information structure is the same as in the homogeneous case, with one addition: agents know their own types. An additional economic constraint defined below (\cref{def:BIC}) ensures \emph{truthful reporting} of the private types:
    $\reptype^{(t)} = \subtype^{(t)}$
for all rounds $t$.

The adversary now specifies the agents' types, in addition to the outcomes. Formally,
the table
    $\rbr{ \type_t, \outcome_{a,t}\in\outcomes:\,a\in[n]}_{t\in [T]}$
is fixed in advance (where $t$ indexes rows). The table drawn at random from some distribution called an \emph{adversary}. A \emph{stochastic adversary} is defined as before: it draws each row $t$ from some fixed distribution.

The history now includes types: the tuple
    $\history_t = \rbr{\type_t,\, \sampleD_s,\, a_s,\,\outcome_s:\; \text{rounds } s<t}$
denotes the history collected by the mechanism before a given round $t$, assuming truthful reporting of agents' types in the previous rounds.

\xhdr{Statistical model.}
Given an adversary \adv, the average score $\advF(a)$ and the estimation error $\err(\mech\mid\adv)$ are defined exactly the same as in the homogeneous case, as per \refeq{eq:adv-averages-new} and \refeq{eq:err-defn}. The estimation error is defined assuming truthful reporting.

\begin{remark}
Under truthful reporting, the adversary controls the
types 
observed by the mechanism (because reported types are just private types, which are selected by the adversary). An alternative interpretation for the statistical model, not contingent on truthful reporting, is that the adversary controls reported types directly (and then the private types are  irrelevant).
\end{remark}


\xhdr{Economic model.} The definitions are extended to include the agents' types, as per the semantics described above. In what follows, consider an agent with type
    $\type = (\pubtype,\subtype)\in \pubtypes\times\subtypes$.

Agents' beliefs are extended as follows. When some arm $a$ is chosen, the outcome is an independent draw from some distribution $\truestate(a,\pubtype)$ over outcomes; note that the outcome distribution is determined by both the arm and the public type. Mappings
    $\mystate:[n]\times\pubtypes\to \Delta_\Omega$
are identified as \emph{states}, and~$\truestate$ as the true state.
As before, the agents believe that $\truestate$ is drawn from some Bayesian prior $\prior$ over finitely many possible states.%
\footnote{Prior $\prior$ represents common pre-existing medical knowledge, so it is reasonable for it to be same for all agents.}
In each round $t$, type $\type_t$ is drawn independently from some fixed distribution $\typeDist$.
Both $\prior$ and $\typeDist$ are known to the mechanism and the agents.

To simplify presentation, we assume that all outcome distributions $\mystate(a,\pubtype)$ have  full support over the set $\outcomes$ of possible outcomes, \ie that this holds for all arms $a$, public types $\pubtype$, and states $\mystate\in\support(\prior)$; we term this the \emph{full support assumption}.%
\footnote{This assumption can be removed, at the cost of some technical complications, see \Cref{rem:full-support}.}

The agent's \emph{utility} for a given outcome $\outcome$ is $\val(\outcome,\subtype)\in [0,1]$; note that it is determined by both the outcome and the private type.
The realized utility is subjective and not directly observable. The \emph{utility structure}
    $\val:\outcomes\times\subtypes\to[0,1]$
is known to the mechanism and all agents.

Thus, the agent's expected utility and outside option are redefined as
\begin{align*}
\Val_{\mystate}(a,\type)
    &= \E_{\outcome\sim\mystate(a,\pubtype)} \val(\outcome,\subtype)
    \quad \text{for arm $a$ and state $\mystate$},\\
\ValOut(\type)
    &= \max_{a\in[n]}\; \E_{\mystate\sim\prior}\sbr{\Val_{\mystate}\rbr{ a,\type }}.
\end{align*}

\begin{remark}
Our analysis also accommodates an arbitrary known $\ValOut: \types\to[0,1]$, like in the homogeneous case, see \Cref{rem:model-outside}. All results carry over with minimal modifications.
\end{remark}


To extend the BIR property, we focus on Bayesian-expected utility from truthful reporting. Let $\truE{t}$ be the event that all agents $s\leq t-1$ have reported their private type truthfully:
    $\reptype^{(s)} = \subtype^{(s)}$.
Let
    $\agentE\sbr{\cdot \mid \mE}$
denote expectation with respect to agents' Bayesian beliefs given event $\mE$.%
\footnote{We use non-standard notation for the expectation to emphasize that it only needs to be well-defined under event~$\mE$, even though it may be ill-defined unconditionally.}

\begin{definition}\label[definition]{def:BIR}
The mechanism is BIR if for each round $t$,
\begin{align}\label{eq:def:BIR}
    \agentE\sbr{\Val_{\truestate}\rbr{a_t,\type}
        \mid \truE{t} \text{ and } \type_t = \type}
    \geq \ValOut(\type)
    \qquad\text{for any type $\type \in \types$}.
\end{align}
The mechanism is called BIR on a given set $S$ of rounds if \eqref{eq:def:BIR} holds for each round $t\in S$.
\end{definition}

\begin{remark}
We require the BIR property to hold for all agent types, even though it may be more difficult to satisfy for some types than for some others. Indeed, if some agent type chooses not to participate, the score estimates $\estF$ computed by the mechanism would likely suffer from selection bias. Technically, the BIR property can be ensured w.l.o.g. for every agent (at the expense of estimation quality) simply by choosing this agent's outside option deterministically.
\end{remark}

We also require the mechanism to (weakly) incentivize truthful reporting of private types, as captured by the following definition.

\begin{definition}\label[definition]{def:BIC}
The mechanism is called \emph{Bayesian Incentive Compatible}  (\emph{BIC})
if for any round~$t$, type
    $\type\in\types$,
and private type $\subtype'\in\subtypes$ it holds that
\begin{align}\label{eq:def:BIC}
\agentE\sbr{\Val_{\truestate}\rbr{a_t,\type}
    \mid \truE{t},\; \type_t = \type}
\geq \agentE\sbr{\Val_{\truestate}\rbr{a_t,\type}
    \mid \truE{t},\; \pubtype^{(t)} = \pubtype,\; \reptype^{(t)} = \subtype'}.
\end{align}
The mechanism is called BIC on a given set $S$ of rounds if \eqref{eq:def:BIC} holds for each round $t\in S$.
\end{definition}

Given a BIC mechanism, it is a (weakly) best response for each agent to report the private type truthfully, assuming the previous agents did so. We assume that all agents report truthfully. 

\begin{remark}
Private type elicitation does not have to be BIC. However, mechanisms in our model need to interpret (mis)reported private types, one way or another, and the  BIC constraint is one standard way to proceed.
Note that restricting to BIC mechanisms and truthful reporting is without loss of generality according to Myerson's revelation principle \citep{Myerson}.
\end{remark}

\xhdr{Recommendation policies}
are redefined to also input an observed type. Formally, a recommendation policy $\policy$ with signal $\signal$ is a distribution over arms, parameterized by the signal and the observed type $\type\in\types$. We denote this distribution as $\policy(\signal,\type)$. (Recall that signal is defined as per \Cref{def:policy}.) Accordingly, the Bayesian-expected utility of policy $\policy$ is redefined as
\begin{align}\label{def:policy-shorthand-heterogeneous}
\Val_{\prior}(\policy,\type)
    := \E_{(\mystate,\signal)\sim\jointP}\;\;
    \E_{a\sim\policy(\signal,\,\type)}\;
    \Val_{\mystate}\rbr{a,\type },
\quad\text{where }
    \type\in\types.
\end{align}

The notions of BIR and BIC carry over in a natural way. The policy is called \emph{BIR} if any agent weakly prefers reporting truthfully and following the policy compared to the outside option:
\begin{align}\label{def:policy-BIR}
\Val_{\prior}(\policy,\type)
\geq \ValOut(\type)
\qquad\forall
    \type 
        \in \types.
\end{align}

\noindent The policy is called \emph{BIC} if for any type
    $\type = (\pubtype,\subtype) \in \types$
and any observed type
    $\obstype = (\pubtype,\reptype) \in \types$
it holds that
\begin{align}\label{def:policy-BIC}
\Val_{\prior}(\policy,\type)
\geq
\E_{(\mystate,\signal)\sim\jointP}\;\;
    \E_{a\sim\policy\rbr{\signal,\,\obstype}}\;
    \Val_{\mystate}\rbr{a,\type}.
\end{align}

The mechanism can be represented as a collection of recommendation policies $\rbr{\policy_t:\,t\in[T]}$, where policy $\policy_t$ inputs history $\history_t$ as a signal and determines the sampling distribution as
\begin{align}\label{eq:prelims-policy-t-heterogeneous}
\sampleD_t = \policy_t\rbr{\history_t,\,\obstype^{(t)}}
    \quad\text{for each round $t$}.
\end{align}
The mechanism is BIR  (resp., BIC) in a given round $t$ if and only if policy $\policy_t$ is BIR  (resp., BIC).

\subsection{Our results}


We obtain matching upper and lower bounds like in \pref{sec:single}, but with more technicalities and new benchmark, mechanism, and analysis. This novelty is necessitated by the presence of multiple private types and the BIC condition which binds the types together. Indeed, without private types (formally, $|\subtypes|=1$), the BIC condition vanishes, and the public types are not bound to one another by the BIR condition alone. Then one can treat each public type separately, and use the mechanism from \cref{sec:single} tailored to this type.

\xhdr{Benchmark.} Our bounds are expressed in terms of the following benchmark:
\begin{align}\label{eq:worst-bench}
\worstbench
     = \inf_{\text{BIC, BIR, state-aware $\types$-policies $\policy$}} \;\;
        \max_{\mystate\in\support(\prior),\,a\in[n]}\;\;
        \max_{\text{types }\type\in\types} \;
        \frac{1}{\policy_a(\mystate,\type)}.
\end{align}
Here, $\types$-policies are recommendation policies with type set $\types$, as defined above.

This benchmark generalizes \eqref{eq:bench-single-generic}, the benchmark for the homogeneous case, taking the $\max$ over all types inside the expression. In particular, \eqref{eq:bench-single-generic} can be written as $\bench(\prior,\{\type\})$, where $\type$ is the unique type in the homogeneous case. In fact, the general benchmark admits a lucid alternative characterization via the worst type:
\begin{align}\label{eq:worst-bench-alt}
\worstbench
     = \max_{\type\in\types} \bench(\prior,\{\type\}).
\end{align}
However, our analysis entirely relies on the original formulation \eqref{eq:worst-bench}.

\begin{proofof}[\refeq{eq:worst-bench-alt}]
Denote the right-hand side of \refeq{eq:worst-bench-alt} with $\worstbenchhat$. First, it is easy to observe that $\worstbenchhat\leq\worstbench$, since $\worstbenchhat$ only requires BIR without incentive constraints. Next, we show that for any BIR policy $\policy$ with worst-case benchmark value $\worstbenchhat$, we can design a BIR and BIC policy that matches the same worst-case benchmark value.

In particular, consider another policy $\hat{\policy}$ which for any type $\type\in\types$, offers the utility-maximizing option among the ``per-type" policies $\policy(\,\cdot,\type')$, $\type'\in\types$. That is, for all states $\mystate$ and types $\type$,
\begin{align*}
\hat{\policy}(\mystate,\type)
:= \policy(\mystate,\type^*(\type)),
\text{ where }
\type^*(\type) \in
    \argmax_{\type'\in\types}
        \Val_{\prior}\rbr{\policy(\,\cdot\,,\type'),\type},
\end{align*}
with ties broken in some fixed way. Policy $\hat{\policy}$ is BIC by design, because
$\Val_{\prior}\rbr{\hat{\policy},\type}
       \geq \Val_{\prior}\rbr{\policy(\,\cdot\,,\obstype),\type}$
for any observed type $\obstype$.
Plugging in $\obstype = \type$, we see that
\begin{align*}
\Val_{\prior}(\hat{\policy},\type)
\geq \Val_{\prior}(\policy,\type)
\geq \ValOut(\type),
\end{align*}
where the last inequality holds by the BIR property of $\policy$. So, policy $\hat{\policy}$ is BIR.
Finally, the minimum sampling probability is the same for both policies:
$$\min_{\mystate\in\support(\prior),\,a\in[n],\, \text{types }\type\in\types}\;\;
\policy_a(\mystate,\type)
= \min_{\mystate\in\support(\prior),\,a\in[n],\, \text{types }\type\in\types}\;\;
\hat{\policy}_a(\mystate,\type).
 \qquad\qedhere $$

\end{proofof}


To state the suitable non-degeneracy conditions, let us extend the best-arm notation from \cref{sec:single}. Fix type $\type = (\pubtype,\reptype)\in\types$.
Let
    $\bestA[\mystate,\type] \in \argmax_{a\in [n]} \Val_{\mystate}(a,\type)$
be the best arm for a given state $\mystate$. The Bayesian-expected reward of the best arm is then
\begin{align}\label{eq:hetero-bestV}
\bestVal(\type)
    &:= \E_{\mystate\sim\prior}
            \sbr{\Val_{\mystate}\rbr{\bestA[\mystate,\type],\,\type }}
\end{align}
If the best arm is computed for the reported type $\reptype\in\subtypes$ (which is not necessarily reported truthfully), its Bayesian-expected reward becomes
\begin{align}\label{eq:worst-bestV-rep}
\bestVal(\type,\reptype)
    &:= \E_{\mystate\sim\prior}
            \sbr{\Val_{\mystate}\rbr{\bestA[\mystate,\type'],\,\type }},
            \quad\text{where }\type' = (\pubtype,\reptype).
\end{align}
Thus, the non-degeneracy condition is:
\begin{align}\label{eq:worst-nondegen}
\bestVal(\type) > \ValOut(\type)
    \text{ and }
\bestVal(\type) > \bestVal(\type,\reptype)
\qquad\forall\type\in\types,\,\reptype\in\subtype.
\end{align}

\begin{remark}\label{rem:worst-bestA}
We can interpret the best arm $\bestA[]=\bestA[\truestate,\type]$ as a state-aware recommendation policy, called the \emph{best-arm policy}. Then condition \eqref{eq:worst-nondegen} states that $\bestA[]$ is strictly BIR and strictly BIC.
\end{remark}

\begin{claim}\label[claim]{cl:hetero-finite}
Assume \eqref{eq:worst-nondegen}. Then $\worstbench<\infty$.
\end{claim}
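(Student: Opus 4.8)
The plan is to adapt the proof of \cref{cl:single-finite} to the heterogeneous setting, using as a witness for \eqref{eq:worst-bench} the natural type-dependent generalization of the $(\eps,\truestate)$-greedy policy. For $\eps\in[0,1)$, let $\policy^\eps$ be the state-aware policy that, on signal $\signal=\truestate$ and observed type $\type$, with probability $\eps$ plays an arm uniformly at random and with probability $1-\eps$ plays the best arm $\bestA[\truestate,\type]$. By construction $\policy^\eps_a(\mystate,\type)\ge \eps/n$ for every arm $a$, state $\mystate$, and type $\type$; hence it suffices to exhibit a single $\eps>0$ for which $\policy^\eps$ is both BIR and BIC, since such a policy certifies $\worstbench\le n/\eps<\infty$ in \eqref{eq:worst-bench}.

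I would first dispatch BIC, which in fact holds for \emph{every} $\eps\in[0,1)$. Fix $\type=(\pubtype,\subtype)$ and a report $\reptype\in\subtypes$, and put $\type'=(\pubtype,\reptype)$. Substituting $\policy^\eps$ into the two sides of \eqref{def:policy-BIC} and using $\signal=\truestate$: the uniform-exploration contribution equals $\tfrac{\eps}{n}\,\E_{\mystate\sim\prior}\!\sbr{\sum_{a\in[n]}\Val_\mystate(a,\type)}$ on both sides (exploration is uniform and independent of the report), so it cancels, and what remains is
\begin{align*}
(1-\eps)\,\bestVal(\type)\;\ge\;(1-\eps)\,\bestVal(\type,\reptype),
\end{align*}
by the definitions \eqref{eq:hetero-bestV} and \eqref{eq:worst-bestV-rep}. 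Since $\eps<1$ this is exactly the second part of the non-degeneracy assumption \eqref{eq:worst-nondegen}, $\bestVal(\type)\ge\bestVal(\type,\reptype)$.

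Next I would handle BIR by a continuity-in-$\eps$ argument. For state-aware $\policy^\eps$ one has $\Val_\prior(\policy^\eps,\type)=(1-\eps)\,\bestVal(\type)+\tfrac{\eps}{n}\,\E_{\mystate\sim\prior}\!\sbr{\sum_{a\in[n]}\Val_\mystate(a,\type)}$, which converges to $\bestVal(\type)$ as $\eps\to 0$; since $\bestVal(\type)>\ValOut(\type)$ by \eqref{eq:worst-nondegen}, there is $\eps_\type>0$ with $\Val_\prior(\policy^\eps,\type)\ge\ValOut(\type)$ for all $\eps\le\eps_\type$. Because $\types$ is finite, $\eps:=\min_{\type\in\types}\eps_\type>0$ makes $\policy^\eps$ simultaneously BIR for all types and (by the previous paragraph) BIC, which finishes the proof.

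This is essentially routine, and the only real care-point — rather than an obstacle — is the BIC step: one must verify that under a state-aware policy the exploration terms on the two sides of \eqref{def:policy-BIC} are genuinely identical, so that BIC collapses to the clean comparison $\bestVal(\type)\ge\bestVal(\type,\reptype)$ with no smallness requirement on $\eps$. Equivalently, the whole argument can be phrased as: the best-arm policy $\bestA[\truestate,\type]$ is strictly BIR and strictly BIC (cf.\ \cref{rem:worst-bestA}), and mixing in an $\eps$-fraction of uniform exploration perturbs the relevant Bayesian-expected utilities continuously in $\eps$, so strict inequalities survive for small $\eps>0$ — the explicit computation merely shows the threshold on $\eps$ for BIC is unnecessary.
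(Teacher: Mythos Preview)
Your proposal is correct and follows essentially the same approach as the paper: exhibit the type-dependent $(\eps,\truestate)$-greedy policy as a witness in \eqref{eq:worst-bench}, and use the non-degeneracy condition \eqref{eq:worst-nondegen} to show it is BIR and BIC for some $\eps>0$. The paper's proof is a two-line sketch that asserts exactly this (cf.\ \cref{rem:worst-bestA}); you have simply filled in the details, including the nice observation that BIC in fact holds for all $\eps\in[0,1)$ because the uniform-exploration terms cancel.
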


\begin{proof}
By \eqref{eq:worst-nondegen},
the \greedy{\eps}{\truestate} policy is BIC and BIR for some $\eps>0$.%
\footnote{As in \cref{sec:single}, \greedy{\eps}{\truestate} chooses the best-arm policy in the exploitation branch.}
This policy satisfies the conditions under the $\inf$ in \eqref{eq:worst-bench}, and has a finite ``benchmark value", namely $\tfrac{1}{\eps}$.
\end{proof}

\xhdr{Lower bound.} The lower bound carries over from \cref{thm:LB-single}, and is proved using the same techniques. The novelty is in formulating the definitions so that the theorem and the techniques indeed carry over. We defer the proof to \cref{apx:LB}.

\begin{theorem}\label{thm:worst-LB}
The statement in \cref{thm:LB-single} holds for benchmark $\worstbench$.
\end{theorem}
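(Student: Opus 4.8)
The plan is to mimic the proof of \cref{thm:LB-single} almost verbatim, since \cref{thm:worst-LB} asserts exactly the same statement with $\bench(\prior)$ replaced by $\worstbench$. Two ingredients enter that proof: the extended IPS lower bound (\cref{lm:prelims-ips}), which is stated in a type-agnostic way and applies unchanged; and the observation that if a mechanism were to sample some arm too rarely (in expectation) under every state, one could build a state-aware BIR policy contradicting the definition of the benchmark. The only thing that needs re-examination is this second step, because the heterogeneous benchmark \eqref{eq:worst-bench} takes an $\inf$ over policies that are \textbf{both} BIC and BIR, and a $\max$ over states, arms \emph{and types}, so the simulated policy must be shown to satisfy both incentive constraints and must witness a small sampling probability for some triple $(\mystate_0,a_0,\type_0)$.

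Concretely, I would proceed as follows. Fix a mechanism \mech that is non-data-adaptive over the main stage, and fix $T_0 \le T/2$ with $T = \Omega(\worstbench^{-2})$. As in the homogeneous proof, for each state $\mystate\in\supp(\prior)$ let $\adv_{\mystate}$ be the warm-up-stage adversary that draws outcomes i.i.d.\ from $\mystate(a,\pubtype)$ for each arm $a$ and public type $\pubtype$, and draws each agent's type from the fixed type distribution $\typeDist$; together with \mech this determines the (random) sampling distributions $\sampleD_t$, $t\in S$, for the main stage. Here $\sampleD_t = \policy_t(\history_{T_0+1},\obstype^{(t)})$ depends on the round-$t$ observed type. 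I claim there exist a state $\mystate_0$, an arm $a_0$, and a type $\type_0 = (\pubtype_0,\reptype_0)$ such that the expected number of main-stage rounds in which an agent of observed type $\type_0$ is assigned arm $a_0$, divided by the expected number of main-stage rounds with observed type $\type_0$, is at most $1/\worstbench$. Suppose not: then for every $(\mystate,a,\type)$ this conditional ratio exceeds $1/\worstbench$. Define a state-aware policy $\policy$ that, on input signal $\mystate$ and observed type $\type$, simulates \mech under $\adv_{\mystate}$, draws a uniformly random main-stage round \emph{conditioned on that round's observed type equalling $\type$}, and outputs the arm \mech assigned there. Because \mech is BIR and BIC on the main stage for every round, $\policy$ inherits both properties (its per-arm play is a mixture over BIR-and-BIC single-round behaviors of \mech, and expectations are linear), so $\policy$ is admissible in \eqref{eq:worst-bench}; but $\policy_a(\mystate,\type) > 1/\worstbench$ for all $(\mystate,a,\type)$, contradicting the definition of $\worstbench$ as an infimum. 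This fixes $(\mystate_0,a_0,\type_0)$.

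From there the argument is identical in form to \cref{sec:homo-proofs-LB}: construct the global adversary $\adv$ by using $\adv_{\mystate_0}$ on the warm-up stage and then the main-stage adversary $\adv_S(\E[\sampleD_t]: t\in S)$ from \cref{lm:prelims-ips}, but with the twist that we only ``charge'' the rounds whose observed type is $\type_0$ — i.e.\ we apply the IPS lower bound restricted to the subset $S' = \{t\in S : \obstype^{(t)} = \type_0\}$ (legitimate by \cref{rem:prelims-S}), tracking $q_t := \E[\sampleD_t(a_0)]$ over $t\in S'$. Splitting $S'$ into low-probability rounds $\loS = \{t : q_t \le 1/\sqrt{|S'|}\}$ and high-probability rounds $\hiS = \{t : q_t > 1/\sqrt{|S'|}\}$, using $\sum_{t\in\hiS} q_t \le \E[N_{\mystate_0,\type_0}(a_0)] \le |S'|/\worstbench$, and applying the harmonic-mean-versus-arithmetic-mean bound $\sum_{t\in\hiS} 1/q_t \ge |\hiS|^2/\sum_{t\in\hiS} q_t \ge |\hiS|^2\cdot\worstbench/|S'|$, we get $\err(\mech\mid\adv) \ge \Omega(\worstbench/|S'|) \ge \Omega(\worstbench/T)$, using $|S'| \le |S| \le T$. (One should note $|S'|$ is itself random, but since \mech is non-data-adaptive the observed types in the main stage are determined by $\adv$ alone, so $S'$ is a fixed set once $\adv$ is fixed; alternatively condition on the type realizations, which the adversary controls.)

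The main obstacle is the careful bookkeeping in the contradiction step: one must make sure the simulated policy $\policy$, which conditions on the observed type of the sampled round, is genuinely BIR \emph{and} BIC for that type in the sense of \eqref{def:policy-BIR}–\eqref{def:policy-BIC}, and that there is no circularity — \mech's main-stage policies are non-data-adaptive, so they are measurable functions of $\history_{T_0+1}$ only, which is what lets the ``simulate-and-copy'' construction produce a legitimate signal-indexed policy with signal $\truestate$ (via $\adv_{\truestate}$). A secondary subtlety is that the per-type normalization (dividing by the expected count of observed-type-$\type_0$ rounds) is exactly what makes the ratio comparable to $1/\policy_a(\mystate,\type)$ in \eqref{eq:worst-bench}; getting this normalization right, and handling the possibility that some observed type has vanishingly small probability under $\typeDist$ (in which case it cannot be the witnessing type for a $\Theta(1/T)$ lower bound anyway, or is excluded by the standing assumption that all types occur), is the only place where the heterogeneous case genuinely differs from the homogeneous one.
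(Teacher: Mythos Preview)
Your contradiction step (finding the witnessing triple $(\mystate_0,a_0,\type_0)$ by simulating \mech into a state-aware BIC/BIR policy) is fine and matches the paper's reasoning. The gap is in how you build the main-stage adversary.

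You let types arrive according to $\typeDist$ and then restrict the IPS lower bound to the subset $S'=\{t\in S:\obstype^{(t)}=\type_0\}$, concluding $\err(\mech\mid\adv)\ge\Omega(\worstbench/|S'|)$. But \cref{lm:prelims-ips} applied on $S'$ bounds the error for the \emph{wrong target}, namely $\advF[S'](a)=\tfrac{1}{|S'|}\sum_{t\in S'}f(\outcome_{a,t})$, whereas the mechanism is graded on $\advF[S](a)$ over the full main stage. \cref{rem:prelims-S} only says the IPS machinery works for any subset; it does not let you transfer a lower bound for one target to another. If instead you apply \cref{lm:prelims-ips} on all of $S$ and only ``charge'' the rounds in $S'$, the resulting bound is $\Omega\bigl(\tfrac{1}{|S|^2}\cdot|S'|\cdot\worstbench\bigr)$, which carries an unwanted factor of $|S'|/|S|\approx\typeDist(\type_0)$. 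Your closing parenthetical---that a rare type ``cannot be the witnessing type for a $\Theta(1/T)$ lower bound anyway''---is precisely where the argument breaks: a rare type \emph{can} witness the bound, because the adversary is free to make it not rare.

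The paper exploits exactly this freedom: since the adversary controls the type sequence, it simply sends \emph{only} type $\type_0$ in every round of the main stage (types during warm-up are still drawn from the prior so that the simulation argument goes through). With $S'=S$, the problem collapses to the homogeneous case and the proof of \cref{thm:LB-single} applies verbatim with $\bench(\prior)$ replaced by $\worstbench$. This one-line move replaces all of your $S'$ bookkeeping and avoids the target-mismatch issue entirely.
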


\xhdr{Positive results.}
Our mechanism for the main stage is more complex compared to the one in  \cref{sec:single}. Exploration probabilities now depend on the arm and the agent type, and are coordinated across types due to the BIC constraint. We specify them indirectly, via a state-aware policy that optimizes the benchmark. {Specifically, let $\benchpolicy$ be a state-aware policy which optimizes the $\inf$ in \refeq{eq:worst-bench}; such policy exists as an optimizer of a continuous function on a compact space. The true state $\truestate$ is estimated via the maximum likelihood estimator (MLE) given the data observed in the warm-up stage,
denote it by $\estimatestate \in \supp(\prior)$.}

Our mechanism is defined as follows. We choose uniformly between $\benchpolicy$ and the best-arm policy $\bestA[]$. Formally: in each round $t$ of the main stage, the sampling distribution $\sampleD_t$ is the average between two distributions:
    $\benchpolicy(\estimatestate,\type)$
and
    $\bestA[\estimatestate,\type]$,
where
    $\type = \rbr{\pubtype^{(t)},\,\reptype^{(t)}}$
is the observed type. For estimator $\estFreq$, we use the IPS estimator \eqref{eq:IPS-defn}. This completes the description of the main stage. We denote our mechanism as $\mechname(T_0,\benchpolicy,\bestA[])$.

\begin{theorem}\label{thm:mech-hetero}
Assume \eqref{eq:worst-nondegen}. Let \mech be the mechanism $\mechname(T_0,\benchpolicy,\bestA[])$. Then:
\begin{flalign*}
\;\;\;\text{(a)} \qquad & \err\rbr{\mech\mid\adv} \leq \frac{2\cdot \bench(\prior,\types)}{T-T_0}
\qquad\text{for any adversary \adv}. &&
\end{flalign*}
\begin{itemize}

\item[(b)] Suppose the following holds for the warm-up stage:
\begin{align}\label{eq:thm:mech-single-main-prob}
\probE
    \sbr{\text{each (arm, public type) pair appears in $\geq N_{\prior}$ rounds}}
    \geq 1-\initProb,
\end{align}
here
    $N_{\prior}<\infty$ and $\initProb\geq 0$
are some parameters determined by the prior $\prior$ and the utility structure. Then the mechanism is BIR and BIC over the main stage.

\end{itemize}
\end{theorem}

\begin{remark}
The structure of $\benchpolicy(\truestate,\cdot)$, the optimal state-aware policy, is not essential for our analysis. Nonetheless, we state it below for the sake of clarity. Consider a ``menu" of per-type optimal policies, which are the $(\eps, \truestate)$-greedy policies with tight BIR constraints for the respective type. Then each type is mapped to the policy from this menu that maximizes its utility. This mapping ensures truthful reporting of private types.
\end{remark}



\subsection{More details for the positive result (\cref{thm:mech-hetero})}
\label{sec:hetero-details}

The parameters in part (b) of \cref{thm:mech-hetero} are driven by the \emph{degeneracy gap}: the gap in the non-degeneracy condition \eqref{eq:worst-nondegen}, defined as follows:
\begin{align}\label{eq:worst-nondegen-gap}
\eta^* = \min\rbr{
    \min_{\type\in\types} \bestVal(\type) - \ValOut(\type),\;
    \min_{\type\in\types,\, \reptype\in\pubtypes}
        \bestVal(\type) - \bestVal(\type,\reptype)}.
\end{align}
Another key parameter, $\lr_{\min}(\prior)>0$, measures the difficulty of learning the true state $\truestate$ via MLE. We defer its definition to \refeq{eq:worst-keyParam}. Then $N_{\prior}$ is defined as follows:
\begin{align}\label{eq:worst-N}
N_{\prior}(\eta^*)
    = 1+\frac{1}{2\lr_{\min}(\prior)}\cdot \log\frac{4\,|\supp(\prior)|}{\eta^* \cdot \lr_{\min}(\prior)}.
\end{align}

\begin{theorem}\label{thm:worst-UB-details}
\cref{thm:mech-hetero} holds with parameters $\initProb = \eta^*/8$, where $\eta^*$ is the non-degeneracy gap  from \eqref{eq:worst-nondegen-gap},
and $N_{\prior} = N_{\prior}(\eta^*)$, as per \refeq{eq:worst-N}.
\end{theorem}

The specification of $\lr_{\min}(\prior)$ is rather lengthy, as it requires some notation regarding  log-likelihood ratios (LLR). Fix two distinct states $\mystate,\mystate'\in\support(\prior)$. For any arm $a$, type $\type = (\pubtype,\subtype)\in\types$ and outcome $\outcome\in\outcomes$, let
\begin{align*}
\llr_{a,\type,\outcome}\rbr{\mystate,\mystate'}
    := \log\frac{\mystate_{\outcome}(a,\pubtype)}{\mystate'_{\outcome}(a,\pubtype)}
\end{align*}
be the LLR between the two states. The corresponding KL-divergence is
\begin{align*}
\KL_{a,\type}\rbr{\mystate,\mystate'}
:=  \E_{\outcome\sim\mystate(a,\pubtype)}
            \llr_{a,\type,\outcome}\rbr{\mystate,\mystate'}
=   \KL\rbr{\mystate(a,\pubtype),\;\mystate'(a,\pubtype)}.
\end{align*}
Let is define the maximum change in LLR compared to its expectation,
\begin{equation}\label{eq:hetero-llmax}
\ll_{\max}\rbr{\mystate,\mystate'}
:=  \max_{a\in[n],\; \type\in\types,\;\outcome\in\outcomes}\;
    \abs{\llr_{a,\type,\outcome}\rbr{\mystate,\mystate'} - \KL_{a,\type}\rbr{\mystate,\mystate'}}
< \infty.
\end{equation}
The minimum positive KL-divergence is defined as
\begin{equation}\label{eq:hetero-klmin}
\KL_{\min}\rbr{\mystate,\mystate'}
:= \min_{a,\type} \KL_{a,\type}\rbr{\mystate,\mystate'}>0,
\end{equation}
where the $\min$ is over all pairs $(a,\type)\in [n]\times\types$ such that
    $\KL_{a,\type}\rbr{\mystate,\mystate'}>0$.
At least one such $(a,\type)$ pair must exist because the two states are distinct. Finally:
\begin{equation}\label{eq:worst-keyParam}
\lr_{\min}(\prior)
:= \min_{\text{states }\mystate,\mystate'\in\supp(\prior)}\;
        \rbr{\frac{\KL_{\min}(\mystate,\mystate')}{\ll_{\max}(\mystate,\mystate')}}^2
>0.
\end{equation}

\xhdr{Main stage: extensions.}
It may be desirable to use relaxed versions of the benchmark-optimizing policy $\benchpolicy$ and the best-arm policy $\bestA[]$. Indeed, the exact versions may be too difficult to compute or too complex to implement in practice; also, the prior $\prior$ or the utility structure may be not fully known to the mechanism.
We relax the benchmark-optimizing policy by allowing additive slack $\delta>0$, which shows up in the final regret bound. We relax the best-arm policy as an arbitrary state-aware policy $\hatpolicy$ which is BIR and BIC by some additive margin $\eta\leq \eta^*$; this margin only affects the number of warm-up samples.%

\begin{definition}
A recommendation policy is called \emph{$\eta$-BIR} (resp., \emph{$\eta$-BIC}) for some $\eta\geq 0$ if satisfies \refeq{def:policy-BIR} (resp., \refeq{def:policy-BIC}) with $+\eta$ added to the right-hand side.
\end{definition}

\begin{theorem}\label{thm:worst-UB-extension}
Assume \eqref{eq:worst-nondegen}. Let $\hatbenchpolicy$ be a state-aware policy which optimizes the $\inf$ in the benchmark \eqref{eq:worst-bench} up to an additive factor $\delta\geq 0$.
 Let $\hatpolicy$ be a state-aware policy which is $\eta$-BIR and $\eta$-BIC, for some $\eta\in (0,\eta^*]$. Then mechanism $\mechname(T_0,\hatbenchpolicy,\hatpolicy)$ satisfies part (b) in \cref{thm:mech-hetero}
with parameters
$\initProb = \eta/8$
and
    $N_{\prior} = N_{\prior}(\eta)$, and satisfies
\begin{align}\label{eq:thm:worst-UB}
\err\rbr{\mech\mid\adv} \leq \frac{2\,\worstbench + 2\delta}{T-T_0}
\qquad\text{for any adversary \adv}.
\end{align}
\end{theorem}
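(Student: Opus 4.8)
The plan is to follow the blueprint of \cref{thm:mech-single-details}, but replace the Azuma--Hoeffding concentration of the empirical state $\approxstate$ with a maximum-likelihood argument, since here the estimate $\estimatestate$ must land in $\supp(\prior)$ and the natural control is over log-likelihood ratios. Part (c) is immediate by construction: $\mechname(T_0,\hatbenchpolicy,\hatpolicy)$ uses the same averaged policy in every round of the main stage and estimates via IPS, so it is stationary, non-data-adaptive, and uses IPS. For the MSE bound \eqref{eq:thm:worst-UB}, note that in each main-stage round the sampling distribution is the average of $\hatbenchpolicy(\estimatestate,\type)$ and $\hatpolicy(\estimatestate,\type)$, so $\sampleD_t(a)\ge \tfrac12\,\hatbenchpolicy_a(\estimatestate,\type)\ge \tfrac{1}{2(\worstbench+\delta)}$ for every arm $a$ and every observed type, because $\hatbenchpolicy$ attains the $\inf$ in \eqref{eq:worst-bench} up to $\delta$ and hence has all sampling probabilities at least $1/(\worstbench+\delta)$. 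Plugging this lower bound on $\sampleD_t(a)$ into the IPS guarantee \eqref{eq:IPS-UB} gives $\err(\mech\mid\adv)\le \frac{1}{|S|}\cdot 2(\worstbench+\delta) = \frac{2\worstbench+2\delta}{T-T_0}$, which is \eqref{eq:thm:worst-UB}. This direction requires no control over whether $\estimatestate$ is accurate --- the benchmark bound holds pointwise in the estimated state.

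The substantive part is (b): showing the mechanism is BIR and BIC over the main stage, given \eqref{eq:thm:mech-single-main-prob} with $N_\prior = N_\prior(\eta)$ and $\initProb = \eta/8$. First I would establish an MLE concentration lemma: conditioned on the true state being $\mystate$, and on the event that every (arm, public type) pair has been sampled at least $N_\prior(\eta)$ times in the warm-up stage, the MLE $\estimatestate$ equals $\mystate$ except with probability at most $\eta/8$. The argument is the standard union bound over the (finitely many) alternative states $\mystate'\in\supp(\prior)$: for a fixed $\mystate'\ne\mystate$, the log-likelihood ratio of the warm-up data accumulates a negative drift of rate at least $\KL_{\min}(\mystate,\mystate')$ per relevant sample while the per-sample fluctuation is bounded by $\ll_{\max}(\mystate,\mystate')$, so Azuma--Hoeffding on the martingale differences gives that $\mystate'$ beats $\mystate$ in likelihood with probability exponentially small in $N_\prior(\eta)\cdot\lr_{\min}(\prior)$; the precise form \eqref{eq:worst-N} of $N_\prior(\eta)$ is exactly what makes this union bound total at most $\eta/8$. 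Combining with \eqref{eq:thm:mech-single-main-prob} (which fails with probability $\initProb=\eta/8$), the MLE is correct with probability at least $1-\eta/4$ unconditionally over the agents' beliefs.

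Next I would push this through the incentive inequalities. Since $\benchpolicy$ (the exact benchmark optimizer) is $0$-BIR and $0$-BIC, and $\hatpolicy$ is $\eta$-BIR and $\eta$-BIC, the uniform average $\policy^{\mathrm{avg}}(\mystate,\type) := \tfrac12\hatbenchpolicy(\mystate,\type)+\tfrac12\hatpolicy(\mystate,\type)$ is $\tfrac{\eta}{2}$-BIR and $\tfrac{\eta}{2}$-BIC \emph{in the state-aware sense}, i.e. if the recommender knew $\truestate$. Wait --- $\hatbenchpolicy$ is only $\delta$-approximately benchmark-optimal, not necessarily BIR; but it \emph{is} BIR and BIC exactly, since the $\inf$ in \eqref{eq:worst-bench} is over BIR, BIC policies, so any $\delta$-optimizer is still feasible, hence BIR and BIC. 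So $\policy^{\mathrm{avg}}$ is genuinely $\tfrac{\eta}{2}$-BIR and $\tfrac{\eta}{2}$-BIC in the state-aware sense. In the actual mechanism, however, the recommender uses $\estimatestate$ rather than $\truestate$, so the agent's Bayesian-expected utility from the mechanism's round-$t$ policy differs from the state-aware value $\Val_\prior(\policy^{\mathrm{avg}},\type)$ only on the event $\{\estimatestate\ne\truestate\}$, which the agent assigns probability at most $\eta/4$, and on that event the utility gap is at most $1$. Therefore the realized Bayesian-expected utility is within $\eta/4 \le \eta/2$ of the state-aware value, and the $\tfrac{\eta}{2}$-slack in BIR/BIC absorbs this, yielding exact BIR and BIC in the main stage. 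The main obstacle is getting the bookkeeping of the three error budgets to line up --- the $\eta/8$ warm-up failure probability, the $\eta/8$ MLE failure probability, and the way these feed through the $\eta$-margins of $\hatpolicy$ after the factor-$\tfrac12$ averaging --- and in particular making sure the conditioning on the truthful-reporting event $\truE{t}$ and on $\type_t=\type$ in \eqref{eq:def:BIR} and \eqref{eq:def:BIC} does not interfere with the MLE concentration (it does not, because the warm-up data distribution given $\truestate=\mystate$ is unaffected by these conditioning events under truthful reporting). I would defer the detailed constants to an appendix.
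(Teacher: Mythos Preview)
Your proposal is correct and follows essentially the same approach as the paper: the MLE concentration via log-likelihood-ratio martingales and Azuma--Hoeffding (yielding $\probE[\estimatestate\neq\truestate]\le\eta/4$), the observation that $\hatbenchpolicy$ is feasible hence BIR/BIC so the averaged policy is $\tfrac{\eta}{2}$-BIR/BIC in the state-aware sense, and the pointwise-in-$\estimatestate$ IPS bound for \eqref{eq:thm:worst-UB} all match the paper's argument. One small bookkeeping point to tighten when you write it out: for BIC you must apply the $\eta/4$ correction on \emph{both} sides (truthful and misreport), so the loss is $2\cdot\eta/4=\eta/2$, exactly exhausting the slack --- the paper's chain of inequalities makes this explicit.
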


We recover \cref{thm:mech-hetero} since the best-arm policy is $\eta^*$-BIR and $\eta^*$-BIC by \eqref{eq:worst-nondegen-gap}.


\medskip

\begin{remark}\label{rem:full-support}
The ``full support" assumption stated in \cref{sec:model-hetero} can be removed, let us outline how. First, our analysis extends to a relaxed version of this assumption, when for any given pair $(a,\pubtype)\in [n]\times\pubtypes$, all outcome distributions $\mystate(a,\pubtype)$, $\mystate\in\support(\prior)$ have the same support, but this support may depend on the $(a,\pubtype)$ pair. Second, if the relaxed assumption is violated, then outcome distributions $\mystate(a,\pubtype)$ and $\mystate'(a,\pubtype)$ have different supports, for some states $\mystate,\mystate'\in\support(\prior)$, some arm $a$, and some and public type $\pubtype$. More specifically, there is an outcome $\outcome$ that lies in the support of $\mystate(a,\pubtype)$, but not in the support of $\mystate'(a,\pubtype)$. Then with sufficiently many samples from the
    $(a, \pubtype)$
pair, one would observe outcome $\outcome$ with high probability as long as it is in the support of
    $\truestate(a,\pubtype)$.
This would rule out either the case $\truestate=\mystate$ or the case $\truestate=\mystate'$. We take $N_\prior$ large enough for this to happen for every combination of $\outcome$, $a$, $\pubtype$, $\mystate$, $\mystate'$. Consider the set $\mystates_0$ of all states in $\support(\prior)$ that are not ``ruled out" after the warm-up stage; with high probability, the relaxed assumption holds with respect to this set.
\end{remark}

\xhdr{Warm-up stage}
is treated separately, as in \cref{sec:single}, either endogenously (via a BIR, BIC mechanism), or exogenously (\eg via paid volunteers). The sufficient amount of warm-up data depends only on the agent beliefs and the utility structure, whereas it bootstraps our mechanism to run for an arbitrarily large time horizon $T$. Endogenous data collection has been studied in prior work on incentivized exploration, for both public and private types: in some sense, one can explore all type-arm pairs that can possibly be explored \citep{ICexploration-ec15,Jieming-multitypes18}.



\subsection{Proof of the positive result (\cref{thm:worst-UB-extension})}
\label{sec:hetero-UB-proof}


First, we upper-bound the error probability for the MLE estimate, according to the agent's beliefs.
\begin{claim}\label[claim]{cl:hetero-UBpf-errorProb}
$\probE\sbr{\estimatestate\neq\truestate} \leq \eta/4$.
\end{claim}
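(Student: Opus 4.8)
The plan is to union-bound over alternative states and show that, conditionally on the true state, the maximum likelihood estimator $\estimatestate$ errs only with exponentially small probability once the warm-up stage has sampled every \emph{informative} (arm, public type) pair enough times. Fix a state $\mystate\in\supp(\prior)$ and condition on $\truestate=\mystate$. Let $G$ be the event that every (arm, public type) pair is sampled in at least $N_\prior$ rounds of the warm-up stage; by hypothesis \eqref{eq:thm:mech-single-main-prob} we have $\probE\sbr{G^c}\le\initProb=\eta/8$, so it suffices to bound $\probE\sbr{\cbr{\estimatestate\neq\mystate}\cap G \mid \truestate=\mystate}$. Writing the log-likelihood ratio of an alternative state $\mystate'\in\supp(\prior)\setminus\cbr{\mystate}$ against $\mystate$ over the warm-up rounds $s=1\LDOTS T_0$ as $L(\mystate',\mystate):=\sum_{s\le T_0}\llr_{a_s,\type_s,\outcome_s}\rbr{\mystate',\mystate}$, the MLE can prefer $\mystate'$ over $\mystate$ only when $L(\mystate',\mystate)\ge 0$; a union bound over the at most $\setsize{\supp(\prior)}$ such states reduces the task to bounding $\probE\sbr{\cbr{L(\mystate',\mystate)\ge 0}\cap G \mid \truestate=\mystate}$ for a fixed $\mystate'\neq\mystate$.

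I would work in the filtration generated by the warm-up interaction. Under agents' beliefs the warm-up adversary is stochastic, so given $\truestate=\mystate$ and the past, the outcome $\outcome_s$ is drawn from $\mystate(a_s,\pubtype^{(s)})$; hence the increment $\llr_{a_s,\type_s,\outcome_s}\rbr{\mystate',\mystate}$ has conditional mean $-\KL_{a_s,\type_s}\rbr{\mystate,\mystate'}\le 0$ and, by \eqref{eq:hetero-llmax}, stays within $\ll_{\max}\rbr{\mystate,\mystate'}$ of this mean. Rounds whose pair is \emph{non-informative}, i.e.\ $\mystate(a,\pubtype)=\mystate'(a,\pubtype)$, contribute exactly $0$ and may be discarded; since $\mystate\neq\mystate'$ there is at least one informative pair, and on $G$ each informative pair is sampled at least $N_\prior$ times, with per-round divergence at least $\KL_{\min}\rbr{\mystate,\mystate'}$ as in \eqref{eq:hetero-klmin}. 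Applying Hoeffding's lemma to the centered increments, for a tilting parameter $\lambda\in\rbr{0,\,2\KL_{\min}/\ll_{\max}^2}$ the process $\exp\rbr{\lambda\sum_{s\le t}\llr_{a_s,\type_s,\outcome_s}\rbr{\mystate',\mystate}+\sum_{s\le t}g_s}$, with $g_s=\lambda\KL_{a_s,\type_s}\rbr{\mystate,\mystate'}-\tfrac12\lambda^2\ll_{\max}^2\rbr{\mystate,\mystate'}\ge 0$ on informative rounds and $g_s=0$ otherwise, is a supermartingale started at $1$. On $G\cap\cbr{L(\mystate',\mystate)\ge 0}$ one has $\sum_{s\le T_0}g_s\ge N_\prior\rbr{\lambda\KL_{\min}-\tfrac12\lambda^2\ll_{\max}^2}$, so Markov's inequality and optimizing at $\lambda=\KL_{\min}/\ll_{\max}^2$ give
\begin{align*}
\probE\sbr{\cbr{L(\mystate',\mystate)\ge 0}\cap G \,\mid\, \truestate=\mystate}
\;\le\; \exp\rbr{-\tfrac12\,N_\prior\rbr{\tfrac{\KL_{\min}(\mystate,\mystate')}{\ll_{\max}(\mystate,\mystate')}}^{2}}
\;\le\; \exp\rbr{-\tfrac12\,N_\prior\,\lr_{\min}(\prior)},
\end{align*}
using the definition \eqref{eq:worst-keyParam} of $\lr_{\min}(\prior)$.

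Summing this over the at most $\setsize{\supp(\prior)}$ alternative states and then over $\mystate\sim\prior$, and plugging in $N_\prior=N_\prior(\eta)$ from \eqref{eq:worst-N} (chosen precisely so the resulting bound is at most $\eta/8$), gives $\probE\sbr{\cbr{\estimatestate\neq\truestate}\cap G}\le \eta/8$; adding $\probE\sbr{G^c}\le\eta/8$ yields $\probE\sbr{\estimatestate\neq\truestate}\le\eta/4$. The main obstacle I anticipate is the data-adaptivity of the warm-up stage: the pairs $\rbr{a_s,\pubtype^{(s)}}$, and hence the event $G$ itself, may depend on earlier outcomes, so $L(\mystate',\mystate)$ is a martingale process rather than a sum of i.i.d.\ terms and $G$ is not measurable with respect to the pair sequence alone. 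The exponential-tilting supermartingale is designed to sidestep exactly this, since it relies only on the one-step conditional MGF bound and on a per-round lower bound on the drift that holds on $G$, never on independence across rounds; it also ensures the rate scales with $N_\prior$ (number of samples of a relevant pair) rather than with $T_0$, because the non-informative rounds drop out.
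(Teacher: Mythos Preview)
Your plan is essentially the paper's: condition on $\truestate=\mystate$, union-bound over alternative states $\mystate'$, control the log-likelihood ratio once every informative (arm, public type) pair has $\ge N_{\prior}$ samples, and add back $\probE\sbr{G^c}\le\eta/8$. The one genuine technical difference is how the \emph{random} number of informative rounds is handled. The paper re-indexes informative rounds as $\hat X_j$, applies Azuma--Hoeffding for each fixed count $\tau$, and then union-bounds over all $\tau\in[N_{\prior},T_0]$; this union is exactly what produces the prefactor $\tfrac{1}{2\lr_{\min}(\prior)}$ in the paper's bound and the $\log\tfrac{1}{\lr_{\min}(\prior)}$ term inside $N_{\prior}(\eta)$ in \eqref{eq:worst-N}. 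Your exponential-tilting supermartingale absorbs the random count in one shot (no union over $\tau$), which is cleaner, but it delivers the weaker exponent $-\tfrac12\,N_{\prior}\,\lr_{\min}(\prior)$ rather than the paper's $-2(N_{\prior}-1)\,\lr_{\min}(\prior)$. As a result, your last step---that the paper's $N_{\prior}(\eta)$ from \eqref{eq:worst-N} was ``chosen precisely'' so your bound is $\le\eta/8$---does not close as written: with your exponent you would need roughly $N_{\prior}\ge \tfrac{2}{\lr_{\min}(\prior)}\log\tfrac{8\,|\supp(\prior)|}{\eta}$, about a factor of four larger than \eqref{eq:worst-N}. The argument is sound modulo this constant-factor bookkeeping.
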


The proof is a lengthy argument about log-likelihood ratios, deferred to \cref{apx:MLE}.

Next we show that the mechanism is BIR and BIC throughout the main stage.
Fix types
    $\type = \rbr{\pubtype,\subtype},\; \type' = \rbr{\pubtype,\reptype}\in \types$.
Using the notation from \cref{def:BIR} and \cref{def:BIC}, denote
\begin{align}\label{eq:hetero-UB-proof-Val}
\Val_{\mech}(\mystate,\type,\type')
:= \agentE\sbr{\Val_{\truestate}\rbr{a_t,\type}
    \mid \truE{t},\; \pubtype^{(t)} = \pubtype,\; \reptype^{(t)} = \reptype,\; \estimatestate=\mystate}
\end{align}
for each round $t$ in the main stage and each state $\mystate\in\support(\prior)$. In words, this is the Bayesian-expected utility of the agent type $\type$ if the reported type is $\reptype$ and the estimated state is $\estimatestate=\mystate$, assuming truthful reporting in the previous rounds. By specification of the mechanism, this quantity is the same for all rounds in the main stage. Write
    $\Val_{\mech}(\mystate,\type)= \Val_{\mech}(\mystate,\type,\type)$
for brevity.

To verify the BIR constraints, note that
the utility of agent type $\type$ from the mechanism is
\begin{align*}
\Val_{\mech}(\estimatestate,\type)
&\geq (1-\Pr\sbr{\estimatestate\neq\truestate}) \cdot
\Val_{\mech}(\truestate,\type)\\
&\geq  \Val_{\mech}(\truestate,\type) - \Pr\sbr{\estimatestate\neq\truestate}\\
&\geq \ValOut(\type).
\end{align*}
where the last inequality holds since $\Pr\sbr{\estimatestate\neq\truestate}\leq \frac{\eta}{4}$
and mechanism $\mech$ chooses an $\eta$-BIR
policy with probability $\frac{1}{2}$.

To verify the BIC constraints:
\begin{align*}
\Val_{\mech}(\estimatestate,\type)
&\geq \Val_{\mech}(\truestate,\type) - \Pr\sbr{\estimatestate\neq\truestate}\\
&\geq \, \Val_{\mech}(\truestate,\type,\type') + \frac{1}{2}\cdot\eta - \Pr\sbr{\estimatestate\neq\truestate} \\
&\geq \, \Val_{\mech}(\estimatestate,\type,\type') + \frac{1}{2}\cdot\eta - 2\Pr\sbr{\estimatestate\neq\truestate}\\
&\geq \Val_{\mech}(\estimatestate,\type,\type').
\end{align*}
The second inequality holds since mechanism $\mech$ chooses an $\eta$-BIC
policy with probability $\frac{1}{2}$.


Finally, we derive the error bound for statistical estimation. Invoking inequality \eqref{eq:IPS-UB},
we have
\begin{align*}
\err\rbr{\mech\mid\adv}
&\leq \max_{a\in[n]}\frac{1}{(T-T_0)^2}\sum_{t\in [T_0,T]} \frac{1}{\sampleD_t(a)} \\
&\leq \frac{1}{(T-T_0)^2}\;\max_{a\in[n]} \max_\mystate
\sum_{\type}\sum_{t\in [T_0,T]}
\frac{\ind{\type_t = \type}}{\frac{1}{2}\cdot\hatbenchpolicy_a(\mystate,\type)}\\
&\leq \frac{2\,\worstbench + 2\delta}{T-T_0}.
\end{align*}

\section{Heterogeneous agents with estimated type frequencies}
\label{app:multi}

\newcommand{\diff}{\term{diff}}

This section focuses on type frequencies $\typeFreq$  in the main stage:
\begin{align}\label{eq:type-rates}
\typeFreq(\theta \mid \adv)
    = \frac{1}{T-T_0}\,\sum_{t>T_0}\;
    \ind{\type_t= \type}
        \qquad\forall\type \in\types.
\end{align}
The mechanism is initialized with estimated type frequencies $\typeFreqEst$: a distribution over types that estimates $\typeFreq(\cdot \mid \adv)$ for the (actual) adversary $\adv$. Such estimates can be constructed exogenously by a clinical trial (via surveys, medical history, or data from the past clinical trial) or an online platform (via user profiles and their interaction history).

The goal is to mitigate the influence of rare-but-difficult types. The benchmark $\worstbench$ in \cref{app:worst} (see \refeq{eq:worst-bench-alt}) is determined  by the ``bad" types (ones with large benchmark value) that are difficult for the mechanism to handle. To mitigate this when the bad types are rare, we define an alternative benchmark which replaces the $\max$ over all types with an expectation over $\typeFreq$:
\begin{align}\label{eq:bench-multi-generic}
\bench(\prior,\typeFreq)
     = \inf_{\text{BIC, BIR state-aware policies $\policy$}} \;\;
        \sup_{\mystate\in\support(\prior),\,a\in[n]} \;\;
          \E_{\text{type }\type\sim\typeFreq}
            \sbr{\frac{1}{\policy_a(\type,\mystate)}}.
\end{align}
This benchmark is no larger than the worst-case benchmark $\worstbench$, and can be much smaller; we provide a simple numerical example in \cref{apx:examples}.

We provide upper and lower bounds relative to $\bench(\prior,\typeFreq)$. Our mechanism  attains a strong guarantee when the estimated type frequencies  $\typeFreqEst$ are not too far from $\typeFreq$. When $\typeFreqEst=\typeFreq$, this guarantee is optimal, much like in \cref{thm:LB-single}, in the worst case over all adversaries with a given $\typeFreq$.



The influence of rare-but-difficult agent types can now be understood solely from analyzing the benchmark $\bench(\prior,\typeFreq)$. It is easy to see that $\bench(\prior,\typeFreq)\leq \bench(\prior,\types)$.
In \cref{apx:examples} we provide a simple numerical example when $\bench(\prior,\typeFreq)$ provides a substantial improvement. However, we do not have
 $\bench(\prior,\typeFreq) = \E_{\type\sim\typeFreq}
        \bench(\prior,\{\type\})$,
and more generally $\bench(\prior,\typeFreq)$ does not appear to admit a clean characterization in terms of the per-type benchmarks. Instead, the influence of a particular type on the benchmark depends on the entire \emph{distribution} over types. The entire distribution could be ``bad", not just the individual types.

Thus, analyzing the influence of ``bad" types/distributions beyond numerical examples is quite subtle. In \cref{app:bad}, we consider a scenario with two type distributions, ``good" and ``bad", and their mixtures. We analyze the benchmark and show that the influence of the ``bad" type distribution on the mixture is indeed mitigated, in some asymptotic sense.


\subsection{Results}

While we add some new ideas, we reuse much of the machinery developed in \cref{app:worst}. For starters, we re-use the non-degeneracy condition \eqref{eq:worst-nondegen}, with the same proof as in \cref{cl:hetero-finite}.

\begin{claim}\label[claim]{cl:hetero-freq-finite}
Under \eqref{eq:worst-nondegen}, $\bench(\prior,\typeFreq)<\infty$.
\end{claim}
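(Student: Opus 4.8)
## Proof plan for Claim~\ref{cl:hetero-freq-finite}

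The plan is to mimic the proof of \cref{cl:hetero-finite} almost verbatim, exhibiting a single state-aware policy that is feasible for the $\inf$ in \eqref{eq:bench-multi-generic} and has finite objective value. Under the non-degeneracy condition \eqref{eq:worst-nondegen}, the argument used for \cref{cl:hetero-finite} (and already noted in its proof via a footnote) gives an $\eps>0$ such that the $\greedy{\eps}{\truestate}$ policy — which plays the best-arm policy $\bestA[\truestate,\type]$ in the exploitation branch and explores uniformly with probability $\eps$ — is simultaneously BIC and BIR. This policy therefore satisfies the constraints under the $\inf$ in \eqref{eq:bench-multi-generic}.

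The remaining step is to check that its value in the new objective is finite. For $\greedy{\eps}{\truestate}$ we have $\policy_a(\type,\mystate)\geq \eps/n$ for every arm $a$, every type $\type$, and every state $\mystate$. Hence for each fixed $(\mystate,a)$,
\begin{align*}
\E_{\type\sim\typeFreq}\sbr{\frac{1}{\policy_a(\type,\mystate)}}
\leq \E_{\type\sim\typeFreq}\sbr{\frac{n}{\eps}}
= \frac{n}{\eps},
\end{align*}
and taking the $\sup$ over $(\mystate,a)$ preserves this bound. Therefore $\bench(\prior,\typeFreq)\leq n/\eps<\infty$.

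I do not anticipate any real obstacle here; the only thing to be slightly careful about is that the new objective uses an expectation over types rather than a $\max$, but since $1/\policy_a(\type,\mystate)$ is bounded by $n/\eps$ pointwise in $\type$, the expectation is trivially bounded by the same quantity, so the same $\greedy{\eps}{\truestate}$ witness works without modification. (In fact, because an expectation is dominated by a $\max$, the inequality $\bench(\prior,\typeFreq)\leq\worstbench$ holds as well, giving an alternative one-line proof once \cref{cl:hetero-finite} is established — but the direct argument above is self-contained.)
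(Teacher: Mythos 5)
Your proposal is correct and matches the paper, which proves this claim by explicitly reusing the proof of \cref{cl:hetero-finite}: the $\greedy{\eps}{\truestate}$ policy is BIC and BIR for some $\eps>0$ under \eqref{eq:worst-nondegen}, and its pointwise bound $\policy_a(\type,\mystate)\geq\eps/n$ makes the expectation over types finite. Your observation that the expectation is dominated by the $\max$ (so $\bench(\prior,\typeFreq)\leq\worstbench$) is a valid shortcut but not needed.
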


Let
    $\benchpolicy = \benchpolicy(\prior,\typeFreq)$
be the benchmark-optimizing policy, \ie a state-aware policy that optimizes the $\inf$ in \refeq{eq:bench-multi-generic}, ties broken arbitrarily. Such policy exists as an optimizer of a continuous function on a compact space.

We reuse mechanism
    $\mechname(T_0,\benchpolicy,\bestA[])$
from \cref{app:worst}
where $\bestA[]$ is the best-arm policy and
    $\benchpolicy = \benchpolicy(\prior,\typeFreqEst)$
is the benchmark-optimizing policy relative to the \emph{estimated} type frequencies. The number of warm-up samples is given by the same  \refeq{eq:worst-N}.

The performance guarantee becomes more complex, as it depends on $\typeFreq$ and $\typeFreqEst$.

\begin{theorem}\label{thm:hetero-UB}
\sloppy
Assume \eqref{eq:worst-nondegen}. Let $\typeFreqEst$ be some distribution over types.
Consider mechanism $\mechname(T_0,\benchpolicy,\bestA[])$, where policy
    $\benchpolicy = \benchpolicy(\typeFreqEst)$
optimizes the benchmark for distribution $\typeFreqEst$. This mechanism satisfies part (b) in \cref{thm:mech-hetero}
with parameters
    $\initProb = \eta^*/8$
and
    $N_{\prior} = N_{\prior}(\eta^*)$,
as per Eqns. \eqref{eq:worst-nondegen-gap} and \eqref{eq:worst-N}. The performance guarantee is as follows:
\begin{align}\label{eq:thm:hetero-UB}
\err\rbr{\mech\mid\adv}
\leq \frac{2\rbr{\bench(\prior,\typeFreq)
        + C_{\prior}(\typeFreq,\typeFreqEst) \cdot \norm{\typeFreqEst-\typeFreq}{1}}}
    {T-T_0},
\end{align}
where
    $\typeFreq(\cdot) = \typeFreq(\cdot\mid\adv)$
and the normalization factor $C_{\prior}(\typeFreq,\typeFreqEst)<\infty$ is determined by the prior $\prior$, the utility structure, and distributions $\typeFreq,\typeFreqEst$.

In more detail,
    $C_{\prior}(\typeFreq,\typeFreqEst)
        = C_{\prior}(\typeFreq) + C_{\prior}(\typeFreqEst)$,
    where
\begin{align*}
C_{\prior}(\typeFreq)
    := \rbr{\min_{\type\in\types,\;\mystate\in \supp(\prior),\; a\in[n]}\quad
                \benchpolicy_a(\mystate,\type)}^{-1},
\qquad \benchpolicy = \benchpolicy(\prior,\typeFreq).
\end{align*}
\end{theorem}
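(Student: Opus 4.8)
I would split the statement into its incentive/structural part (parts (b) and (c) of \cref{thm:mech-single-main}) and the new statistical bound \eqref{eq:thm:hetero-UB}. The former is inherited almost verbatim from the proof in \cref{sec:hetero-UB-proof}; the latter is where the dependence on $\typeFreq$ and $\typeFreqEst$ enters, via two elementary change-of-measure steps on top of the IPS bound \eqref{eq:IPS-UB}.

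\textbf{Parts (b) and (c).} The mechanism $\mechname(T_0,\benchpolicy,\bestA[])$ here has exactly the same form as in \cref{sec:hetero-UB-proof}: it mixes uniformly between a state-aware policy $\benchpolicy$ and the best-arm policy $\bestA[]$, estimates the state by MLE, and uses IPS. The BIR/BIC verification in \cref{sec:hetero-UB-proof} uses only that (i) $\benchpolicy$ is BIR and BIC, (ii) $\bestA[]$ is $\eta^*$-BIR and $\eta^*$-BIC (by \eqref{eq:worst-nondegen-gap}), and (iii) $\probE[\estimatestate\neq\truestate]\le\eta^*/4$ (\cref{cl:hetero-UBpf-errorProb}). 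Property (i) holds because $\benchpolicy=\benchpolicy(\prior,\typeFreqEst)$ is feasible for the $\inf$ in \eqref{eq:bench-multi-generic}, whose feasible set — policies satisfying \eqref{def:policy-BIR} and \eqref{def:policy-BIC} — does not reference the type distribution; so which distribution $\benchpolicy$ was optimized against is irrelevant for the incentive arguments. Property (iii) is a statement purely about learning $\truestate$ from the warm-up data, hence unaffected by the choice of benchmark. Thus parts (b) and (c) follow as in \cref{thm:worst-UB-details}, with $\initProb=\eta^*/8$ and $N_{\prior}=N_{\prior}(\eta^*)$.

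\textbf{The statistical bound.} Condition on the MLE outcome $\estimatestate=\mystate$ and on the realized type sequence; then the main-stage sampling distributions are deterministic, and in each main-stage round with observed type $\type$ every arm $a$ is sampled with probability at least $\tfrac12\,\benchpolicy_a(\mystate,\type)$. Applying \eqref{eq:IPS-UB}, taking the worst case over $\mystate\in\supp(\prior)$, and rewriting the average over main-stage rounds as an expectation over the realized type frequencies yields
\[
\err(\mech\mid\adv)\;\le\;\frac{2}{T-T_0}\;
\max_{\mystate\in\supp(\prior),\,a\in[n]}\;
\E_{\type\sim\typeFreq}\!\left[\frac{1}{\benchpolicy_a(\mystate,\type)}\right],
\qquad \benchpolicy=\benchpolicy(\prior,\typeFreqEst),\ \ \typeFreq=\typeFreq(\cdot\mid\adv).
\]
It then remains to bound the right-hand side by $\bench(\prior,\typeFreq)+C_\prior(\typeFreq,\typeFreqEst)\,\norm{\typeFreqEst-\typeFreq}{1}$. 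Step one: since every coordinate $1/\benchpolicy_a(\mystate,\type)$ with $\benchpolicy=\benchpolicy(\prior,\typeFreqEst)$ is at most $C_\prior(\typeFreqEst)$ by definition of that constant, replacing $\typeFreq$ by $\typeFreqEst$ inside the expectation costs at most $C_\prior(\typeFreqEst)\,\norm{\typeFreqEst-\typeFreq}{1}$, and $\max_{\mystate,a}\E_{\type\sim\typeFreqEst}[1/\benchpolicy_a(\mystate,\type)]=\bench(\prior,\typeFreqEst)$ because $\benchpolicy(\prior,\typeFreqEst)$ optimizes that benchmark. Step two: $\bench(\prior,\typeFreqEst)\le\bench(\prior,\typeFreq)+C_\prior(\typeFreq)\,\norm{\typeFreqEst-\typeFreq}{1}$, because the $\typeFreq$-optimal policy $\benchpolicy(\prior,\typeFreq)$ is BIR and BIC, hence feasible for the $\inf$ defining $\bench(\prior,\typeFreqEst)$; plugging it in and performing the same change of measure (now using the bound $C_\prior(\typeFreq)$ on its reciprocal sampling probabilities) gives the estimate. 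Adding the two bounds and recalling $C_\prior(\typeFreq,\typeFreqEst)=C_\prior(\typeFreq)+C_\prior(\typeFreqEst)$ yields \eqref{eq:thm:hetero-UB}.

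\textbf{Main obstacle.} The argument is largely mechanical once set up; the only delicate point is invoking \eqref{eq:IPS-UB}, which is stated for a deterministic sequence of sampling distributions, whereas here the sequence depends on the random estimate $\estimatestate$ (and on the random type sequence). This is handled by conditioning on $\estimatestate$ and on the type sequence — the IPS estimator's conditional MSE is then governed by \eqref{eq:IPS-UB} — and taking the worst case over $\estimatestate\in\supp(\prior)$ and over type sequences with a given frequency vector, which adds no loss beyond the $\max_\mystate$ and the $\typeFreq$-expectation already present above. The remaining care is bookkeeping: tracking which of $C_\prior(\typeFreq)$ and $C_\prior(\typeFreqEst)$ is used in each of the two change-of-measure steps, which is exactly what produces the symmetric sum $C_\prior(\typeFreq)+C_\prior(\typeFreqEst)$.
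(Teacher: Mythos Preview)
Your proposal is correct and follows essentially the same approach as the paper: parts (b,c) are inherited verbatim from \cref{sec:hetero-UB-proof}, and the error bound is obtained by applying \eqref{eq:IPS-UB}, passing from $\typeFreq$ to $\typeFreqEst$ at cost $C_\prior(\typeFreqEst)\,\|\typeFreqEst-\typeFreq\|_1$, and then invoking the benchmark comparison (the paper's \cref{lm:hetero-benchmarks}, which is exactly your ``step two'') at cost $C_\prior(\typeFreq)\,\|\typeFreqEst-\typeFreq\|_1$. If anything, your treatment of the conditioning needed to apply \eqref{eq:IPS-UB} is more explicit than the paper's.
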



\begin{remark}
We can reasonably hope that the difference term in \cref{thm:hetero-UB},
    \[ \diff := C_{\prior}(\typeFreq,\typeFreqEst) \cdot \norm{\typeFreqEst-\typeFreq}{1},\]
vanishes as the time horizon $T\to\infty$. Indeed, suppose the realized type frequencies $\typeFreq$ are fixed, and the estimates $\typeFreqEst$ are based on the observations from the warm-up stage. The estimation error
    $\norm{\typeFreqEst-\typeFreq}{1}$
scales as $\tildeO\rbr{1/\sqrt{T_0}}$ with high probability under i.i.d. type arrivals, which is
    $\tildeO\rbr{T^{-1/4}}$
if the warm-up stage lasts for (say) $T_0 = \sqrt{T}$  rounds.%
\footnote{Here, we use $\tildeO(\cdot)$ notation to hide $O\rbr{\sqrt{\log T}}$ factors.}
Therefore, we can reasonably hope that the estimation error scales as some negative power of $T$ in practice: say, as at most $T^{-\gamma}$ for some $\gamma>0$. Then, fixing an arbitrary  $\delta\in(0,1)$, for any $T>\delta^{-1/\gamma}$ we have
\begin{align*}
\diff \leq
C_{\prior,\delta}(\typeFreq)\cdot T^{-\gamma},
\quad \text{where }
C_{\prior,\delta}(\typeFreq):=
    \sup_{\typeFreqEst:\;\norm{\typeFreqEst-\typeFreq}{1}<\delta }
        C_{\prior}(\typeFreq,\typeFreqEst).
\end{align*}
The point here is that $C_{\prior,\delta}(\typeFreq)<\infty$ does not depend on $\typeFreqEst$ or the time horizon $T$.
\end{remark}

\cref{thm:hetero-UB} can be extended to ``relaxed" versions of $\bestA[]$ and $\benchpolicy$, just like in \cref{thm:worst-UB-details}. The exact formulation is omitted, as we believe it does not yield additional insights.

\medskip

We prove that the guarantee in \refeq{eq:thm:hetero-UB} (with $\typeFreqEst=\typeFreq$) is essentially optimal in the worst case over all adversaries with given type frequencies $\typeFreq$.

\begin{theorem}\label{thm:worst-LB est}
Fix prior $\prior$ and time horizon
    $T\geq \Omega\rbr{\bench(\prior,\typeFreq)^{-2}}$
and $T_0\leq T/2$.
Let $\typeFreq$ be some distribution  over types such that
    $\typeFreq(\cdot)\in \tfrac{\Z}{T-T_0}$.
Let $S$ be a contiguous subset of rounds in the main stage, with $|S|\geq c\cdot T$ for some absolute constant $c>0$. Suppose mechanism \mech is non-data-adaptive over $S$.
Then some adversary $\adv$ satisfies
$\typeFreq(\cdot\mid\adv) = \typeFreq$ and
\begin{align}\label{eq:thm:LB-multi}
 \err\rbr{\mech\mid\adv} = \Omega\rbr{\frac{\bench(\prior,\typeFreq)}{T}}.
\end{align}
\end{theorem}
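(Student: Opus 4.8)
The plan is to follow the proof of \cref{thm:LB-single} line by line, carrying the type index through every step and inserting one combinatorial ingredient to absorb the expectation over types in \eqref{eq:bench-multi-generic}. Write $S=\cbr{T_0+1\LDOTS T}$ for the main stage, let $m_\type=(T-T_0)\,\typeFreq(\type)$ (a nonnegative integer, summing to $|S|$, by the hypothesis $\typeFreq(\cdot)\in\Z/(T-T_0)$), and recall that, as throughout, a mechanism is BIR and BIC over the main stage — this is needed so that surrogate policies built from $\mech$ inherit these properties, exactly as in the homogeneous proof. First I reduce to \cref{lm:prelims-ips}: for a state $\mystate\in\supp(\prior)$ let $\adv_\mystate$ be a warm-up adversary that fixes an arbitrary reference type sequence on rounds $\leq T_0$ and draws each outcome $\outcome_{a,t}$ independently from $\mystate(a,\pubtype^{(t)})$. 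Since $\mech$ is non-data-adaptive over $S$, each main-stage policy $\policy_t(\history_{T_0+1},\cdot)$ depends only on the warm-up, so once the adversary also fixes a deterministic main-stage type sequence, the tuple $\rbr{\sampleD_t:t\in S}$ becomes a deterministic function of $\history_{T_0+1}$ and hence has a law fixed before round $\min S$. Appending the hard outcome adversary $\adv_S\rbr{\E\sbr{\sampleD_t}:t\in S}$ of \eqref{eq:IPS-LB}, targeting any fixed arm $a$, therefore gives $\err\rbr{\mech\mid\adv}\geq \Omega\rbr{\tfrac{1}{|S|^2}\sum_{t\in S}\min\cbr{1/\E\sbr{\sampleD_t(a)},\sqrt{|S|}}}$.

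Next I locate the ``hard'' state--arm pair via a surrogate policy, exactly as \eqref{eq:thm:LB-single:pf-state} is proved. Given $\mystate$, set $\widetilde{\policy}(\mystate,\type):=\tfrac1{|S|}\sum_{t\in S}\E\sbr{\policy_t(\history_{T_0+1},\type)\mid\adv_\mystate}$: the state-aware policy that runs $\mech$ under $\adv_\mystate$, picks a round of $S$ uniformly at random, and imitates $\mech$'s recommendation there for the given reported type. Because each $\policy_t(\history_{T_0+1},\cdot)$ is warm-up-measurable and $\mech$ is BIR and BIC in every round of $S$, averaging the per-round BIR and BIC inequalities of $\mech$ over $t\in S$ shows $\widetilde{\policy}$ is a BIR, BIC, state-aware policy; hence by \eqref{eq:bench-multi-generic}, $\sup_{\mystate,a}\E_{\type\sim\typeFreq}\sbr{1/\widetilde{\policy}_a(\mystate,\type)}\geq \bench(\prior,\typeFreq)$, so I fix $\mystate_0,a_0$ attaining $\E_{\type\sim\typeFreq}\sbr{1/\widetilde{\policy}_{a_0}(\mystate_0,\type)}\geq\bench(\prior,\typeFreq)$. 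The adversary runs $\adv_{\mystate_0}$ on the warm-up, then plays on $S$ a deterministic type sequence realizing the counts $(m_\type)_\type$, then $\adv_S$ targeting $a_0$. Writing $v_t(\type):=\E\sbr{\policy_t(\history_{T_0+1},\type)_{a_0}\mid\adv_{\mystate_0}}$ — so $\E\sbr{\sampleD_t(a_0)}=v_t(\type_t)$ and $\tfrac1{|S|}\sum_{t\in S}v_t(\type)=\widetilde{\policy}_{a_0}(\mystate_0,\type)$ — I pick the type sequence to maximize $\sum_{t\in S}\min\cbr{1/v_t(\type_t),\sqrt{|S|}}$ over all sequences with the prescribed counts; this is a transportation problem, whose LP relaxation is integral, so some integral type sequence attains at least the uniform fractional value $\sum_\type\typeFreq(\type)\sum_{t\in S}\min\cbr{1/v_t(\type),\sqrt{|S|}}$. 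Fixing it, and applying (for each $\type$ separately) the split-at-$1/\sqrt{|S|}$ and harmonic-versus-arithmetic-mean argument from the proof of \cref{thm:LB-single} to the scalars $\cbr{v_t(\type)}_{t\in S}$, I get $\sum_{t\in S}\min\cbr{1/v_t(\type),\sqrt{|S|}}\geq\Omega\rbr{|S|\cdot\min\cbr{1/\widetilde{\policy}_{a_0}(\mystate_0,\type),\sqrt{|S|}}}$, hence $\err\rbr{\mech\mid\adv}\geq\Omega\rbr{\tfrac1{|S|}\,\E_{\type\sim\typeFreq}\sbr{\min\cbr{1/\widetilde{\policy}_{a_0}(\mystate_0,\type),\sqrt{|S|}}}}$. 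Since $|S|\geq T/2$ and $\bench(\prior,\typeFreq)\leq\worstbench$, the assumed lower bound on $T$ gives $\sqrt{|S|}\geq\Omega\rbr{\bench(\prior,\typeFreq)}$, the cap may be dropped up to a constant, and with the previous paragraph this yields $\err\rbr{\mech\mid\adv}=\Omega\rbr{\bench(\prior,\typeFreq)/T}$.

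The step I expect to fight with is the last one: replacing $\E_{\type\sim\typeFreq}\sbr{\min\cbr{1/\widetilde{\policy}_{a_0}(\mystate_0,\type),\sqrt{|S|}}}$ by $\Omega\rbr{\bench(\prior,\typeFreq)}$. The surrogate $\widetilde{\policy}$ is derived from $\mech$ and could place vanishing mass on $a_0$ for some rare type $\type$, so that $1/\widetilde{\policy}_{a_0}(\mystate_0,\type)$ far exceeds $\sqrt{|S|}$ and truncation at the cap deletes that type's (possibly large) share of $\bench(\prior,\typeFreq)$. Closing this gap requires either an accounting argument showing that no single (rare) type can contribute more than a constant fraction of $\bench(\prior,\typeFreq)$, or re-selecting $(\mystate_0,a_0)$ by optimizing the \emph{truncated} benchmark $\inf_{\policy}\sup_{\mystate,a}\E_{\type\sim\typeFreq}\sbr{\min\cbr{1/\policy_a(\mystate,\type),\sqrt{|S|}}}$ and showing that, in the regime $T=\Omega\rbr{\bench(\prior,\typeFreq)^2}$, it agrees with $\bench(\prior,\typeFreq)$ up to a constant — this is precisely where the expectation-over-types benchmark of \cref{sec:multi} (rather than $\worstbench$, which would be skewed by rare types) is what makes a clean matching lower bound attainable. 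Everything else is a routine transcription of the homogeneous argument with the type index carried along and the transportation-integrality step inserted.
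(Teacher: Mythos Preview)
Your overall strategy follows the paper's template, and the transportation step you insert is a clean addition. By defining the surrogate $\widetilde{\policy}(\mystate,\type)=\tfrac{1}{|S|}\sum_{t\in S}\E[\policy_t(\history_{T_0+1},\type)\mid\adv_\mystate]$ as a uniform average over \emph{all} rounds, BIR and BIC follow immediately (you average the per-round inequalities over the same set of rounds for every type and every misreport), and integrality of the transportation polytope then yields a type sequence attaining at least the uniform fractional objective. The paper's sketch instead takes the surrogate to be, for each type $\type$, the average over the rounds in which that type appears; this makes BIC less transparent, since truthful and misreporting utilities are then averages over different sets of rounds. Your formulation is arguably tidier on this point.

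The gap you flag is real, and the one-line fix you offer is wrong: $\sqrt{|S|}\geq\Omega(\bench(\prior,\typeFreq))$ is \emph{not} sufficient to drop the cap in $\E_{\type\sim\typeFreq}\sbr{\min\{1/\widetilde{\policy}_{a_0}(\mystate_0,\type),\sqrt{|S|}\}}$. Concretely, with two types, $\typeFreq(\type_2)=\eps$, $\widetilde{\policy}_{a_0}(\mystate_0,\type_1)=1$ and $\widetilde{\policy}_{a_0}(\mystate_0,\type_2)=\eps^2$, one has $\E_\type[1/\widetilde{\policy}]\approx 1/\eps$ while the capped expectation is $\approx \eps\sqrt{|S|}$, which stays $O(1)$ even when $\sqrt{|S|}=\Theta(1/\eps)$. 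The paper closes the gap not via an accounting argument or a truncated-benchmark comparison, but by a stronger size assumption: its proof sketch invokes $|S|\geq\Omega\big((\bench(\prior,\typeFreq)/\min_{\type}\typeFreq(\type))^2\big)$, i.e.\ $\typeFreq(\type)\sqrt{|S|}\geq\Omega(\bench(\prior,\typeFreq))$ for every type (this is used in the sketch though the theorem statement records a weaker condition). With that in hand a one-line dichotomy finishes your argument: either $1/\widetilde{\policy}_{a_0}(\mystate_0,\type)\leq\sqrt{|S|}$ for all $\type$, so the cap never binds and the capped expectation is at least $\bench(\prior,\typeFreq)$; or some $\type^*$ has $1/\widetilde{\policy}_{a_0}(\mystate_0,\type^*)>\sqrt{|S|}$, and its single term already contributes $\typeFreq(\type^*)\sqrt{|S|}\geq\Omega(\bench(\prior,\typeFreq))$.
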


Like in the previous section, the lower bound is proved using the techniques from the homogeneous case (\cref{thm:LB-single}). We defer the proof to \cref{apx:LB}.

\subsection{Proof outline for the positive result (\cref{thm:hetero-UB})}


We first bound the difference in the benchmark due to the estimated type frequency.

\begin{lemma}\label[lemma]{lm:hetero-benchmarks}
Let $\typeFreq,\typeFreqEst$ be distributions over types. Then
\begin{align*}
\bench(\prior,\typeFreqEst) - \bench(\prior,\typeFreq)
\leq C_{\prior}(\typeFreq)\cdot \norm{\typeFreqEst-\typeFreq}{1}.
\end{align*}
\end{lemma}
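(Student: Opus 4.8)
The plan is to bound $\bench(\prior,\typeFreqEst)$ from above by evaluating the infimum in \eqref{eq:bench-multi-generic} (with type distribution $\typeFreqEst$) at a single well-chosen feasible policy — namely the $\typeFreq$-optimal policy $\benchpolicy := \benchpolicy(\prior,\typeFreq)$ — and then to charge the resulting change in the objective to $\norm{\typeFreqEst-\typeFreq}{1}$. The crucial point that makes Step~1 go through is that feasibility is frequency-independent: the BIR and BIC constraints on a state-aware policy (\eqref{def:policy-BIR} and \eqref{def:policy-BIC}) are defined purely in terms of $\prior$ and the utility structure, not in terms of the type frequencies. Hence $\benchpolicy$, being BIR and BIC, is admissible in the infimum defining $\bench(\prior,\typeFreqEst)$, so
\begin{align*}
\bench(\prior,\typeFreqEst)
\;\le\; \sup_{\mystate\in\support(\prior),\; a\in[n]}\;\E_{\type\sim\typeFreqEst}\!\left[\tfrac1{\benchpolicy_a(\mystate,\type)}\right].
\end{align*}

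Next I would compare the two expectations for a fixed state $\mystate$ and arm $a$. Since $1/\benchpolicy_a(\mystate,\type)\le C_{\prior}(\typeFreq)$ for every type $\type$ by the very definition of $C_{\prior}(\typeFreq)$ as the reciprocal of the smallest value of $\benchpolicy$, I get
\begin{align*}
\E_{\type\sim\typeFreqEst}\!\left[\tfrac1{\benchpolicy_a(\mystate,\type)}\right]-\E_{\type\sim\typeFreq}\!\left[\tfrac1{\benchpolicy_a(\mystate,\type)}\right]
=\sum_{\type\in\types}\bigl(\typeFreqEst(\type)-\typeFreq(\type)\bigr)\tfrac1{\benchpolicy_a(\mystate,\type)}
\;\le\; C_{\prior}(\typeFreq)\cdot\norm{\typeFreqEst-\typeFreq}{1}.
\end{align*}
The additive error on the right is independent of $(\mystate,a)$, so taking the supremum over $\mystate,a$ on both sides and using that $\benchpolicy$ attains the infimum for $\typeFreq$ (i.e.\ $\sup_{\mystate,a}\E_{\type\sim\typeFreq}[1/\benchpolicy_a(\mystate,\type)]=\bench(\prior,\typeFreq)$) yields $\sup_{\mystate,a}\E_{\type\sim\typeFreqEst}[1/\benchpolicy_a(\mystate,\type)]\le\bench(\prior,\typeFreq)+C_{\prior}(\typeFreq)\norm{\typeFreqEst-\typeFreq}{1}$. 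Chaining with the first display proves the lemma.

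I do not expect a real obstacle here; the one point that needs a word of care is the finiteness of $C_{\prior}(\typeFreq)$, i.e.\ that $\benchpolicy$ has full support over arms for every $(\mystate,\type)$. If it does not, it can only fail to do so at types of zero $\typeFreq$-mass (the ones the objective in \eqref{eq:bench-multi-generic} does not constrain), in which case $C_{\prior}(\typeFreq)=\infty$ and the claimed inequality is vacuous (the case $\typeFreqEst=\typeFreq$ being trivial on its own); alternatively one simply breaks the ties in the choice of $\benchpolicy$ in favor of full-support policies — note that mixing any optimizer with the strictly-BIR, strictly-BIC $\greedy{\eps}{\truestate}$ policy from the proof of \cref{cl:hetero-freq-finite} preserves BIR and BIC, since both constraints are linear in the policy. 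Either way the two displays above apply and give the bound.
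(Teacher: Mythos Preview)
Your proposal is correct and follows essentially the same approach as the paper: plug the $\typeFreq$-optimal policy $\benchpolicy(\prior,\typeFreq)$ into the infimum defining $\bench(\prior,\typeFreqEst)$, then bound the change in the expectation by $C_{\prior}(\typeFreq)\cdot\norm{\typeFreqEst-\typeFreq}{1}$ uniformly over $(\mystate,a)$. Your explicit remark that feasibility (BIR/BIC) is frequency-independent, and your discussion of the edge case $C_{\prior}(\typeFreq)=\infty$, are useful clarifications that the paper leaves implicit.
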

\begin{proof}
Letting $\benchpolicy = \benchpolicy(\prior,\typeFreq)$, we have
\begin{align*}
\bench(\prior,\typeFreqEst)
&\leq \sup_{\mystate\in\support(\prior),\,a\in[n]} \;\;
\E_{\text{type }\type\sim\typeFreqEst}
\sbr{\frac{1}{\benchpolicy_a(\type,\mystate)}} \\
&\leq \sup_{\mystate\in\support(\prior),\,a\in[n]} \;\;
\E_{\text{type }\type\sim\typeFreq}
\sbr{\frac{1}{\benchpolicy_a(\type,\mystate)}}
+ C_{\prior}(\typeFreq)\cdot\norm{\typeFreqEst-\typeFreq}{1}\\
&= \bench(\prior,\typeFreq)
+ C_{\prior}(\typeFreq)\cdot \norm{\typeFreqEst-\typeFreq}{1}.\qedhere
\end{align*}
\end{proof}

The proof that the mechanism satisfies BIC and BIR is identical to that for \cref{thm:worst-UB-details}, essentially because these two properties do not depend on the type frequencies in the main stage. Hence the details are omitted here.

To derive the error bound, recall that by the specification of our mechanism, the sampling probabilities in the main stage can be lower-bounded as
    $\sampleD_t(a) \geq \tfrac12\; \benchpolicy_a\rbr{\estimatestate,\type_t}$
for each arm $a$ and round $t$, where the benchmark-optimizing policy is
    $\benchpolicy = \benchpolicy(\typeFreqEst)$.
Invoking the generic guarantee \eqref{eq:IPS-UB} for IPS estimators, conditional on the data in the warm-up stage we have
\begin{align*}
\err\rbr{\mech\mid\adv}
&\leq
    \frac{2}{(T-T_0)^2}
    \max_{\text{arms }a}\;
    \sum_{t\in [T_0,T]}\;
    \frac{1}{\benchpolicy_a\rbr{\estimatestate,\type_t}},
\\
&\leq
    \frac{2}{(T-T_0)^2}\;
    \max_{\text{arms } a}\;
    \max_{\mystate\in\supp(\prior)}\;
    \sum_{\type\in\types}\;
    \sum_{t\in [T_0,T]}\;
    \frac{\ind{\type_t = \type}}{\benchpolicy_a(\mystate,\type)}\\
&\leq
    \frac{2}{T-T_0}\;
    \max_{\text{arms }a}\;
    \max_{\mystate\in\supp(\prior)}\;
    \sum_{\type\in\types}\;
    \frac{\typeFreq(\type)}{\benchpolicy_a(\mystate,\type)} \\
&\leq
    \frac{2}{T-T_0}\;
    \max_{\text{arms }a}\;
    \max_{\mystate\in\supp(\prior)}\;
    \sum_{\type\in\types}\;
    \frac{\typeFreqEst(\type) + |\typeFreqEst(\type)-\typeFreq(\type)|}
         {\benchpolicy_a(\mystate,\type)} \\
&\leq
    \frac{2}{T-T_0}\;
    \rbr{ \bench(\prior, \typeFreqEst)
            + C_\prior(\typeFreqEst) \cdot \norm{\typeFreqEst-\typeFreq}{1}}.
\end{align*}
The final guarantee, \refeq{eq:thm:hetero-UB}, follows by plugging in \cref{lm:hetero-benchmarks}.


\subsection{Vanishing impact of bad types}
\label{app:bad}

\newcommand{\goodF}{F_{\mathtt{good}}}
\newcommand{\badF}{F_{\mathtt{bad}}}
\newcommand{\mixF}{F_{\eps}}

\newcommand{\goodS}{\policy^{\mathtt{good}}}
\newcommand{\badS}{\policy^{\mathtt{bad}}}

Let us analyze benchmark
    $\bench(\prior,\typeFreq)$
to clarify its key advantage: the ability to limit the influence of ``bad" types (or ``bad" distributions over types). This is an intricate point, as discussed early in this section. 

We focus on a simple scenario with two distributions over types, $\goodF$ and $\badF$. We posit that $\bench(\prior,\badF)$ is ``large" and $\bench(\prior,\badF)$ is ``small". The base case is that $\goodF,\badF$ are each supported on a single type: resp., a ``good type" and a ``bad type". More generally, $\goodF$ (resp., $\badF$) can be supported on a subset of ``good types" (resp., ``bad types"); further, we allow their support sets to  overlap on some ``medium types".

We consider mixtures of $\goodF$ and $\badF$: distributions
\begin{align}\label{eq:bad-mixtures}
\mixF \triangleq (1-\epsilon)\cdot \goodF + \eps\cdot\badF,
\quad \eps\in[0,1].
\end{align}
Interestingly, the bad influence of $\badF$ persists even in these mixtures:
\begin{align}\label{eq:bad-influence}
\bench(\prior,\mixF)
\geq (1-\eps)\cdot \bench(\prior,\goodF)
+ \eps\cdot \bench(\prior,\badF),
\qquad\forall \eps\in[0,1].
\end{align}

We prove that the influence of $\badF$ vanishes as $\eps\to 0$, in the sense that
\begin{align}\label{eq:vanish bad types}
\lim_{\eps\to 0}\bench(\prior,\mixF)
= \bench(\prior,\goodF).
\end{align}

\begin{theorem}
Let $\goodF,\badF$ be distributions over $\types$ with
    $\bench(\prior,\goodF)<\bench(\prior,\badF)<\infty$.
Assume that restricting the set of types to $\supp(\goodF)$, there exists a state-aware policy that is BIC and $\eta$-BIR, for some $\eta>0$. Then \refeq{eq:vanish bad types} holds.
\end{theorem}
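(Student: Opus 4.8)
\emph{Proof plan.}
Write $g:=\bench(\prior,\goodF)$ and $b:=\bench(\prior,\badF)$, so by hypothesis $g<b<\infty$; the plan is to sandwich $\bench(\prior,\mixF)$ between two quantities that both tend to $g$ as $\eps\to 0$. The lower bound is the easy half: for any BIC, BIR state-aware policy $\policy$ and any $\mystate\in\supp(\prior)$, $a\in[n]$, splitting the expectation over $\mixF$ into its $\goodF$ and $\badF$ parts and discarding the nonnegative $\badF$ term gives $\E_{\type\sim\mixF}[1/\policy_a(\mystate,\type)]\ge(1-\eps)\,\E_{\type\sim\goodF}[1/\policy_a(\mystate,\type)]$; taking the supremum over $(\mystate,a)$ and then the infimum over $\policy$ yields $\bench(\prior,\mixF)\ge(1-\eps)\,g$, hence $\liminf_{\eps\to0}\bench(\prior,\mixF)\ge g$. (One could quote the sharper \refeq{eq:bad-influence} instead, but it is not needed here.)

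The substance is the matching upper bound $\limsup_{\eps\to0}\bench(\prior,\mixF)\le g$, which I would prove by exhibiting, for each small $\eps$, one admissible policy for $\bench(\prior,\mixF)$ whose benchmark value tends to $g$. Fix $\delta>0$. Pick a BIC, BIR state-aware policy $\goodS$ whose benchmark value for $\goodF$ is at most $g+\delta$; finiteness of this value together with $\goodF(\type)>0$ on $\supp(\goodF)$ forces $\goodS_a(\mystate,\type)>0$ for all arms $a$, all $\type\in\supp(\goodF)$, all $\mystate\in\supp(\prior)$. Similarly pick a BIC, BIR policy $\badS$ with benchmark value for $\badF$ at most $b+\delta<\infty$, so $C:=\max\{1/\badS_a(\mystate,\type):a\in[n],\ \type\in\supp(\badF),\ \mystate\in\supp(\prior)\}<\infty$. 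Now set $\policy^{(\eps)}:=(1-\lambda)\,\goodS+\lambda\,\badS$ with a mixing weight $\lambda\in(0,1)$ to be tuned. Since $\policy\mapsto\Val_{\prior}(\policy,\type)$ is affine in the policy, the BIR and BIC constraints each cut out a convex set of policies, so $\policy^{(\eps)}$ is again BIC and BIR, hence admissible in $\bench(\prior,\mixF)$; and $\policy^{(\eps)}_a(\mystate,\type)\ge(1-\lambda)\goodS_a(\mystate,\type)$ for $\type\in\supp(\goodF)$ while $\policy^{(\eps)}_a(\mystate,\type)\ge\lambda\badS_a(\mystate,\type)\ge\lambda/C$ for $\type\in\supp(\badF)$, so $\policy^{(\eps)}$ is strictly positive on every arm for every type in $\supp(\mixF)$ and its benchmark value is finite. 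Splitting the expectation over $\mixF$ and using these two bounds gives, for every $\mystate,a$,
\[
\E_{\type\sim\mixF}\!\left[\frac1{\policy^{(\eps)}_a(\mystate,\type)}\right]\;\le\;\frac{1-\eps}{1-\lambda}\,\E_{\type\sim\goodF}\!\left[\frac1{\goodS_a(\mystate,\type)}\right]+\frac{\eps\,C}{\lambda},
\]
and taking the supremum over $(\mystate,a)$ yields $\bench(\prior,\mixF)\le\tfrac{1-\eps}{1-\lambda}(g+\delta)+\eps C/\lambda$. Choosing $\lambda=\sqrt{\eps}$ makes the right-hand side $(1+\sqrt{\eps})(g+\delta)+\sqrt{\eps}\,C$, which tends to $g+\delta$ as $\eps\to 0$; letting $\delta\to0$ and combining with the lower bound gives \refeq{eq:vanish bad types}.

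I expect the one genuinely delicate point to be the construction of $\policy^{(\eps)}$: one must resist the natural idea of ``gluing'' a near-optimal policy for $\goodF$ on $\supp(\goodF)$ together with a separate policy on $\supp(\badF)$, because the BIC constraints couple the two groups of types (a $\goodF$-type agent may be tempted to misreport as a $\badF$-type, and conversely), so behavior on the bad types cannot be designed in isolation; moreover any admissible policy with a finite benchmark value must keep strictly positive sampling probabilities on the bad types. A \emph{global} convex combination of two full, globally-BIC-and-BIR policies sidesteps both issues using only affinity of $\Val_{\prior}(\cdot,\type)$: the tiny weight $\lambda$ on $\badS$ supplies the missing positive probabilities on the bad types while inflating $\goodS$'s cost on the good types by only the harmless factor $1/(1-\lambda)\to 1$. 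The hypothesis that $\supp(\goodF)$ admits a BIC, $\eta$-BIR state-aware policy is what makes the good-types subproblem non-degenerate, guaranteeing that the near-optimal $\goodS$ above exists (and, if one wishes, can be taken strictly BIR on $\supp(\goodF)$, so that $\policy^{(\eps)}$ retains a positive individual-rationality margin there). The remaining ingredients --- affinity of $\Val_{\prior}(\cdot,\type)$, the two-part expectation bound, and the $\lambda=\sqrt{\eps}$ tuning --- are routine.
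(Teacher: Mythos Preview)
Your proof is correct and takes a genuinely simpler route than the paper's. The paper establishes the upper bound via a two-option \emph{menu} mechanism: it mixes the near-optimal policy for $\goodF$ with the assumed BIC, $\eta$-BIR policy $\policy^{\eta}$ (weight $\sqrt{\eps}$) as Option~1, and mixes the near-optimal policy for $\badF$ with the outside option (weight $\eta\sqrt{\eps}$) as Option~2; the $\eta$-slack is then spent verifying that types in $\supp(\goodF)$ prefer Option~1, so that incentive compatibility holds across the two groups by self-selection. Your single global convex combination $(1-\sqrt{\eps})\goodS+\sqrt{\eps}\,\badS$ sidesteps this entirely, exploiting a structural point the paper's argument does not use: in the definition of $\bench(\prior,\cdot)$ the BIC and BIR constraints are imposed over \emph{all} of $\types$, not merely over the support of the relevant type frequency, so $\goodS$ and $\badS$ are already globally admissible and the affinity of $\Val_{\prior}(\cdot,\type)$ makes both constraint sets convex. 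This buys you a shorter argument and, notably, shows that the $\eta$-BIR hypothesis in the theorem statement is superfluous---your proof goes through under the bare assumption $\bench(\prior,\goodF)<\bench(\prior,\badF)<\infty$. (Your closing paragraph invokes the hypothesis to guarantee existence of $\goodS$, but finiteness of $g=\bench(\prior,\goodF)$ already delivers a near-optimizer; the assumption plays no actual role in your argument.) The paper's menu approach, by contrast, would be the natural tool if one only had policies that were BIC/BIR when restricted to $\supp(\goodF)$ or $\supp(\badF)$ separately, since then gluing requires handling cross-support misreports explicitly.
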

\begin{proof}
The ``$\geq$" direction in \eqref{eq:vanish bad types} holds by taking $\eps\to 0$ in \eqref{eq:bad-influence}. We focus on the ``$\leq$" direction.

Let $\goodS$ and $\badS$ be the optimal state-aware policy given type frequency $\goodF$ and $\badF$ respectively, modified to offer the outside option for any type
    $\type\in \supp(\goodF)\cap \supp(\badF)$.
Formally, both $\goodS$ and $\badS$ return an arm
$a\in\outside$ which maximizes
    $\E_{\mystate\sim\prior}\sbr{\Val_{\mystate}\rbr{ a,\type }}$.
Let $\policy^{\eta}$ be the BIC and $\eta$-BIR policy given type frequency $\goodF$ for some fixed $\eta>0$, modified similarly. Let
\begin{align}
\pmin = \min_{\type\in \supp(\goodF),\;a\in[n],\;\mystate\in\supp(\prior)} \goodS_a(\mystate,\type).
\end{align}
Note that that $\pmin > 0$ since $\bench(\prior,\goodF)$ is finite.

Consider the policy $\policy^*$ that offers two options to the agent and the agent can choose the best option based on his type:
\begin{enumerate}[{Option} 1:]
    \item follow policy $\goodS$ with probability $1-\eps^{1/2}$, and follow policy $\policy^{\eta}$ with probability $\eps^{1/2}$.
    \item follow policy $\badS$ with probability $\eta\eps^{1/2}$, and always recommends the outside option with probability $1-\eta\eps^{1/2}$.
\end{enumerate}
For any type $\type\in\supp(\goodF)$, we have
\begin{align*}
\Val_{\prior}\rbr{(1-\eps^{1/2})\cdot\goodS + \eps^{1/2}\cdot\policy^{\eta}}
&\geq (1-\eps^{1/2})\ValOut + \eps^{1/2}(\ValOut + \eta)\\
&\geq \eta\eps^{1/2}\cdot\Val_{\prior}\rbr{\badS} +(1-\eta\eps^{1/2})\cdot\ValOut.
\end{align*}
Thus any type $\type\in\supp(\goodF)$ will prefer option 1 to option 2.
Moreover, given option $1$, type~$\type$ agent will not have incentives to deviate his report since both $\goodS$ and $\policy^{\eta}$ are BIC for the type space $\supp(\goodF)$,
and deviating the report to types in $\supp(\badF)\backslash\supp(\goodF)$
will lead to this agent receiving value $\ValOut$, which is not beneficial for the agent.

For any type $\type \in \supp(\badF)$,
either he will choose option 2,
in which case in state $\mystate$ each action $a$ is sampled with probability at least
$\eta\eps^{1/2}\cdot \badS_a(\mystate,\type)$,
or he will choose option 1 and deviate the report to another type in $\supp(\badF)$,
in which case each action $a$ is sampled with probability at least $\pmin$ given any state.

Thus, by applying policy $\policy^*$ to the benchmark problem with type frequency $\mixF$, we have
\begin{align*}
\bench(\prior,\mixF)
\leq \frac{1-\eps}{1-\eps^{1/2}}\cdot \bench(\prior,\goodF)
+ \eps\cdot \rbr{\frac{1}{(1-\eps^{1/2})\pmin}
+ \frac{1}{\eta\eps^{1/2}}\cdot\bench(\prior,\badF)}.
\end{align*}
We complete the proof by taking the limit $\eps\to 0$.
\end{proof}

\section{Conclusions and open questions}
\label{sec:conclusions}
We introduce a model for incentivized participation in RCTs. The model combines the statistical objective (which is standard for clinical trials and adversarial in nature) from the economic constraints (which are standard in economic theory and based on Bayesian-stochastic beliefs). The model emphasizes homogeneous agents as the paradigmatic case, and extends to heterogeneous agents. The latter are endowed with public types that affect objective outcomes, and private types that affect subjective preferences over these outcomes.

We consider three model variants: homogeneous agents, heterogeneous agents with no ancillary information, and heterogeneous agents with estimated type frequencies.
For each model variant, we identify an abstract ``comparator benchmark", design a simple mechanism which attains this benchmark, and prove that no mechanism can achieve better statistical performance. Our mechanisms have a desirable property of not adapting to the data collected after the ``warm-up stage"; however, our negative results apply to a much wider class of mechanisms.


Our work sets the stage for further investigation.
%
%
%
First, can we reduce the sufficient duration of the warm-up stage by integrating some of the data-collection therein into the main stage?
Second, while we mitigate the influence of rare-but-difficult agent types for the \emph{main stage} (via estimated type frequencies), can one do it for (the sufficient duration of) the \emph{warm-up stage}, too?
%
Third, can one match our result for estimated type frequencies if they not given in advance but are instead  learned on the fly?
%
%
%
Fourth, can one meaningfully relax our ``statistical model"? That is, constrain the adversarial outcomes while still retaining the need for independent per-round randomization. The goal would be to derive stronger guarantees, positive and/or negative.
%
%
Fifth, and more speculatively: how to think about incentivized RCTs if the agents cannot fully quantify their own risk preferences, and therefore cannot fully report them?



\bibliographystyle{apalike}
\bibliography{bib-abbrv,bib-slivkins,bib-bandits,bib-AGT,bib-medical,bib-ML,ref}


\newpage
\appendix

\renewcommand{\contentsname}{APPENDICES}
\addtocontents{toc}{\protect\setcounter{tocdepth}{2}}
\tableofcontents

\medskip\medskip\medskip

\section{Simple numerical examples}
\label[myAppendix]{apx:examples}

\xhdr{Homogeneous agents.} There are $n=2$ treatments (henceforth, ``arms"), the ``old" and the ``new". As per our model, agents believe each arm gives the same utility distribution for all agents, and the \emph{state} (which we identify with a mapping from arms to utility distributions) is initially drawn from some prior $\prior$ (shared by all agents). Specifically, the old arm gives mean utility 1 with probability $\nicefrac{2}{3}$ and utility 0 with probability $\nicefrac{1}{3}$. The new arm gives utility 2 and -1 with probability $\nicefrac{1}{2}$ each. The prior is independent across arms. Given the mean utilities (and the fact that the realized utilities lie in the $[0,1]$ interval), the noise shape can be arbitrary for this example.%
\footnote{\label{fn:examples-states}Thus, in the formalism of our model, we have four possible states, with mean utility pairs $(1,2)$, $(1,-1)$, $(0,2)$ and $(0,-1)$, and respective prior probabilities $\nicefrac{1}{3}$, $\nicefrac{1}{3}$, $\nicefrac{1}{6}$, $\nicefrac{1}{6}$.}
Agents' alternative to participation is to use the old arm, whose prior-mean utility is $\nicefrac{2}{3}$; so, the ``outside option" is $\ValOut = \nicefrac{2}{3}$. Note that it is ex-ante preferred to the new arm (whose prior-mean utility is only $\nicefrac{1}{2}$), so the agents are incentivized against participation if the arm probabilities are data-independent.



Our participation-incentivizing RCT design from \cref{sec:single} works out as follows. Leveraging \cref{lm:greedy-single}, consider $\eps^*$, the largest $\eps$ such that $\greedy{\eps}{\truestate}$ policy is BIR. We claim $\eps^* = \nicefrac{8}{9}$. Indeed, note that $\eps^*$ is
is the unique $\eps$ such that the Bayesian-expected utility of $\greedy{\eps}{\truestate}$, call it $U(\eps)$, equals the outside option $\ValOut$. And one can check that $U(\nicefrac{8}{9}) = \nicefrac{2}{3} = \ValOut$.
(One way to do that is to sum over the four possible states, as per \cref{fn:examples-states}, and for each state calculate the expected utility of the policy; we omit the easy details.)

Thus, we obtain the benchmark value:
\[ \bench(\prior)= n/\eps^* = \nicefrac{9}{4}.\]


Now let us calculate $N_{\prior}$, the sufficient number of samples for the warm-up stage, as per \refeq{eq:prior-gap}. The expression in \refeq{eq:prior-gap} has an appealing ``interpretability", characterizing $N_{\prior}$ in terms of the prior gap  $\gap(\prior)$ and the benchmark $\bench(\prior)$. However, as stated in the Introduction, we suspect one can improve over \refeq{eq:prior-gap} with a more elaborate analysis.

The calculation is as follows. The prior-expected utility of the best arm is
    $\Val_{\prior}(\bestA[])
    = \nicefrac{4}{3}$
(the calculation is similar to that for $U(\eps^*)$.)
So, the ``prior gap'' (see \refeq{eq:prior-gap}) is $\gap(\prior)=\nicefrac{4}{3}-\nicefrac{5}{12}=\nicefrac{11}{12}$.
Therefore, according to \cref{eq:homo-N}, we have $\alpha = \nicefrac{22}{27}$ and \[N_P<144.\]
In other words, a warm-up stage with $144$ independent samples of each arm suffices to guarantee incentive-compatibility for the main stage. Recall that these samples may also be obtained exogenously, \eg via paid volunteers.




\xhdr{Heterogeneous agents.}
The worst-case benchmark coincides with the benchmark value for the worst type, by \cref{eq:worst-bench-alt}. The frequency-based benchmark, $\bench(\prior,\typeFreq)$ from \cref{app:multi}, can provide a significant improvement.
We illustrate this point with a simple example below.


There are $n=2$ treatments (arms: the "old" and the "new"), one public type, and two private types. There are $3$ possible outcomes: ``full recovery", ``side effect", and ``no effect". Agents' alternative to participation is to choose the old arm.

Since we have a unique public type, all agents believe that the outcome distribution for a given arm is the same for all agents, like in the homogeneous case. Further, they believe the state --- mapping from arms to outcome distributions -- is initially drawn from some prior $\prior$, shared by all agents.
The prior is as follows. For each arm, there are two possible outcome distributions. The risky distribution has full recovery with probability $\nicefrac{1}{2}$, side effect with probability $\nicefrac{1}{3}$, while ineffective with probability $\nicefrac{1}{6}$. The conservative distribution has full recovery with probability $\nicefrac{1}{3}$, side effect with probability $\nicefrac{1}{6}$, while ineffective with probability $\nicefrac{1}{2}$. There are two possible states:
the old arm leads to risky distribution and the new arm leads to conservative distribution (call it state $\mystate_0$), and vice versa (state $\mystate_1$). The prior $\prior$ assigns probability $\nicefrac{2}{3}$ to state $\mystate_0$, and the remaining probability to state $\mystate_1$. This completes the description of agents' beliefs.


The two private types are $\type_0$ and $\type_1$, appearing with equal frequency. The associated subjective utilities are as follows. Type $\type_0$ has utility 1 for full recovery and utility 0 otherwise. Type $\type_1$ has utility 1 for full recovery, utility 0 for no effect, and utility -10 for side effects.

If the two types were considered separately, the benchmark values are $3$ for type $\type_0$ and $2$ for type $\type_1$
(with minimum sampling probabilities, resp., $\nicefrac{1}{3}$ and $\nicefrac{1}{2}$). The computation is similar to that for the homogeneous example, we omit the details.

Let $\policy_i$ be the benchmark-optimizing BIR policy for type $\type_i$, $i\in \{0,1\}$. Let $\policy$ be the joint policy that maps type $\type_i$ to the resp. per-type policy $\policy_i$. This policy is BIR since $\policy_0$ and $\policy_1$ are BIR. To see that policy $\policy$ is BIC, note that for each state, the two types have different best arms, and so misreporting the type would only decrease the probability of choosing the optimal arm.
Therefore, $\policy$ is a feasible policy for the definition \eqref{eq:bench-multi-generic} of $\bench(\prior,\typeFreq)$, and so the benchmark value is at most what one obtains by plugging in this policy, namely
    $\bench(\prior,\typeFreq)\leq \nicefrac{5}{2}$.
\footnote{The $\nicefrac{5}{2}$ value is obtained by explicitly computing the per-type policies $\policy_0$, $\policy_1$ and then plugging $\policy$ into \eqref{eq:bench-multi-generic}.}
Thus, we obtain a significant improvement over the worst-case benchmark $\worstbench=3$.
\section{Proof of Claim \ref{cl:hetero-UBpf-errorProb}: a claim on MLE error probability}
\label[myAppendix]{apx:MLE}
Consider the log-likelihood ratio (LLR) between two states
    $\mystate,\mystate'\in\support(\prior)$.
Specifically, let
    $\mystate_{(t)} = \mystate\rbr{a_t,\pubtype^{(t)}}$
be the outcome distribution for state $\mystate$ in a given round $t$, and let
    $\mystate_{(t,\outcome)}$
be the probability of outcome $\outcome$ according to this distribution. The LLR is defined as
\begin{align*}
\llr_t\rbr{\mystate,\mystate'}
    &:= \log\rbr{\mystate_{(t,\,\outcome)}/\mystate'_{(t,\,\outcome)}},
    &\outcome = \outcome_t.\\
\llr_S\rbr{\mystate,\mystate'}
    &:= \textstyle \sum_{t\in S}
        \llr_t\rbr{\mystate,\mystate'}
    &\forall S\subset[T].
\end{align*}
The MLE estimator is correct, $\estimatestate = \truestate$, as long as
\begin{align}\label{eq:hetero-UBpf-relevance}
\llr_{[T_0]}\rbr{\truestate,\mystate'}>0
\quad\text{for any other state $\mystate'\in \support(\prior)\setminus\cbr{\truestate}$}.
\end{align}
In the rest of the proof, we show that
\begin{align}\label{eq:hetero-UBpf-claim}
\probE\sbr{\text{\refeq{eq:hetero-UBpf-relevance} holds}} \geq 1-\eta/4.
\end{align}

Fix two distinct states
    $\mystate,\mystate'\in\support(\prior)$.
For each round $t$, define
\begin{align}
X_t := \llr_t\rbr{\mystate,\mystate'} - \KL\rbr{ \mystate_{(t)},\; \mystate'_{(t)} }.
\end{align}
Focus on the subset of the warm-up stage when the outcome distributions $\mystate_{(t)}$,  $\mystate'_{(t)}$ are distinct:
\begin{align}\label{eq:hetero-UBpf-S}
    S := \cbr{ t\in [T_0]:\; \mystate_{(t)} \neq  \mystate'_{(t)}}.
\end{align}
We apply concentration to $X_S := \sum_{t\in S} X_t$.
We are interested in the low-probability deviation
\begin{align}\label{eq:hetero-UBpf-concentration-target}
X_S \leq - |S|\cdot \KL_{\min}(\mystate,\mystate')
\end{align}
when $\truestate = \mystate$ and $S$ is large enough. Specifically, we prove the following claim.

\begin{claim}\label[claim]{cl:hetero-UBpf-concentration}
Fix states
    $\mystate,\mystate'\in\support(\prior)$
and an arbitrary $s_0\in [T_0]$. Then
\begin{align}\label{eq:hetero-UBpf-concentration}
\probE\sbr{\text{$|S|\geq s_0$ implies \eqref{eq:hetero-UBpf-concentration-target}}
    \mid \truestate = \mystate}
\leq \tfrac{1}{2\lr_{\min}(\prior)}\cdot e^{-2(s_0-1) \cdot \lr_{\min}(\prior)}.
\end{align}
\end{claim}
Let us defer the proof of \cref{cl:hetero-UBpf-concentration}
till later, and use it to complete the proof of \cref{cl:hetero-UBpf-errorProb}.
To emphasize the dependence on the two states, write
    $S = S\rbr{\mystate,\mystate'}$
and denote the event in \eqref{eq:hetero-UBpf-concentration-target} with
    $\mE\rbr{\mystate,\mystate'}$.
Taking the union bound in \eqref{eq:hetero-UBpf-concentration} over all states
    $\mystate'$
and setting $s_0 = N_\prior(\eta)$, and taking the Bayesian expansion over events
    $\cbr{\truestate=\mystate}$, $\mystate\in\support(\prior)$,
we derive
\begin{align}\label{eq:hetero-UBpf-union}
\probE\sbr{
    \text{$|S\rbr{\truestate,\mystate'}|\geq N_\prior(\eta)$ implies $\mE\rbr{\truestate,\mystate'}$}
        \quad\forall \mystate'\in \support(\prior)\setminus\cbr{\truestate}}
\leq \frac{\eta}{8}.
\end{align}

By assumption, with probability at least $1-\eta/8$, the mechanism collects at least $N_\prior(\eta)$ samples for each (action, public type) pair. Denote this event by $\mE_0$.  For any $\mystate'\in \support(\prior)\setminus\cbr{\truestate}$, there exists at least one (action, public type) pair $(a,\pubtype)$ such that
    $\truestate(a,\pubtype) \neq \mystate'(a,\pubtype)$.
Therefore, under event $\mE_0$ we have
    $|S\rbr{\truestate,\mystate'}|\geq N_\prior(\eta)$.
It follows that with probability at least $1-\eta/4$,
event $\mE\rbr{\truestate,\mystate'}$ holds for all states $\mystate'\in \support(\prior)\setminus\cbr{\truestate}$.
This implies \refeq{eq:hetero-UBpf-claim}, and therefore
completes the proof of \cref{cl:hetero-UBpf-errorProb}.


\begin{proofof}[\cref{cl:hetero-UBpf-concentration}]
Consider the random subset $S$ of the warm-up stage defined by \eqref{eq:hetero-UBpf-S}. Let $t(j)$
be the $j$-th element of $S$, for all $j\in \sbr{|S|}$. Define a new sequence of random variables:
    $\hat{X}_{0}=0$ and
\begin{align*}
\hat{X}_j := \ind{j\leq |S|}\cdot X_{t(j)},\qquad \forall j\in[T_0].
\end{align*}


Fix $\tau\in [T_0]$. Since
    $\agentE\sbr{\hat{X}_{j} \mid \hat{X}_{j-1}} = 0$ for all $j\in [T_0]$,
the sum
    $\hat{X}_{[\tau]} := \sum_{j\leq \tau}\hat{X}_{j}$
is a martingale (in the probability space induced by the agents' beliefs). This martingale's increments are bounded as $ \hat{X}_{j}\leq \ll_{\max}(\mystate,\mystate')$.
Applying the standard Azuma-Hoeffding inequality with deviation term $- \tau \cdot \KL_{\min}(\mystate,\mystate')$,
we obtain
\begin{align*}
\probE\sbr{\hat{X}_{[\tau]} \leq - \tau \cdot \KL_{\min}(\mystate,\mystate')\mid \truestate = \mystate}
    &\leq \exp\rbr{-\frac{2\tau
        \cdot \KL^2_{\min}(\mystate,\mystate')}{\ll^2_{\max}(\mystate,\mystate')}} \\
    &\leq \exp\rbr{-2\tau \cdot \lr_{\min}(\prior)}.
\end{align*}
Taking a union bound over all $\tau \in [s_0, T_0]$,
we have
\begin{align}
&\probE\sbr{\hat{X}_{[\tau]} \leq - \tau \cdot \KL_{\min}(\mystate,\mystate'),\;
\forall\tau \in [s_0, T_0]
\mid \truestate = \mystate} \label{eq:prob uniform bound states} \\
&\qquad\leq
    \sum_{\tau \in [s_0, T_0]} e^{-2\tau \cdot \lr_{\min}(\prior)}
\leq \frac{1}{2\lr_{\min}(\prior)}
        \cdot e^{-2(s_0-1)\cdot \lr_{\min}(\prior)}. \nonumber
\end{align}
We obtain \eqref{eq:hetero-UBpf-concentration-target} by taking $\tau = |S|$ in \eqref{eq:prob uniform bound states}.
\end{proofof}

\section{Lower bounds for heterogeneous agents}
\label[myAppendix]{apx:LB}

In this appendix, we prove the two lower bounds for heterogenous agents: \cref{thm:worst-LB} (for worst-case type frequencies) and \cref{thm:worst-LB} (for estimated type frequencies). Both theorems are proved via the techniques from the homogeneous case (\cref{thm:LB-single}).

We start with some common notation. Fix some mechanism $\mech$ which is non-data-adaptive over the main stage, and an associated warm-up stage duration $T_0$. Fix some subset $S$ of rounds in the main stage 
such that $|S| \geq c\cdot T$ for some absolute constant $c>0$. Given some state $\mystate$, let $\adv_{\mystate}$ be the adversary restricted to the first $t_0 = \min(S)-1$ rounds that samples outcome $\outcome_{a,t}$ from distribution $\mystate(a)$,
independently for each arm $a$ and each round $t\leq t_0$.
Note that $\adv_{\mystate}$ and \mech jointly determine the sampling distributions $\sampleD_t$ for all rounds $t\in S$. Let $N_{\mystate}(a,\type)$ be the number of times arm $a$ is selected in the rounds $t\in S$ for type~$\type$
under adversary $\adv_{\mystate}$.

\xhdr{Proof of \cref{thm:worst-LB}.}
There is a state $\mystate_0\in\supp(\prior)$,
type $\type\in\types$
and action $a_0\in[n]$
such that
\begin{align*}
\frac{1}{\frac{1}{|S|}\cdot\E\sbr{N_{\mystate_0}(a_0,\type)}} \geq \bench(\prior, \types).
\end{align*}
The adversary $\adv$ is constructed as follows. For the first $t_0$ rounds, types are drawn from the prior and we use the adversary $\adv_{\mystate_0}$. For rounds $t>t_0$, only type $\type$ arrives, and we use adversary $\adv^\dag$ from \eqref{eq:IPS-LB-body}, determined by the expected sampling probabilities $\rbr{\E[\sampleD_t]:\, t\in S}$ (see \cref{lm:prelims-ips}).
%
%
This reduces to the single type estimation problem,
and by applying the same argument as in the proof of \cref{thm:LB-single}, we have that
\begin{align*}
\err\rbr{\mech\mid\adv}
\geq
    \Omega\rbr{\frac{\bench(\prior,\types)}{T}}.
\end{align*}

\xhdr{Proof of \cref{thm:worst-LB est}.}
There exists a state $\mystate_0\in\supp(\prior)$
and action $a_0\in[n]$
such that
\begin{align*}
\sum_{\text{types }\type\in\types}
\frac{\typeFreq(\type)}{\frac{1}{|S|\cdot \typeFreq(\type)}\cdot\E\sbr{N_{\mystate_0}(a_0,\type)}} \geq \bench(\prior, \typeFreq).
\end{align*}
The adversary $\adv$ is constructed like in the homogeneous case: $\adv_{\mystate_0}$ for the first $t_0$ rounds, and $\adv^\dag$ from \eqref{eq:IPS-LB} afterwards. 
By the argument from the proof of \cref{thm:LB-single}, we have that
\begin{align*}
\err\rbr{\mech\mid\adv}
\geq
    \Omega\rbr{\frac{1}{|S|^2}\;
    \sum_{\text{types }\type\in\types}\rbr{\sqrt{|S|}\cdot |\loS^{\type}| + \frac{|\hiS^{\type}|^2}{\E\sbr{N_{\mystate_0}(a_0,\type)}}}}.
\end{align*}
Here, letting $q_t := \E\sbr{\sampleD_t(a_0)}$ and $\type\in\types$, we define
\begin{align*}
\loS^{\type} &= \cbr{t\in S:\; \type_t=\type \text{ and } q_t\leq 1/\sqrt{|S|}},\\
\hiS^{\type} &= \cbr{t\in S:\; \type_t=\type \text{ and } q_t> 1/\sqrt{|S|}}.
\end{align*}

If there exists some type $\type\in\types$ such that
$|\loS^{\type}|\geq \bench(\prior, \typeFreq)\cdot \sqrt{|S|}$,
then the desired guarantee \eqref{eq:thm:LB-multi} holds immediately.
Otherwise, since $|S|\geq \Omega\rbr{\rbr{\tfrac{\bench(\prior)}{\min_{\type}\typeFreq(\type)}}^2}$,
we have $|\hiS^{\type}| \geq \frac{1}{2} |S|\cdot \typeFreq(\type)$
for all type~$\type$
and \eqref{eq:thm:LB-multi} holds as well.

\section{Statistical lower bound: Proof of \refeq{eq:IPS-LB-body}}
\label[myAppendix]{app:stats-LB}

We prove the lower bound from \refeq{eq:IPS-LB-body}. Let's provide a standalone formulation for this result. Formally, we interpret $\estF$ as a mapping from arm $a$ and history $\history_{T+1}$ to $[0,1]$, and call any such mapping an \emph{estimator}.

\begin{lemma}\label[lemma]{lm:prelims-ips}
Fix any subset $S\subset [T]$. Consider a mechanism \mech which draws the tuple of sampling distributions
    $\rbr{\sampleD_t: t\in S}$
from some fixed distribution, and does so before round $\min(S)$. For any estimator $\estF$ in the mechanism there is an adversary \adv such that
\begin{align}\label{eq:IPS-LB}
\err\rbr{\mech\mid\adv}
    \geq 
    \Omega\rbr{\frac{1}{T^2}\;\max_{a\in[n]}\;\sum_{t\in S} \min\cbr{\frac{1}{\E\sbr{\sampleD_t(a)}}, \sqrt{|S|}}}.
\end{align}
The adversary in \eqref{eq:IPS-LB} only depends on $\estF$ and the expected sampling probabilities $\E\sbr{\sampleD_t}$.
\end{lemma}

\begin{remark}
To compare to the IPS upper bound in \refeq{eq:IPS-UB}, consider the case when $S$ is the entire main stage and the sampling distributions
    $\rbr{\sampleD_t: t\in S}$
are chosen deterministically. We see that IPS is worst-case optimal as long as the sampling probabilities are larger than $1/\sqrt{|S|}$.

\cref{lm:prelims-ips} was known for the special case when $S = [T]$ and each distributions $\sampleD_t$ is chosen as an independent random draw from some fixed distribution over $\Delta_n$ \citep{DR-StatScience14}. The general case, with an arbitrary subset $S$ and a correlated choice of distributions $\rbr{\sampleD_t: t\in S}$ may be a tool of independent interest. However, we did not attempt to optimize the constants.
\end{remark}

\medskip

We prove \cref{lm:prelims-ips} in the remainder of this subsection. We use Pinsker's Lemma to bound the difference in event probabilities.

\begin{lemma}[\citealp{pinsker1964information}]\label[lemma]{lem:pinsker}
If $P$ and $Q$ are two probability distributions on a measurable space $(X, \Sigma)$, then for any measurable event $A \in \Sigma$, it holds that
\[
\left| P(A) - Q(A) \right| \leq \sqrt{\tfrac{1}{2}\; \mathrm{KL}(P \| Q)},
\]
where
$\mathrm{KL}(P \| Q) = \int_X \ln \rbr{\mathrm d P/\mathrm d Q} \mathrm d P$
is the Kullback--Leibler divergence.
\end{lemma}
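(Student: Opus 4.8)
The plan is to reduce the general statement to a one-dimensional inequality between two Bernoulli distributions and then verify that inequality by elementary calculus. For the reduction, I would coarsen the measurable space to the two-point partition $\{A,\,X\setminus A\}$: then $P$ induces the Bernoulli distribution $\bigl(P(A),\,1-P(A)\bigr)$ and $Q$ induces $\bigl(Q(A),\,1-Q(A)\bigr)$, while the quantity $|P(A)-Q(A)|$ is preserved. The data-processing inequality for KL divergence — which in this case is just the log-sum inequality applied separately over $A$ and over $X\setminus A$ (equivalently, Jensen's inequality for $t\mapsto t\ln t$ on each piece) — gives
\[
\KL(P\|Q)\;\geq\; P(A)\ln\frac{P(A)}{Q(A)}+\bigl(1-P(A)\bigr)\ln\frac{1-P(A)}{1-Q(A)}\;=:\;\mathrm{kl}\bigl(P(A)\,\|\,Q(A)\bigr).
\]
Hence it suffices to prove the scalar bound $2(p-q)^2\leq \mathrm{kl}(p\|q)$ for all $p,q\in[0,1]$.

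For the scalar bound I would fix $p$ and set $g(q):=\mathrm{kl}(p\|q)-2(p-q)^2$, observing that $g(p)=0$. A short computation gives
\[
g'(q)\;=\;\frac{q-p}{q(1-q)}-4(q-p)\;=\;(q-p)\Bigl(\frac{1}{q(1-q)}-4\Bigr),
\]
and since $q(1-q)\leq \tfrac14$ on $(0,1)$ the bracketed factor is nonnegative, so $g'(q)$ has the same sign as $q-p$. Therefore $g$ is nonincreasing on $(0,p)$ and nondecreasing on $(p,1)$, so $q=p$ is a global minimum of $g$ and $g(q)\geq g(p)=0$ throughout. Combining $|p-q|=|P(A)-Q(A)|$ with the two displays yields $|P(A)-Q(A)|\leq\sqrt{\tfrac12\,\mathrm{kl}(P(A)\|Q(A))}\leq\sqrt{\tfrac12\,\KL(P\|Q)}$, as claimed.

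The only genuine subtlety is keeping the direction of each inequality straight: the coarsening step must give a \emph{lower} bound on $\KL(P\|Q)$ in terms of the binary KL divergence, and the sign analysis of $g'$ must identify $q=p$ as a \emph{minimum} rather than a maximum, which is precisely where the elementary estimate $q(1-q)\leq\tfrac14$ enters. The boundary cases are harmless and need no limiting argument: if $p$ or $q$ lies in $\{0,1\}$ then $\mathrm{kl}(p\|q)$ is either $0$ (when $p=q$) or $+\infty$, and the degenerate cases $P(A)\in\{0,1\}$ make the target inequality immediate.
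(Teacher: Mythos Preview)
Your proof is correct and follows the standard route to Pinsker's inequality: coarsen to the two-cell partition $\{A,X\setminus A\}$ using the data-processing inequality (equivalently, the log-sum inequality), then verify the one-dimensional bound $\mathrm{kl}(p\|q)\geq 2(p-q)^2$ by the derivative computation you gave. The sign analysis is right, and the boundary cases are handled.

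As for comparison with the paper: there is nothing to compare. The paper does not prove this lemma at all; it is stated with a citation to \citet{pinsker1964information} and then invoked as a black box in the proof of \cref{lm:prelims-ips}. So your write-up supplies a self-contained argument where the paper simply quotes a classical result.
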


We observe that the bound is monotone in $\sampleD_t(a)$ for any arm $a$ and any time $t\in S$.
This holds since the principal can always simulate the estimator with lower sampling probabilities by randomly ignore the observations.
Thus it is sufficient to prove that if $\E\sbr{\sampleD_t(a)}>1/\sqrt{|S|}$ for all $t\in S$ and all $a\in [n]$, then
there is an adversary \adv such that
\begin{align*}
\err\rbr{\mech\mid\adv}
    \geq \Omega\rbr{\tfrac{1}{T^2}\;\max_{a\in[n]}\;\sum_{t\in S} \frac{1}{\E\sbr{\sampleD_t(a)}}}
        \qquad\text{for any estimator $\estFreq$}.
\end{align*}

We divide the analysis into two cases.

\xhdr{Case 1:}
There exists an arm $a^*\in[n]$ such that
$\sum_{t\in S} \tfrac{1}{\E\sbr{\sampleD_t(a^*)}} \geq 710T$.
Consider the binary outcome space $\{\outcome_0,\outcome_1\}$
and the estimator function $f(\outcome_t) = \ind{\outcome_t = \outcome_0}$.
Consider two stochastic adversaries $\adv$ and $\adv'$.
For arm $a^*$, at each time $t\in S$,
outcome $\outcome_0$ is sampled with probability $\tfrac{1}{2}$ in $\adv$,
and is sampled with probability $\tfrac{1}{2} + \delta_t$ in $\adv'$
where $\delta_t\in[0,\tfrac{1}{2}]$ to be specified later.
Moreover, both $\adv$ and $\adv'$ have the same arbitrary sequence of outcomes for time periods not in~$S$.
It is easy to verify that
\begin{align*}
\E\sbr{\advFreq(a^*)} - \E\sbr{f_{\adv'}^{S}(a^*)}
= \frac{1}{T}\sum_{t\in S} \delta_t.
\end{align*}
Moreover, since at any time $t$, arm $a^*$ is chosen with probability $\sampleD_t(a^*)$,
the KL-divergence between $\adv$ and $\adv'$ is
\begin{align*}
\mathrm{KL}(\adv \| \adv')
&= \E\sbr{\sum_{t\in S} \sampleD_t(a^*) \mathrm{KL}(\adv_t \| \adv'_t)} \\
&= \E\sbr{\sum_{t\in S} \sampleD_t(a^*) \rbr{\frac{1}{2}+\delta_t} \log (1+2\delta_t) + \rbr{\frac{1}{2}-\delta_t} \log (1-2\delta_t)} \\
&\leq 4\sum_{t\in S} \E\sbr{\sampleD_t(a^*)} \delta_t^2.
\end{align*}
Letting $P$ be the probability that the absolute difference between the estimator to $\E\sbr{\advFreq(a^*)}$
is at most $\tfrac{1}{2T}\sum_{t\in S} \delta_t$ given adversary $\adv$
and $P'$ be the same probability given adversary $\adv'$.
By \cref{lem:pinsker}, we have that
\begin{align}\label{eq:pf-ipsLB-diff}
|P-P'| \leq \sqrt{\tfrac{1}{2}\;\mathrm{KL}(\adv \| \adv') }
\leq \textstyle \sqrt{2\;\sum_{t\in S}\; \E\sbr{\sampleD_t(a^*)}\cdot \delta_t^2}.
\end{align}
Let us specify $\delta_t$'s to ensure that $|P-P'| \leq \tfrac{1}{2}$:
\begin{align}
 \delta_t = \rbr{\E\sbr{\sampleD_t(a^*)}\cdot\sqrt{8\sum_{t\in S} 1/\E\sbr{\sampleD_t(a^*)}}}^{-1}.
\end{align}

Note that
    $\E\sbr{\sampleD_t(a^*)} \geq 1/\sqrt{|S|}$
(since we've restricted attention to the case when all sampling probabilities are at least $1/\sqrt{|S|}$).
Moreover, $\E\sbr{\sampleD_t(a^*)}\leq 1$ implies that
\[ \textstyle \sum_{t\in S}\, 1/\E\sbr{\sampleD_t(a^*)} \geq |S|\]
and therefore
\begin{align*}
\delta_t \leq \frac{1}{\E\sbr{\sampleD_t(a^*)}\cdot\sqrt{8|S|}}
\leq \frac{1}{2\sqrt{2}}
\end{align*}
which implies that the choice of $\delta_t$ is feasible.

Thus, there exists an adversary, say $\adv$,
such that with probability at least $\frac{1}{4}$,
the absolute difference between the estimator to $\E\sbr{\advFreq(a^*)}$ is at least $\tfrac{1}{2|S|}\sum_{t\in S} \delta_t$.
Moreover, by Hoeffding's inequality, we have
\begin{align*}
\Pr\sbr{|\advFreq(a^*) - \E\rbr{\advFreq(a^*)}| \geq \frac{1}{4T}\sum_{t\in S} \delta_t }
\leq 2\exp\rbr{-\frac{2\rbr{\frac{1}{4}\sum_{t\in S} \delta_t}^2}{T}} \leq \frac{1}{8}.
\end{align*}
where the last inequality holds since $\sum_{t\in S} \tfrac{1}{\E\sbr{\sampleD_t(a^*)}} \geq 710T$.
By union bound, with probability at least $\frac{1}{8}$,
the distance between the estimator and the truth is at least $\frac{1}{4T}\sum_{t\in S} \delta_t$
and hence the expected mean square error is
\begin{align*}
&\E\sbr{\err\rbr{\mech\mid\adv}}
\geq \frac{1}{8} \rbr{\frac{1}{4T}\sum_{t\in S} \delta_t}^2
= \Omega\rbr{\frac{1}{T^2}\;\max_{a\in[n]}\;
\sum_{t\in S} \frac{1}{\E\sbr{\sampleD_t(a)}}} .
\end{align*}

\xhdr{Case 2:}
$\sum_{t\in S} \tfrac{1}{\E\sbr{\sampleD_t(a)}} \leq 710T$
for all arms $a\in[n]$.
Consider again the binary outcome space $\{\outcome_0,\outcome_1\}$
and the estimator function $f(\outcome_t) = \ind{\outcome_t = \outcome_0}$.
Consider the stochastic adversaries $\adv$
such that for each arm $a$, at each time $t$,
outcome $\outcome_0$ is sampled with probability $\tfrac{1}{2}$ in $\adv$.
Let $S_{a}$ be the number of time periods such that arm $a$ is not chosen.
Note that given realized sequence of observations,
there exists an arm $a\in [n]$ such that
$|S_{a}|\leq \frac{T}{2}$.
Moreover, since the adversary is independent across different time periods,
the observations is uninformative about the realization of the unobserved outcomes.
Thus, for any estimator in the mechanism,
the estimation error for arm $a^*$
is at least the variance of the outcomes in unobserved time periods and
\begin{align*}
\E\sbr{\err\rbr{\mech\mid\adv}}
\geq \E\sbr{\max_{a\in[n]}\;\var\sbr{\advFreq[S_a](a)}}
\geq \frac{1}{8T}
= \Omega\rbr{\frac{1}{T^2}\;\max_{a\in[n]}\;
\sum_{t\in S} \frac{1}{\E\sbr{\sampleD_t(a)}}}.
\end{align*}
Combining the observations, there must exist a deterministic sequence of adversary such that the mean square error conditional on the adversary satisfies stated lower bound.

\section{Upper bound for IPS: proof of \refeq{eq:IPS-UB}}
\label[myAppendix]{app:IPS-UB}

We note that it is not essential to focus on a particular subset $S$ of rounds such as the main stage; instead, the result holds for an arbitrary subset $S\subset [T]$.

Fixing any oblivious adversary, since the IPS estimator is unbiased, the mean square error of the IPS estimator equals it variance. Moreover, in the IPS estimator,
the random variable $\frac{\ind{a_t = a} \cdot f(\outcome_{a,t})}{\sampleD_t(a)}$
is independent across different time periods $t$.
Since the variance for the sum of independent random variables equals the sum of variance,
we have
\begin{align*}
&\err\rbr{\mech\mid\adv}\\
=& \max_{\outcome\in\outcomes,\,a\in[n]}\frac{1}{|S|^2}\sum_{t\in S} \var\sbr{\frac{\ind{a_t = a} \cdot f(\outcome_{a,t})}{\sampleD_t(a)}}\\
=& \max_{\outcome\in\outcomes,\,a\in[n]}\frac{1}{|S|^2}\sum_{t\in S} \rbr{\frac{1}{\sampleD_t(a)}-1} \cdot f(\outcome_{a,t})\\
\leq& \max_{a\in[n]}\frac{1}{|S|^2}\sum_{t\in S} \frac{1}{\sampleD_t(a)}.
\end{align*}

\end{document}